\numberwithin{equation}{section}
\newcommand{\dd}{\mathrm{d}}
\newtheorem{theorem}{Theorem}[section]
\newtheorem{proposition}[theorem]{Proposition}
\theoremstyle{definition}
\newtheorem{corollary}[theorem]{Corollary}
\theoremstyle{remark}
\newtheorem{remark}[theorem]{Remark}
\begin{document}
\title{Schur expansion of random-matrix reproducing kernels}
\author[]{Leonardo Santilli}
\address[LS]{Departamento de Matem\'{a}tica, Grupo de F\'{i}sica Matem\'{a}tica, Faculdade de Ci\^{e}ncias, Universidade de Lisboa, Campo Grande, Edif\'{i}cio C6, 1749-016 Lisboa, Portugal.}
\email{lsantilli@fc.ul.pt}
\author[]{Miguel Tierz}
\address[MT]{Departamento de Matem\'{a}tica, ISCTE - Instituto Universit\'{a}rio de Lisboa, Avenida das For\c{c}as Armadas, 1649-026 Lisboa, Portugal.}
\email{mtpaz@iscte-iul.pt}
\address[MT]{Departamento de Matem\'{a}tica, Grupo de F\'{i}sica Matem\'{a}tica, Faculdade de Ci\^{e}ncias, Universidade de Lisboa, Campo Grande, Edif\'{i}cio C6, 1749-016 Lisboa, Portugal.}
\email{tierz@fc.ul.pt}

\begin{abstract}
We give expansions of reproducing kernels of the Christoffel--Darboux type in terms of Schur polynomials. 
For this, we use evaluations of averages of characteristic polynomials and Schur polynomials in random matrix ensembles. We explicitly compute new Schur averages, such as the Schur average in a $q$-Laguerre ensemble, and the ensuing expansions of random matrix kernels. In addition to classical and $q$-deformed cases on the real line, we use extensions of Dotsenko--Fateev integrals to obtain expressions for kernels on the complex plane. Moreover, a known interplay between Wronskians of Laguerre polynomials, Painlev\'{e} tau functions and conformal block expansions is discussed in relationship to the Schur expansion obtained. 
\end{abstract}

\maketitle
\tableofcontents

\section{Introduction}

The study of reproducing kernels \cite{Mercer,Arons} is of significance
across a remarkably transversal spectrum of areas and applications.
Reproducing kernels appear in a crucial way in many different areas including
random matrix theory \cite{Forrester}, machine learning \cite{Kernel-ML},
quantization and the study of coherent states \cite{Coherent}, Shannon
sampling theorems, non-parametric density estimation in statistics and probability theory \cite{Parzen,book1}.\par
Typical in random matrix theory are the so-called  Christoffel--Darboux kernels. The most immediate meaning of these kernels is the description of the correlations between two eigenvalues of a random matrix \cite{Forrester}. The reproducing property allows to express higher-order correlation functions in terms of determinants of the two-point kernel, leading to the consideration of a multi-point kernel. In this work we will study the latter, and the method employed lends results for this more general kernel with no additional difficulty. In the simpler setting of the two-point kernel, we will show that the Schur expansion can also be written in terms of Chebyshev polynomials of the second kind.

Christoffel--Darboux kernels are a well-known type of delta sequences and, as
such, had found since quite some time important applications in statistics 
\cite{WB}. These kernels are also used in modern machine learning and
data-analysis contexts \cite{CDML}, including for example data-driven
studies of approximations of the spectrum of the Koopman operator \cite{Koopman}. For these reasons, we will comment at the end how the very recently studied antisymmetric kernels \cite{Klus} are related to the
discussion here (and a complementary discussion in \cite{ST}) and to certain specific models discussed in this work, mainly the Stieltjes--Wigert model.\par
\medskip
The paper is organized as follows. We start with the basic definitions and introduce notation and conventions. Section \ref{sec:GeneralSchurExp} presents the main idea of the work, summarized in Theorem \ref{thm:main}. Then, in Section \ref{sec:ClassicCDK} we apply the Schur expansion to kernels associated to classical matrix ensembles. For the Laguerre ensemble, a relation with Painlev\'{e} V is revisited in Section \ref{sec:CBExp}, where we also compare our expansion with the conformal block expansion of the kernel. 
In Section \ref{sec:qdefCDK} we focus on kernels associated to $q$-ensembles and in Section \ref{sec:CKernels} on kernels on $\mathbb{C}$. Along the way, we compute the average of a Schur polynomial in the $q$-Laguerre ensemble, whose evaluation constitutes a novel result.\par
 
We conclude with an outlook on possible avenues for further research in Section \ref{sec:outlook}. The text is complemented with three appendices: Appendix \ref{app:ComputeSchur} contains the technical details of the computation of Schur averages, whereas Appendices \ref{app:Duduchava} and \ref{app:HKCheby} exemplify how the Schur expansion method transcends the topic of this work and easily finds application in a broad variety of related problems.

\subsection{Definitions and notation}

\subsubsection{Christoffel--Darboux~kernels}

For any given weight function $w(z)$ we will denote $P_{n}$ the corresponding monic orthogonal polynomial of degree $n$ and adopt the usual notation $h_n \equiv \left\Vert P_n \right\Vert^2 = \langle P_{n}, P_{n} \rangle_w $.

We denote by $N$ the rank of the kernel and by $n$ the number of pairs of variables $(x_1, \dots, x_n; y_1 , \dots, y_n)$. Let $K_N ^{(n)}
(x_1, \dots, x_n; y_1 , \dots, y_n)$ be the $2n$-point kernel, and $K_N
(x;y) \equiv K_N ^{(1)} (x; y)$. We adopt the definition 
\begin{equation*}  \label{eq:defKN}
K_N (x;y) = \sum_{j=0} ^{N-1} \frac{P_j (x) P_j (y)}{ h_j} ,
\end{equation*}
as for instance in \cite[Eq.(3.1.9)]{Szego} and \cite[Eq.(2.2)]{Borodin:1998hxr}, and
differing by a factor $\sqrt{ w(x) w(y)}$ from \cite[Eq.(5.6)]{Forrester}.
The Christoffel--Darboux~formula states that 
\begin{equation*}
K_N (x,y) = \frac{1}{h_{N-1}} \frac{ P_{N} (x) P_{N-1} (y) - P_{N-1} (x) P_N
(y)}{ x-y } .
\end{equation*}
The multi-point kernel satisfies 
\begin{equation}  \label{eq:defmultikernel}
K_N ^{(n)} (x_1, \dots, x_n; y_1 , \dots, y_n) = \frac{1}{\Delta_n (x)
\Delta_n (y)} \det_{1 \le i,j \le n} K_N (x_i;y_j) ,
\end{equation}
where 
\begin{equation*}
   \Delta_n (x) \equiv \prod_{1 \le i < j \le n}  (x_i - x_j)
\end{equation*}
is the Vandermonde determinant in $n$ variables (and analogously for $\Delta_n (y) $).\par
We introduce the notation 
\begin{equation*}  \label{eq:defXDualVar}
x^{\vee} \equiv \left( -\frac{1}{x_1}, \dots, - \frac{1}{x_n} \right) ,
\qquad y^{\vee} \equiv \left( -\frac{1}{y_1}, \dots, - \frac{1}{y_n} \right)
,
\end{equation*}
and often combine these parameters into the $2n$-dimensional vector 
\begin{equation}  \label{eq:deft}
t \equiv \left( -\frac{1}{x_1}, \dots, - \frac{1}{x_n} , - \frac{1}{y_1} ,
\dots, - \frac{1}{y_n} \right) .
\end{equation}
Besides, to reduce clutter we define 
\begin{equation}  \label{eq:defhatK}
\widehat{K}_N ^{(n)} (x_1 , \dots, x_n ;y_1 , \dots , y_n) = \frac{
\prod_{j=N-n} ^{N-1} h_j }{\prod_{i=1} ^{n} (x_i y_i)^{N-n} } ~ K_N ^{(n)}
(x_1 ,\dots, x_n;y_1, \dots , y_n) .
\end{equation}\par
The kernel \eqref{eq:defmultikernel} admits an integral representation (see Section \ref{sec:GeneralSchurExp}) in which the number of integration variables is $N-n$. We will denote 
\begin{equation*}
M \equiv N-n
\end{equation*}
the number of variables in a generic ensemble. Moreover, for any given ensemble of $M$ variables with weight function $w(z)$, $\mathcal{Z}_{M}$ denotes the partition function, that is, the $M$-fold integral 
\begin{equation*}
    \mathcal{Z}_{M} \equiv \int \Delta_M (z)^2 \prod_{j=1} ^{M} w (z_j) \dd z_j , 
\end{equation*}
and we will use the shorthand notation 
\begin{equation}  \label{eq:defnormN}
\mathcal{N}_{M} \equiv \frac{1}{M! \mathcal{Z}_{M} }
\end{equation}
for the ubiquitous normalization factor.

\subsubsection{Partitions}

Let $\mathbb{Y}$ be the set of all partitions, 
\begin{equation*}
\mathbb{Y} \equiv \left\{ \lambda = (\lambda_1 , \lambda_2 , \dots) \ \vert
\ \lambda_{j} \ge \lambda_{j+1} \ge 0, \ \forall j \ge 1 \right\}  .
\end{equation*}
The length of a partition $\lambda$ is $\ell (\lambda) = \max \left\{ j \ :
\ \lambda_j >0 \right\}$, and $\vert \lambda \vert = \sum_{j \ge 1}
\lambda_j $ is its size. The transposed partition is denoted $\lambda^{\prime}$.
Besides, for fixed $L,M \in \mathbb{N}$ we define 
\begin{equation}  \label{eq:defBoundPartitions}
\mathbb{Y}_{L,M} \equiv \left\{ \lambda \in \mathbb{Y} \ \vert \ \ell
(\lambda) \le L \text{ and } \lambda_1 \le M \right\} ,
\end{equation}
the set of partitions that are contained in a rectangle of $L$ rows and $M$
columns (not to be confused with the partitions of $L\times M$).

$s_{\lambda} (z_1, \dots, z_M)$ is the Schur polynomial labelled by the
partition $\lambda$ \cite{MacDonald}. For the argument of Schur polynomials, we will often
use shorthand notations $s_{\lambda} (z) \equiv s_{\lambda} (z_1, \dots,
z_M) $ and $s_{\lambda} (1^{M}) \equiv s_{\lambda} (\underbrace{1, \dots, 1}_{\text{$M$ times}})$.

\section{General aspects of the Schur expansion}
\label{sec:GeneralSchurExp}

In this section we set up the method to obtain the Schur expansion of the
kernel $K_N (x;y)$ and its multi-point generalization $K_N ^{(n)} (x_1,
\dots, x_n; y_1 , \dots , y_n)$. Our proof directly gives $\widehat{K}_N ^{(n)}$, as defined in \eqref{eq:defhatK}, from which the kernel is immediately obtained.\par
The starting point is the integral representation \cite[Ch.5]{Forrester}
\begin{align}
\widehat{K}_{N}^{(n)}(x_{1},\dots ,x_{n};y_{1},\dots ,y_{n})& =\frac{%
\mathcal{N}_{M}}{\prod_{i=1}^{n}(x_{i}y_{i})^{M}}\int \Delta
_{M}(z)^{2}\prod_{j=1}^{M}\left[ \prod_{i=1}^{n}\left( x_{i}-z_{j}\right)
\left( y_{i}-z_{j}\right) \right] w(z_{j})\mathrm{d}z_{j}  \notag \\
& =\mathcal{N}_{M}\int \Delta _{M}(z)^{2}\prod_{j=1}^{M}\left[
\prod_{i=1}^{2n}\left( 1+t_{i}z_{j}\right) \right] w(z_{j})\mathrm{d}z_{j},
\label{eq:KNint}
\end{align}%
where in the second line we have used the definition \eqref{eq:deft} of $t$. The overall coefficient $\mathcal{N}_{M}$ has been defined in \eqref{eq:defnormN} and $M=N-n$.\par
The next step is to use the dual Cauchy identity \cite{MacDonald,MacD2}
\begin{equation}
\prod_{i=1}^{2n}\prod_{j=1}^{M}\left( 1+t_{i}z_{j}\right) =\sum_{\lambda \in 
\mathbb{Y}_{2n,M}}s_{\lambda }(t)s_{\lambda ^{\prime }}(z).
\label{eq:dualCauchy}
\end{equation}%
We have used $\ell (\lambda ^{\prime })=\lambda _{1}$, and $\mathbb{Y}%
_{2n,M} $ is the finite set defined in \eqref{eq:defBoundPartitions}.%
\footnote{As argued in \cite[Sec.5]{Santilli:2020snh}, the dual Cauchy formula \eqref{eq:dualCauchy} is an exact polynomial identity, as opposed to the Cauchy identity, that is only algebraic and should be understood in a perturbative sense.} Plugging \eqref{eq:dualCauchy} in \eqref{eq:KNint}
yields the following central result.

\begin{theorem}
\label{thm:main} In the above notation, 
\begin{equation}  \label{eq:kernelexp}
\widehat{K}_N ^{(n)} (x_1, \dots, x_n; y_1, \dots, y_n) =\sum_{ \lambda \in 
\mathbb{Y}_{2n,N-n} } s_{\lambda} (t) \left\langle s_{\lambda^{\prime}}
\right\rangle_{w} .
\end{equation}
\end{theorem}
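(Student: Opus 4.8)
The plan is to obtain the stated identity as a direct consequence of the two facts already assembled in this section: the integral representation \eqref{eq:KNint} and the dual Cauchy identity \eqref{eq:dualCauchy}. First I would take \eqref{eq:KNint} as the starting point and observe that the double product appearing there can be reorganised as $\prod_{j=1}^{M}\prod_{i=1}^{2n}(1+t_i z_j)=\prod_{i=1}^{2n}\prod_{j=1}^{M}(1+t_i z_j)$, which is precisely the left-hand side of \eqref{eq:dualCauchy}. Substituting the right-hand side of \eqref{eq:dualCauchy} then replaces the product by the finite sum $\sum_{\lambda\in\mathbb{Y}_{2n,M}}s_\lambda(t)\,s_{\lambda'}(z)$, in which the variables $t$ (carrying the dependence on $x$ and $y$ through \eqref{eq:deft}) are completely decoupled from the integration variables $z$.

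Next I would interchange this sum with the integral over $z$. Because $\mathbb{Y}_{2n,M}$ is a finite set — the dual Cauchy formula is an exact polynomial identity, not a formal series — this step is unconditionally legitimate and requires no convergence argument. Pulling the $z$-independent factors $s_\lambda(t)$ outside leaves, for each $\lambda$, the integral $\mathcal{N}_M\int\Delta_M(z)^2\,s_{\lambda'}(z)\prod_{j=1}^{M}w(z_j)\,\mathrm{d}z_j$. With $\mathcal{N}_M=1/(M!\,\mathcal{Z}_M)$ as in \eqref{eq:defnormN}, this is exactly the ensemble average $\langle s_{\lambda'}\rangle_w$ of the Schur polynomial $s_{\lambda'}$ in the $M$-variable ensemble with weight $w$. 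Summing over $\lambda$ and recalling $M=N-n$, so that $\mathbb{Y}_{2n,M}=\mathbb{Y}_{2n,N-n}$, yields \eqref{eq:kernelexp}.

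There is no real analytic obstacle here; the content of the theorem lies entirely in the preparatory identities. The only points that demand care are bookkeeping ones: confirming that the normalisation $\mathcal{N}_M$ in \eqref{eq:KNint} matches the normalisation built into the definition of the Schur average (so that no spurious factor of $\mathcal{Z}_M$ or $M!$ survives), and checking that the index set is consistent — namely that the constraint $\lambda_1\le M$ defining $\mathbb{Y}_{2n,M}$ in \eqref{eq:defBoundPartitions} is exactly what guarantees $\ell(\lambda')=\lambda_1\le M$, so that $s_{\lambda'}(z_1,\dots,z_M)$ does not vanish identically and the average is well-defined. Once these are verified, the expansion follows immediately.
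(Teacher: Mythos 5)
Your argument is exactly the paper's: substitute the dual Cauchy identity \eqref{eq:dualCauchy} into the integral representation \eqref{eq:KNint}, exchange the finite sum over $\mathbb{Y}_{2n,M}$ with the integral, and recognise the remaining $z$-integral as $\langle s_{\lambda'}\rangle_w$. The bookkeeping checks you add (normalisation $\mathcal{N}_M$ and the index-set condition $\lambda_1\le M$) are correct and consistent with the text.
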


Therefore, whenever the average of a Schur polynomial in the ensemble
characterized by the weight function $w(z)$ is known, we get an expansion of 
$\widehat{K}_N ^{(n)}$, and thus of $K_N ^{(n)}$, in the Schur basis with
explicitly known coefficients. The symmetry enhancement 
\begin{equation}  \label{eq:symen}
S_{n} \times S_{n} \times \mathbb{Z}_2 \hookrightarrow S_{2n}
\end{equation}
for the parameters $x^{\vee}, y^{\vee}$ is manifest in the expansion.\par
\medskip
According to the Schur-reproducing property \cite{Mironov:2017och,Mironov:2018ekq}, each summand in the Schur expansion will take the schematic form $c_{N-n} (\lambda^{\prime}) s_{\lambda^{\prime}} (1^{N-n}) s_{\lambda} (t) $, for some partition-dependent coefficient $c_{N-n} (\lambda^{\prime})$. This property of the average is robust under various layers of deformations \cite{Mironov:2018ekq,Morozov:2018eiq,Mironov:2020jyo} and is inherited by the corresponding kernel. Concretely, 
\begin{itemize}
    \item In a classical ensemble, $c_{N-n} (\lambda^{\prime})$ will be a rational function of the rows $\lambda_j ^{\prime}$, typically expressed in terms of $\Gamma$-functions.
    \item In a $q$-ensemble, $c_{N-n} (\lambda^{\prime})$ will be a rational function of $q$ with exponential dependence on the rows $\lambda_j ^{\prime}$, possibly times a rational function of the $\lambda_j ^{\prime}$.
\end{itemize}
This general observation will be manifest in the explicit results of the following sections.\par

\begin{remark}
\label{rmk:betaensemble} The dual Cauchy identity \eqref{eq:dualCauchy} yields the expansion in Schur polynomials of the correlation function of characteristic polynomials in any beta-ensemble. However, denoting the beta-parameter by $2 \gamma$, it is more convenient in the $\gamma \ne 1$ setup to use the alternative dual Cauchy identity \cite{MacD2}
\begin{equation}
\label{eq:DCJack}
   \prod_{i=1}^{2n}\prod_{j=1}^{M}\left( 1+t_{i}z_{j}\right) =\sum_{\lambda \in 
\mathbb{Y}_{2n,M}} P^{(\gamma)}_{\lambda }(t) P^{(1/\gamma)}_{\lambda ^{\prime }}(z) ,
\end{equation}
where $P^{(\gamma)} _{\lambda}$ are the Jack polynomials \cite{MacDonald,Forrester}. This aspect is discussed explicitly in Subsection \ref{sec:JUEavg} below.
\end{remark}

It follows directly from Theorem \ref{thm:main} that the 2-point kernel $\widehat{K}_N$ admits an expansion involving Chebyshev polynomials of the
second kind, through the relation 
\begin{subequations}
\begin{align}
\widehat{K}_N (x;y) & = \sum_{\lambda_1 =0}^{N-2} \sum_{\lambda_2 =0}
^{\lambda_1} s_{\lambda} (t_1,t_2) \left\langle s_{\lambda^{\prime}}
\right\rangle_{w}  \notag \\
& = \sum_{\lambda_1 =0}^{N-2} \sum_{\lambda_2 =0} ^{\lambda_1} \left\langle
s_{\lambda^{\prime}} \right\rangle_{w} (t_1 t_2)^{\lambda_2} \sum_{j=0}
^{\lambda_1 - \lambda_2} t_1^{\lambda_1 - \lambda_2 -j} t_2 ^{j}
\label{eq:Chebyl1} \\
& = \sum_{\lambda_1 =0}^{N-2} \sum_{\lambda_2 =0} ^{\lambda_1} \left\langle
s_{\lambda^{\prime}} \right\rangle_{w} (\sqrt{t_1 t_2})^{\vert \lambda \vert
} U_{\lambda_1 - \lambda_2} \left( \frac{t_1 + t_2}{2 \sqrt{t_1 t_2}} \right)
\label{eq:Chebyl2} \\
& = \sum_{\lambda_1 =0}^{N-2} \sum_{\lambda_2 =0} ^{\lambda_1} \left\langle
s_{\lambda^{\prime}} \right\rangle_{w} (xy)^{- \frac{\vert \lambda \vert}{2}
} U_{\lambda_1 - \lambda_2} \left( - \frac{x + y}{2 \sqrt{xy}} \right) .
\label{eq:Chebyl3}
\end{align}
\end{subequations}
Equality \eqref{eq:Chebyl1} follows from straightforward computation of the ratio of $2 \times 2$ determinants 
\begin{equation*}
    s_{\lambda} (t_1,t_2) = \frac{1}{t_1-t_2} \det \left(  \begin{matrix} t_1 ^{\lambda_1 +1} & t_1 ^{\lambda_2} \\  t_2 ^{\lambda_1 +1} & t_2 ^{\lambda_2} \\  \end{matrix} \right) = (t_1 t_2)^{\lambda_2} \frac{ t_1 ^{\lambda_1 - \lambda_2 +1} -  t_2 ^{\lambda_1 - \lambda_2 +1} }{t_1 - t_2 } 
\end{equation*}
and \eqref{eq:Chebyl2} by identification with the Chebyshev polynomial of the second kind (cf. \cite{KadellSC}). Eventually, in \eqref{eq:Chebyl3} we have written the result in the variables $(x;y)$, related to $(t_1,t_2)$ via \eqref{eq:deft}.\par
Note that the $(x;y)$-independent coefficient in the expansion is the same for the Schur
and Chebyshev expansion, since the manipulations only involve the symmetric
polynomials.

\begin{remark}
We emphasize that the 2-point kernel admits an expansion in Chebyshev polynomials, whose orthogonality relations are on the interval $[-1,1]$,
regardless of the domain of the variables $x,y$. Indeed, we did not use the
orthogonality property of these polynomials in the derivation. Notice
however the important extra factor $(xy)^{-\lvert \lambda \rvert /2}$, that is, the dependence is not entirely captured by the Chebyshev polynomial.
\end{remark}
\par
\medskip 
From \eqref{eq:KNint} we may equivalently expand for $(x_1, \dots,
x_n)$ and $(y_1, \dots, y_n)$ separately, applying \eqref{eq:dualCauchy}
twice.

\begin{theorem}\label{thm:double}
In the above notation, 
\begin{equation}  \label{eq:KernelDoubleexp}
\widehat{K}_N ^{(n)} (x_1, \dots, x_n; y_1, \dots, y_n ) = \sum_{ \lambda,
\mu \in \mathbb{Y}_{n,N-n} } s_{\lambda} \left(x^{\vee} \right) s_{\mu}
\left(y^{\vee} \right) \left\langle s_{\lambda^{\prime}} s_{\mu^{\prime}}
\right\rangle_{w} .
\end{equation}
\end{theorem}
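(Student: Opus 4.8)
The plan is to run the argument leading to Theorem \ref{thm:main} again, but to expand the two groups of variables $x^{\vee}$ and $y^{\vee}$ separately rather than merging them into the single $2n$-vector $t$. Starting from the integral representation \eqref{eq:KNint}, the first observation is that, by the definition \eqref{eq:deft} of $t$, the integrand factorizes over the two blocks of parameters:
\begin{equation*}
\prod_{i=1}^{2n}\left(1+t_{i}z_{j}\right)=\prod_{i=1}^{n}\left(1+x^{\vee}_{i}z_{j}\right)\,\prod_{i=1}^{n}\left(1+y^{\vee}_{i}z_{j}\right).
\end{equation*}

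Next, instead of applying the dual Cauchy identity \eqref{eq:dualCauchy} once with $2n$ parameters, I would apply it twice. Expanding $\prod_{j=1}^{M}\prod_{i=1}^{n}(1+x^{\vee}_{i}z_{j})$ produces a sum over $\lambda\in\mathbb{Y}_{n,M}$ of $s_{\lambda}(x^{\vee})\,s_{\lambda^{\prime}}(z)$, and expanding the analogous factor in $y^{\vee}$ produces a sum over $\mu\in\mathbb{Y}_{n,M}$ of $s_{\mu}(y^{\vee})\,s_{\mu^{\prime}}(z)$. The range $\mathbb{Y}_{n,M}$ is automatic: $s_{\lambda}(x^{\vee})$ forces $\ell(\lambda)\le n$ while $s_{\lambda^{\prime}}(z)$ forces $\lambda_{1}=\ell(\lambda^{\prime})\le M$. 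Multiplying the two expansions recasts the whole integrand product as the double sum
\begin{equation*}
\prod_{j=1}^{M}\prod_{i=1}^{2n}\left(1+t_{i}z_{j}\right)=\sum_{\lambda,\mu\in\mathbb{Y}_{n,M}} s_{\lambda}(x^{\vee})\,s_{\mu}(y^{\vee})\,s_{\lambda^{\prime}}(z)\,s_{\mu^{\prime}}(z).
\end{equation*}

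The final step is to insert this into \eqref{eq:KNint}, exchange the finite sums with the integral, and read off the average. Concretely,
\begin{equation*}
\widehat{K}_N^{(n)}=\sum_{\lambda,\mu\in\mathbb{Y}_{n,M}} s_{\lambda}(x^{\vee})\,s_{\mu}(y^{\vee})\;\mathcal{N}_{M}\int\Delta_{M}(z)^{2}\,s_{\lambda^{\prime}}(z)\,s_{\mu^{\prime}}(z)\prod_{j=1}^{M}w(z_{j})\,\mathrm{d}z_{j},
\end{equation*}
and, by the normalization \eqref{eq:defnormN}, the prefactor $\mathcal{N}_{M}$ together with the $z$-integral is exactly the ensemble average $\langle s_{\lambda^{\prime}}s_{\mu^{\prime}}\rangle_{w}$. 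With $M=N-n$ this is \eqref{eq:KernelDoubleexp}. Because $\mathbb{Y}_{n,N-n}$ is finite, the interchange of summation and integration is free of convergence concerns, the identity being polynomial, in line with the footnote after \eqref{eq:dualCauchy}. There is no genuine obstacle here; the only point demanding a little care is that $s_{\lambda^{\prime}}(z)\,s_{\mu^{\prime}}(z)$ is \emph{not} a single Schur polynomial (it decomposes through the Littlewood--Richardson rule), so one must keep it as a product and average the product rather than a single Schur function. As a consistency check, comparing with Theorem \ref{thm:main} equates two expansions of the same $\widehat{K}_N^{(n)}$; the equality encodes the branching of the $2n$-variable dual Cauchy expansion into the two $n$-variable blocks, and in effect re-expresses each $\langle s_{\nu^{\prime}}\rangle_{w}$, $\nu\in\mathbb{Y}_{2n,N-n}$, in terms of the correlated averages $\langle s_{\lambda^{\prime}}s_{\mu^{\prime}}\rangle_{w}$.
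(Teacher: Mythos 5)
Your proposal is correct and follows essentially the same route as the paper, which derives Theorem \ref{thm:double} precisely by splitting the product over $i=1,\dots,2n$ in \eqref{eq:KNint} into the $x^{\vee}$ and $y^{\vee}$ blocks and applying the dual Cauchy identity \eqref{eq:dualCauchy} twice. Your filled-in details (the finite index sets, the interchange of sum and integral, and the caveat that $s_{\lambda^{\prime}}(z)s_{\mu^{\prime}}(z)$ must be averaged as a product) are all accurate.
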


This latter expansion was first obtained by Rosengren \cite[Prop.5]{Rosen}.
The equivalence between \eqref{eq:KernelDoubleexp} and Rosengren's formula
stems from $\left\langle s_{\lambda^{\prime}} s_{\mu^{\prime}}
\right\rangle_{w}$ being equal to the determinant of a Hankel minor, and rearranging the terms in the sum.\par
Thus, the result of Theorem \ref{thm:double} is known, but the derivation here is different, only using the dual Cauchy identity. In addition, the identification with Rosengren's previous result is not completely immediate and it involves the application of Andreief's identity \cite{Andreief}.\par

In the 2-point case, Theorem \ref{thm:double} implies the following
classical result \cite{Collar,Kailath} (see also \cite%
{Berg,SimonCD,Borodin:1998hxr}).

\begin{corollary}
\label{cor:GenHankel} Let $\mathcal{H}_N (w) $ the $N \times N$ Hankel
matrix of moments of the measure $w(z) \mathrm{d} z$, 
\begin{equation*}
\left[ \mathcal{H}_N (w) \right]_{jk} = \int_{\mathbb{R}} z^{j+k} w(z) 
\mathrm{d} z , \qquad j,k \in \left\{  0,1, \dots, N-1 \right\},
\end{equation*}
and denote by $\mathcal{H}^{-1} _N (w)$ its inverse. The kernel $K_N
(x;y)$ is the generating function of $\mathcal{H}^{-1} _N (w)$, 
\begin{equation*}
K_N (x;y) = \sum_{j=0}^{N-1} \sum_{k=0} ^{N-1} x^{j} y^{k} \left[ \mathcal{H}^{-1} (w) \right]_{jk} .
\end{equation*}
\end{corollary}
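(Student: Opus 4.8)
The plan is to prove the identity directly from the defining sum $K_N(x;y)=\sum_{m=0}^{N-1}P_m(x)P_m(y)/h_m$, reading off the coefficient of each monomial $x^jy^k$ and matching it to $[\mathcal H_N^{-1}(w)]_{jk}$; the specialization of Theorem \ref{thm:double} to $n=1$ will then serve as a consistency check. First I would write the monic orthogonal polynomials in the monomial basis, $P_m(z)=\sum_{j=0}^{m}c_{mj}z^j$ with $c_{mm}=1$ (and $c_{mj}=0$ for $j>m$), and collect the coefficient matrix
\[
K_N(x;y)=\sum_{j,k=0}^{N-1}x^jy^k\,\Big(\sum_{m=0}^{N-1}\frac{c_{mj}c_{mk}}{h_m}\Big).
\]
In this form the corollary is equivalent to the purely linear-algebraic statement $[\mathcal H_N^{-1}(w)]_{jk}=\sum_{m}c_{mj}c_{mk}/h_m$.

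The key step is to recognise the right-hand side as an inverse Gram matrix. Introducing the lower-triangular change-of-basis matrix $\widetilde C_{mj}=c_{mj}/\sqrt{h_m}$ from monomials to the orthonormal polynomials $p_m=P_m/\sqrt{h_m}$, the orthonormality relations $\langle p_m,p_k\rangle_w=\delta_{mk}$ read, after inserting $\langle z^j,z^l\rangle_w=[\mathcal H_N(w)]_{jl}$,
\[
\big(\widetilde C\,\mathcal H_N(w)\,\widetilde C^{\top}\big)_{mk}=\delta_{mk},
\]
that is, $\widetilde C\,\mathcal H_N(w)\,\widetilde C^{\top}=\Id$. Since $\widetilde C$ is triangular with nonzero diagonal $1/\sqrt{h_m}$ it is invertible, and inverting this relation yields $\mathcal H_N^{-1}(w)=\widetilde C^{\top}\widetilde C$, whose $(j,k)$ entry is exactly $\sum_m\widetilde C_{mj}\widetilde C_{mk}=\sum_m c_{mj}c_{mk}/h_m$. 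This closes the argument; equivalently, it expresses that $K_N$ is the reproducing kernel of the space of polynomials of degree $<N$ equipped with $\langle\cdot,\cdot\rangle_w$, computed in the monomial basis whose Gram matrix is $\mathcal H_N(w)$. The only hypothesis used is nondegeneracy, $h_m>0$ for $m\le N-1$, implicit in the existence of the $P_m$.

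Finally I would reconcile this with Theorem \ref{thm:double}. Setting $n=1$ there, $\mathbb Y_{1,N-1}$ consists of single rows $(a)$ with $0\le a\le N-1$, single-variable Schur polynomials reduce to monomials $s_{(a)}(-1/x)=(-1/x)^a$, and conjugation sends $(a)$ to the column $(1^a)$, so $s_{(1^a)}=e_a$ is the elementary symmetric polynomial; hence $\widehat K_N(x;y)=\sum_{a,b=0}^{N-1}(-1/x)^a(-1/y)^b\,\langle e_ae_b\rangle_w$. Unwinding the prefactor in \eqref{eq:defhatK} and substituting $j=N-1-a$, $k=N-1-b$ then predicts $[\mathcal H_N^{-1}(w)]_{jk}=(-1)^{j+k}h_{N-1}^{-1}\langle e_{N-1-j}e_{N-1-k}\rangle_w$. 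I expect the main obstacle to lie in verifying this last equality: it requires identifying the $(N-1)$-variable average $\langle e_ae_b\rangle_w$ with a minor of $\mathcal H_N(w)$ (via the bialternant fact $e_a\,\Delta_{N-1}(z)=\det\big[z_i^{\,p}\big]_{p\in\{0,\dots,N-1\}\setminus\{N-1-a\}}$, the Andr\'eief identity, and $\mathcal Z_{N-1}=\prod_{m\le N-2}h_m$) and then matching it to the cofactor expression for $\mathcal H_N^{-1}(w)$ by Cramer's rule. This is precisely the Hankel-minor identification quoted after Theorem \ref{thm:double}, and the direct Gram-matrix computation above bypasses it.
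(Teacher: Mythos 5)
Your argument is correct, but it takes a different route from the paper. The paper's proof is a one-liner: set $n=1$ in Theorem \ref{thm:double}, so that $\widehat{K}_N(x;y)=\sum_{a,b}(-1/x)^a(-1/y)^b\langle e_a e_b\rangle_w$, and then simply \emph{recognise} the coefficient $\langle e_a e_b\rangle_w$ as an entry of $\mathcal{H}_N^{-1}(w)$ up to the factor $h_{N-1}$ --- i.e.\ it invokes the Hankel-minor identification (the step you correctly flag as the real content of that route, requiring the bialternant expansion, Andr\'eief, and Cramer's rule) rather than proving it. Your proof instead goes directly from $K_N(x;y)=\sum_m P_m(x)P_m(y)/h_m$: writing $P_m$ in the monomial basis and observing that orthonormality reads $\widetilde C\,\mathcal{H}_N(w)\,\widetilde C^{\top}=\Id$, so that $\mathcal{H}_N^{-1}(w)=\widetilde C^{\top}\widetilde C$ reproduces the coefficient matrix of $K_N$. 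This is self-contained, elementary, and makes transparent that the statement is just the basis-independence of the reproducing kernel (inverse Gram matrix in the monomial basis); what it does \emph{not} do is exercise the Schur-expansion machinery the corollary is meant to illustrate. Your final paragraph, run in the other direction, actually supplies the precise form of the identification the paper leaves implicit, namely $[\mathcal{H}_N^{-1}(w)]_{jk}=(-1)^{j+k}h_{N-1}^{-1}\langle e_{N-1-j}\,e_{N-1-k}\rangle_w$ evaluated in the $(N-1)$-variable ensemble, including the index reversal $j\mapsto N-1-j$ and the sign that the paper's phrasing glosses over; combined with your direct computation this is a clean derivation of that classical Hankel identity as a by-product. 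Both routes are sound; the paper's is shorter given the quoted classical result, yours is more elementary and fully self-contained.
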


\begin{proof}
Set $n=1$ in Theorem \ref{thm:double}. We recognise in the coefficient $\left\langle s_{(1^j)} s_{(1^k)}
\right\rangle_{w} = \left\langle e_j e_k \right\rangle_w$ the $(j,k)$-entry of the inverse matrix $\mathcal{H}^{-1}
(w)$, up to a multiplicative factor $h_{N-1}$.
\end{proof}

A Toeplitz analogue of Corollary \ref{cor:GenHankel} is discussed in
Appendix \ref{app:Duduchava}. Both results showing that the 2-point kernel is a generating function are well known, and the Schur expansion here simply provides an immediate alternative proof.\par

\section{Kernels corresponding to classical ensembles}
\label{sec:ClassicCDK}

Our master formula \eqref{eq:kernelexp} implies that, to obtain the
coefficient in the Schur expansion of the kernel, we need to evaluate the
average $\langle s_{\lambda ^{\prime }}\rangle _{w}$ in the ensemble with
weight function $w(z)$. In this section we use known results for the
classical ensembles to finalize the computation of the coefficients.

An ensemble is said to be \emph{classical} if the weight function $w(z)$
satisfies Pearson's equation $\frac{\dd \ }{\dd z} \left[   \sigma (z) w (z)\right] =\tau (z) w (z)$ with $\sigma (z)$ and $\tau (z)$ polynomials with $\deg \sigma (z)\leq 2$ and $\deg \tau (z)=1$ \cite{Hild}. The list goes beyond the typical consideration as classical of many references, oftentimes limited to Gaussian, Laguerre and Jacobi ensembles. This is well-known in the orthogonal polynomials literature \cite{KM} and in the study of stationary solutions of stochastic processes \cite{Wong,FS}.

\subsection{Gaussian ensemble}

Consider the Gaussian unitary ensemble (GUE). The evaluation of the
coefficient $\langle s_{\lambda^{\prime}} \rangle_{\text{GUE}} $ relies on
the following result \cite{diFraItz}.

\begin{proposition}[Di Francesco--Itzykson \protect\cite{diFraItz}]
\label{prop:GUESchur}Let $\mu$ be a partition with $\ell (\mu)$ even, and
denote 
\begin{equation}  \label{eq:ljGUE}
l_j = \mu_j + \ell (\mu) -j , \qquad \forall j=1, \dots, \ell (\mu).
\end{equation}
Consider the GUE ensemble of $M$ variables. Then 
\begin{equation}
\langle s_{\mu} \rangle_{\mathrm{GUE}} = (-1)^{\frac{\ell (\mu)
\left( \ell (\mu) -2 \right) }{8}} \frac{ \prod_{\jmath :l_{\jmath}
\mathrm{odd}} l_{\jmath} !! \prod_{\tilde{\jmath}:l_{\tilde\jmath}\mathrm{even}} (l_{\tilde{\jmath}} -1) !! }{\prod_{\jmath:l_{\jmath}
\mathrm{odd}} \prod_{\tilde{\jmath}: l_{\tilde\jmath}\mathrm{even}}
(l_{\jmath} - l_{\tilde{\jmath}} )!! } ~ s_{\mu} (1^{M})
\end{equation}
if $\ell (\mu)$ is even, or 0 otherwise.
\end{proposition}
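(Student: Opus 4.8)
The plan is to reduce the Schur average to a ratio of moment determinants and then exploit the parity of the Gaussian moments. First I would invoke the Heine--Andr\'eief identity: writing the GUE weight as $w(z)=\mathrm{e}^{-z^{2}/2}$ and using $\Delta_M(z)=\det(z_i^{M-j})$ together with the Weyl numerator $s_\mu(z)\,\Delta_M(z)=\det(z_i^{\mu_k+M-k})$, the average collapses to
\begin{equation*}
\langle s_\mu\rangle_{\mathrm{GUE}}=\frac{\det_{1\le i,k\le M}\bigl(m_{2M-i-k+\mu_k}\bigr)}{\det_{1\le i,k\le M}\bigl(m_{2M-i-k}\bigr)},\qquad m_p=\int_{\IR}z^{p}\,\mathrm{e}^{-z^{2}/2}\,\dd z,
\end{equation*}
where the denominator is the standard Gaussian Hankel determinant. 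The crucial input is $m_p=\sqrt{2\pi}\,(p-1)!!$ for $p$ even and $m_p=0$ for $p$ odd, so that the numerator entry indexed by $(i,k)$ survives only when $i$ and the shifted part $l_k=\mu_k+M-k$ share a fixed parity.

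Next I would use the Schur-reproducing property recalled after Theorem \ref{thm:main} to pull the entire $M$-dependence into the prefactor $s_\mu(1^M)$, reducing the computation of the constant to the case $M=\ell(\mu)$, at which point the beta-numbers are exactly those of \eqref{eq:ljGUE}. With $\ell(\mu)=2m$, sorting the columns (equivalently the $l_k$) into their even and odd subsets block-diagonalises the numerator into an even--even block and an odd--odd block, the two off-diagonal blocks being identically zero. This makes the dichotomy transparent: the determinant vanishes unless there are equally many even and odd $l_k$, which in particular forces $\ell(\mu)$ to be even, giving the ``$0$ otherwise'' clause.

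It then remains to evaluate the two surviving blocks. Each is a determinant of double factorials whose entries depend on the indices only through their sum, of the type $(2\alpha+2\beta\pm1)!!$, and these are computed in closed product form by a Cauchy/inverse-factorial determinant evaluation. The even block supplies $\prod_{l_{\tilde\jmath}\,\mathrm{even}}(l_{\tilde\jmath}-1)!!$ and the odd block $\prod_{l_\jmath\,\mathrm{odd}}l_\jmath!!$, each accompanied by a single-parity Vandermonde in the $l_k$; recombining these Vandermonde pieces with the principal-specialisation (Weyl dimension) formula $s_\mu(1^M)=\prod_{i<j}\frac{l_i-l_j}{j-i}$, which carries all pairs including the mixed ones, produces precisely the mixed denominator $\prod(l_\jmath-l_{\tilde\jmath})!!$ and reconstitutes the factor $s_\mu(1^M)$.

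The main obstacle is this last step: carrying out the two double-factorial determinant evaluations and then reconciling every auxiliary factor --- the single-parity Vandermondes, the $j$-shifts from the specialisation, and the powers of $2$ and $\sqrt{2\pi}$ coming from the moments and from the Hankel denominator --- so that everything cancels except the stated ratio of double factorials. Equally delicate is the sign: the permutations used to block-diagonalise and to order the rows, together with the signs internal to the two block evaluations, must combine into the \emph{configuration-independent} factor $(-1)^{\ell(\mu)(\ell(\mu)-2)/8}=(-1)^{m(m-1)/2}$, and verifying that this cancellation is insensitive to the particular interleaving of even and odd $l_k$ is the key bookkeeping check.
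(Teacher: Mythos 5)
A point of reference first: the paper does not prove Proposition \ref{prop:GUESchur} at all --- it is quoted from Di Francesco--Itzykson \cite{diFraItz} --- so your attempt can only be measured against the general template of Appendix \ref{app:SchurClassic}: Andr\'eief reduction to the ratio $\det(\mathfrak{m}_{\mu_j+M-j-1+k})/\det(\mathfrak{m}_{j+k-2})$, followed by row and column manipulations. Your opening step coincides with that template, and your vanishing criterion (the determinant is zero unless the $l_j$ split into equally many even and odd ones) is correct and in fact sharper than the proposition's ``$\ell(\mu)$ even'' clause --- it is the empty-$2$-core condition, which also kills cases such as $\mu=(2,1)$ where $\ell(\mu)$ is even but $\lvert\mu\rvert$ is odd.

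There are, however, two genuine gaps. First, the reduction to $M=\ell(\mu)$: you invoke the Schur-reproducing property to ``pull the entire $M$-dependence into $s_\mu(1^M)$'', but that property, as recalled after Theorem \ref{thm:main}, only gives $\langle s_\mu\rangle = c_M(\mu)\,s_\mu(1^M)$ with a coefficient that may still depend on $M$ --- and it does for the LUE, cf.\ \eqref{eq:LUESchur}. The $M$-independence of the GUE coefficient is precisely part of what is being proved, so this step is circular as written; it must instead be extracted from the determinants themselves, e.g.\ by showing that the $M-\ell(\mu)$ rows of the numerator matrix with $\mu_j=0$ coincide (up to reordering) with rows of the denominator matrix and can be stripped off by an explicit reduction or a Jacobi-type minor identity. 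Second, the actual evaluation --- the two single-parity double-factorial block determinants, their recombination with the mixed Vandermonde factors into the denominator $\prod(l_{\jmath}-l_{\tilde{\jmath}})!!$, and the configuration-independent sign $(-1)^{\ell(\mu)(\ell(\mu)-2)/8}$ --- is only described, and you yourself flag it as the main obstacle. Since essentially all of the content of the formula resides there, what you have is a plausible plan rather than a proof; to close it you would need to supply the product evaluations of those $\det\bigl((a_i+b_k)!!\bigr)$-type blocks explicitly, or else do as the paper does and simply cite \cite{diFraItz}.
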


The coefficient in the Schur expansion of the kernel follows from the
specialization $\mu = \lambda^{\prime}$.

\subsection{Laguerre ensemble}

Consider the Laguerre unitary ensemble (LUE), with weight function $w(z)=
z^{\alpha} e^{-z} 1\!\!1_{z>0} $, $\alpha >-1$.

\begin{proposition}
\label{prop:LUESchur} Consider the LUE ensemble of $M$ variables and let $%
\mu $ be a partition with $\ell (\mu) \le M$. Then, 
\begin{equation}  \label{eq:LUESchur}
\langle s_{\mu} \rangle_{\mathrm{LUE}} = \prod_{j=1} ^{ M } \frac{
\Gamma \left( \alpha + \mu_j + M +1 -j \right) }{ \Gamma \left( \alpha + M
+1 -j \right) } ~ s_{\mu} (1^{M}).
\end{equation}
\end{proposition}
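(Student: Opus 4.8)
The plan is to compute the average straight from its eigenvalue-integral definition and reduce everything to a ratio of two determinants of Gamma functions. Writing the LUE density explicitly,
\[
\langle s_\mu\rangle_{\mathrm{LUE}} = \frac{\displaystyle\int_{(0,\infty)^M} s_\mu(z)\,\Delta_M(z)^2\,\prod_{j=1}^M z_j^{\alpha} e^{-z_j}\,\dd z_j}{\displaystyle\int_{(0,\infty)^M}\Delta_M(z)^2\,\prod_{j=1}^M z_j^{\alpha} e^{-z_j}\,\dd z_j},
\]
the first move is to invoke the bialternant (ratio-of-alternants) formula $s_\mu(z)=\det_{i,k}(z_i^{\ell_k})/\Delta_M(z)$, with $\ell_k=\mu_k+M-k$, so that the Schur factor absorbs one power of the Vandermonde: $s_\mu(z)\,\Delta_M(z)^2=\det_{i,k}(z_i^{M-k})\,\det_{i,k}(z_i^{\ell_k})$. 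The integrand in the numerator is then a product of two $M\times M$ determinants integrated against a product measure.

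Next I would apply the Andr\'eief (Heine--de Bruijn) identity to each integral, converting the product of determinants into a single determinant of pairwise moments. Since $\int_0^\infty z^{a+b+\alpha}e^{-z}\,\dd z=\Gamma(a+b+\alpha+1)$, the numerator collapses to $\det_{1\le j,k\le M}\Gamma\!\left((M-j)+\ell_k+\alpha+1\right)$ and the denominator to the same determinant with $\ell_k$ replaced by $M-k$, i.e. the $\mu=\emptyset$ specialization. The common combinatorial factor $M!$ cancels in the ratio, so only these two structured determinants survive.

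The core of the argument is the evaluation of these Gamma determinants. I would factor $\Gamma(\ell_k+\alpha+1)$ out of the $k$-th column, leaving in each entry the rising factorial $(\ell_k+\alpha+1)_{M-j}$, which is a monic polynomial of degree $M-j$ in the variable $\ell_k$. Elementary row operations replacing these monic polynomials by bare monomials reduce the remaining determinant to the Vandermonde $\det_{j,k}(\ell_k^{M-j})=\prod_{k<l}(\ell_k-\ell_l)$. Carrying out the identical reduction in the denominator (where the roles of $\ell_k$ are played by $M-k$) and taking the quotient, the Gamma prefactors assemble into $\prod_{j}\Gamma(\alpha+\mu_j+M+1-j)/\Gamma(\alpha+M+1-j)$, while the ratio of Vandermonde factors $\prod_{k<l}(\ell_k-\ell_l)/\prod_{k<l}(l-k)$ is precisely the principal specialization $s_\mu(1^M)$ by the Weyl dimension formula. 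Assembling the two pieces gives \eqref{eq:LUESchur}.

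The step I expect to be most delicate is this last determinant evaluation: one must perform the column factorization and the monic-polynomial row reduction while tracking the ordering of rows and columns carefully, so that the two Vandermonde factors recombine with the correct sign and produce $s_\mu(1^M)$ rather than a signed or permuted variant. The preceding steps—the bialternant substitution, Andr\'eief, and the Gamma integral—are standard and should go through without friction, once the hypothesis $\ell(\mu)\le M$ is used to guarantee that the shifted exponents $\ell_k$ are distinct and nonnegative (so $s_\mu(1^M)\neq 0$ and all moment integrals converge for $\alpha>-1$).
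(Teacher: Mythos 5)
Your proposal is correct and follows essentially the same route as the paper's proof in Appendix \ref{app:SchurClassic}: the bialternant substitution absorbing one Vandermonde, Andr\'eief's identity reducing the average to a ratio of determinants of Gamma-function moments, and a reduction of those determinants to Vandermondes whose quotient is the principal specialization $s_{\mu}(1^{M})$. The only (cosmetic) difference is that you factor the full $\Gamma(\ell_k+\alpha+1)$ out of each column and row-reduce the resulting monic Pochhammer polynomials, whereas the paper peels off factors via $\Gamma(z+1)=z\Gamma(z)$ and performs column operations; both yield the same Gamma prefactor and the same Vandermonde ratio.
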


The evaluation of the coefficient in the Schur expansion follows from the
specialization $M= N-n$ and $\mu = \lambda^{\prime}$. For completeness, we
reproduce a proof of Proposition \ref{prop:LUESchur} due to \cite[Sec.4.2.2]%
{DavidGG} in Appendix \ref{app:SchurClassic}.

For the special case $\alpha \in \mathbb{N}_0$, \eqref{eq:LUESchur} can be
recast in equivalent forms: 
\begin{subequations}
\label{eq:LUESchurInteger}
\begin{align}
\langle s_{\mu} \rangle_{\mathrm{LUE} , \alpha
\in \mathbb{N}_0} & = \left[ \prod_{j=1} ^{M} (\alpha +M-j)^{-j} \right]
 \langle s_{\mu + (\alpha^M)} \rangle_{\mathrm{LUE},\alpha =0}  \label{eq:alphashiftLUE} \\
&= s_{\mu} (1^{M+\alpha}) \prod_{j=1} ^{M} \frac{ \Gamma \left( \mu_j -j +M
+1 \right) }{ \Gamma \left( M +1 -j \right) }  \label{eq:LUESchurInt1}
\end{align}
\end{subequations}
The first identity \eqref{eq:alphashiftLUE} is straightforward from \eqref{eq:LUESchur}, and can be alternatively derived from the integral representation absorbing the $\alpha$-dependent part of the Laguerre weight into the Schur polynomial, through the property 
\begin{equation*}
\left( \prod_{j=1} ^{M} z_j ^{\alpha} \right) s_{\mu} (z) = s_{\mu +
(\alpha^M)} (z) .
\end{equation*}
The $\alpha$-dependent but $\mu$-independent coefficient in \eqref{eq:alphashiftLUE} is entirely due to the denominator.

Using the dimension formula on $s_{\mu} (1^{M+ \alpha})$ and splitting the
product in the three regions: (i) $j<k \le M$, (ii) $j \le M$ with $k>M$ and
(iii) $M <j<k$ gives 
\begin{align*}
s_{\mu} (1^{M+ \alpha}) & = s_{\mu} (1^{M}) \prod_{j=1} ^{M} \prod_{k=M+1}
^{M+\alpha} \frac{ \mu_j - j +k }{k-j } \\
& =s_{\mu} (1^{M}) \prod_{j=1} ^{M} \frac{ \Gamma \left( \mu_j -j + M +
\alpha +1 \right) \Gamma \left( M +1-j \right) }{ \Gamma \left( \mu_j -j + M
+1 \right) \Gamma \left( M +\alpha +1-j \right) } ,
\end{align*}
using the hypothesis $\ell (\mu) \le M$. When plugged in \eqref{eq:LUESchur}, this proves \eqref{eq:LUESchurInt1}.\par
\medskip
Expressions for Schur averages closely related to \eqref{eq:LUESchur} at $\alpha=M$ have been given in \cite{Novaes:2015} (see also \cite{Nagar,Cunden} for related discussion), except that in \cite{Novaes:2015} the argument of $s_{\mu}$ in the average is inverted, $s_{\mu} (z_1^{-1} , \dots, z_M ^{-1})$. Recall that $s_{\mu} (1^M)$ gives the dimension of a $U(M)$ representation labelled by $\mu$, but the same partition can label a representation of the symmetric group $S_{\lvert \mu \rvert}$. Writing $s_{\mu} (1^M)$ as a function of the dimension of the $S_{\lvert \mu \rvert}$ representation recasts \eqref{eq:LUESchur} is a form very similar to \cite{Novaes:2015}.

\begin{remark}
In \cite[Eq.(96)]{Brown} the average $\langle s_{\mu} \rangle_{\mathrm{%
LUE }}$ at $\alpha=0$ is given as $(s_{\mu} (1^M))^2$, in conflict with our
computation. A check in \textsf{Mathematica} for $M=2,3,4$ and a small
sample of partitions $\mu$, however, shows agreement with formula
\eqref{eq:LUESchur}, while \cite[Eq.(96)]{Brown} fails. As further confirmation, the computations in Section \ref{sec:CBExp}, that rely on \eqref{eq:LUESchur}, are consistent with the existing literature.
\end{remark}

\subsection{Jacobi ensemble}
\label{sec:JUEavg}

Let us now take the (asymmetric) Jacobi ensemble with weight function $w(z)=
z^{\alpha} (1-z)^{\beta} 1\!\!1_{0<z<1} $, $\alpha >-1, \beta>-1$. The following holds, see e.g. \cite{Fyodorov:2006pk}.

\begin{proposition}
\label{prop:JUESchur} Consider the JUE ensemble of $M$ variables and let $%
\mu $ be a partition with $\ell (\mu) \le M$. Take, $\alpha, \beta \in 
\mathbb{N}_0$. Then, 
\begin{equation}  \label{eq:JUESchur}
\langle s_{\mu} \rangle_{\mathrm{JUE}} = \frac{ s_{\mu} (1^{M})
s_{\mu} (1^{\alpha + M}) }{s_{\mu} (1^{\alpha + \beta + 2 M})}.
\end{equation}
\end{proposition}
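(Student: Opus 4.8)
The plan is to evaluate the Jacobi Schur average directly from its Selberg-type integral representation. The starting point is the definition
\begin{equation*}
\langle s_{\mu} \rangle_{\mathrm{JUE}} = \mathcal{N}_M \int_{[0,1]^M} \Delta_M(z)^2 \, s_{\mu}(z) \prod_{j=1}^{M} z_j^{\alpha} (1-z_j)^{\beta} \, \dd z_j ,
\end{equation*}
and the key observation is that the combination $\Delta_M(z)^2 s_{\mu}(z) \prod_j z_j^{\alpha}$ can be rewritten as a single determinant. I would use the bialternant (Weyl) formula $s_\mu(z) = \det[z_i^{\mu_j + M - j}] / \Delta_M(z)$, so that the integrand becomes $\Delta_M(z) \det_{i,j}[z_i^{\mu_j + \alpha + M - j}] \prod_j (1-z_j)^{\beta}$. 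Expanding the remaining Vandermonde as $\det[z_i^{M-k}]$ and applying the Andr\'eief (Heine) identity collapses the $M$-fold integral into a single $M \times M$ determinant whose entries are one-dimensional Euler beta integrals
\begin{equation*}
\int_0^1 z^{\mu_j + \alpha + 2M - j - k} (1-z)^{\beta} \, \dd z = B(\mu_j + \alpha + 2M - j - k + 1, \beta + 1) .
\end{equation*}

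The hard part will be reducing this determinant of beta functions to the clean product-of-dimensions formula on the right-hand side of \eqref{eq:JUESchur}. First I would factor the $\Gamma(\beta+1)$ and the $j$-dependent gamma factors out of each row and column, pulling the determinant into the shape of a Cauchy-type or Gauss-binomial determinant. The remaining determinant should be recognizable, after suitable row/column operations, as a ratio of products over the parts $\mu_j$, and here I expect the main technical obstacle: matching the resulting $\Gamma$-function product precisely against the three hook-content/dimension factors $s_\mu(1^M)$, $s_\mu(1^{\alpha+M})$, $s_\mu(1^{\alpha+\beta+2M})$. The restriction $\alpha,\beta \in \mathbb{N}_0$ is what makes these three Schur specializations polynomial and allows the product formula
\begin{equation*}
s_\mu(1^{K}) = \prod_{1 \le i < j \le M} \frac{\mu_i - \mu_j + j - i}{j-i} \prod_{j=1}^{M} \frac{\Gamma(\mu_j + K - j + 1)}{\Gamma(K - j + 1)\,\Gamma(\mu_j+1)\cdots}
\end{equation*}
to be applied cleanly on both sides.

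An alternative and likely cleaner route, which I would pursue in parallel, is to bypass the beta-integral determinant entirely and instead use the known Kaneko--Kadell--Forrester evaluation of the Jacobi-Selberg average of a Schur polynomial, stated for general $\alpha,\beta > -1$ in terms of a product of gamma functions (the reference \cite{Fyodorov:2006pk} supplies exactly such a formula). In that approach the work reduces to an algebraic manipulation: take the general gamma-product expression for $\langle s_\mu \rangle_{\mathrm{JUE}}$, specialize to $\alpha,\beta \in \mathbb{N}_0$, and verify that it factors as the stated ratio of three dimensions. This amounts to checking that
\begin{equation*}
\prod_{j=1}^{M} \frac{\Gamma(\alpha+\mu_j+M+1-j)\,\Gamma(\alpha+\beta+2M+1-j)}{\Gamma(\alpha+M+1-j)\,\Gamma(\alpha+\beta+\mu_j+2M+1-j)}
\end{equation*}
reorganizes into $s_\mu(1^M) s_\mu(1^{\alpha+M})/s_\mu(1^{\alpha+\beta+2M})$, which follows from the dimension formula together with the Laguerre-type identity \eqref{eq:LUESchur} already established, since the Jacobi weight degenerates to Laguerre in an appropriate limit. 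The genuine obstacle throughout is purely bookkeeping of the gamma-function arguments, and I would organize it by writing every factor in the form $\Gamma(\mu_j + c - j + 1)/\Gamma(c - j + 1)$ so that the three dimensions can be read off termwise.
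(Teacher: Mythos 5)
Your second route is essentially what the paper does: it quotes the known Selberg--Kadell evaluation (via Fyodorov--Khoruzhenko) in the Gamma-product form \eqref{eq:JUESchurGamma} and then verifies its equivalence with the dimension ratio \eqref{eq:JUESchur} by splitting the double products over the three index regions $j<k\le M$, $j\le M<k$, and $M<j<k$ --- the same bookkeeping you describe, so your proposal is correct and takes the paper's approach. One small correction: the Gamma product you display equals $s_\mu(1^{\alpha+M})/s_\mu(1^{\alpha+\beta+2M})$ by itself; the remaining factor $s_\mu(1^M)$ is already present as a separate prefactor in the cited formula \eqref{eq:JUESchurGamma} and is not produced by reorganizing the Gamma product. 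Your first (Andr\'eief) route is the machinery of Appendix~\ref{app:SchurClassic}, but be aware it is genuinely harder here than for the LUE, since the JUE moment recursion $\mathfrak{m}_{p+1}/\mathfrak{m}_{p}=(p+\alpha+1)/(p+\alpha+\beta+2)$ involves a ratio of linear factors, so the column operations do not immediately collapse the determinant to a single Vandermonde; the paper does not carry that computation through for the JUE either, and your sketch leaves that reduction open.
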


Formula \eqref{eq:JUESchur} can be equivalently written as 
\begin{equation}  \label{eq:JUESchurGamma}
\langle s_{\mu} \rangle_{\mathrm{JUE}} = \prod_{j=1} ^{M} \frac{
\Gamma \left( \mu_j -j + \alpha + M +1 \right) \Gamma \left( \alpha + \beta
+ 2M +1 -j\right) }{ \Gamma \left( \mu_j -j + \alpha + \beta + 2M +1 \right)
\Gamma \left( \alpha + M +1 -j\right) } ~ s_{\mu} (1^{M}) .
\end{equation}
The equality between \eqref{eq:JUESchur} and \eqref{eq:JUESchurGamma} is
shown using the dimension formula for the Schur polynomials. Many
simplifications take place separating the double products, both in the
numerator and in the denominator, in the three regions: (i) $j<k \le M$,
(ii) $j \le M$ and $k>M$, and (iii) $M < j <k$.

The average of the Schur polynomial in the form \eqref{eq:JUESchurGamma} can
be analytically continued to non-integer values of $\alpha, \beta$. Since
both sides of the equality \eqref{eq:JUESchurGamma}
depend analytically on $\alpha, \beta$ in a suitable region, we expect the
result \eqref{eq:JUESchurGamma} to hold when the restriction $\alpha, \beta
\in \mathbb{N}_0$ is lifted.

\subsubsection{Jacobi beta-ensemble}
As we have pointed out in Remark \ref{rmk:betaensemble}, the strategy can be applied to any beta-ensemble, with beta-parameter denoted by $2 \gamma$. It is convenient, however, to use the dual Cauchy identity \eqref{eq:DCJack} in terms of Jack polynomials. We are thus led to 
\begin{equation*}
    \left\langle \prod_{j=1}^{2n} \det \left( 1+ t_j Z \right) \right\rangle_{\mathrm{JE},2\gamma} = \sum_{\lambda \in \mathbb{Y}_{2n,M}} P^{(\gamma)} _{\lambda} (t) \left\langle P^{(1/\gamma)} _{\lambda^{\prime}} \right\rangle_{\mathrm{JE},2\gamma} 
\end{equation*}
where $\langle \cdot \rangle_{\mathrm{JE},2\gamma}$ is the average taken in the Jacobi ($2\gamma$)-ensemble. The advantage of this expansion is that the Jack polynomial average is known \cite{KadellJ}:
\begin{equation*}
    \left\langle P^{(1/\gamma)} _{\lambda^{\prime}} \right\rangle_{\mathrm{JE},2\gamma}  = \prod_{j=1} ^{2n} \frac{ (\alpha+1 + \gamma (M-j) )_{\lambda^{\prime} _j } }{  (\alpha+ \beta +2 + \gamma (2M-j-1) )_{\lambda^{\prime} _j} } \prod_{1 \le j < k \le \lambda_1 } \frac{ (\gamma(k-j+1))_{\lambda^{\prime} _j - \lambda^{\prime} _k} }{ (\gamma(k-j))_{\lambda^{\prime} _j - \lambda^{\prime} _k} } \prod_{j=1} ^{\lambda_1} \frac{ (\gamma(M-j+1))_{\lambda^{\prime} _j } }{ (\gamma(\lambda_1 -j +1))_{\lambda^{\prime} _j } } ,
\end{equation*}
where $(\cdot )_{n}$ is the Pochhammer symbol, and $\ell (\lambda^{\prime})= \lambda_1$ has been used.\par
Jack polynomial averages have been extensively used in \cite{FyoleDou,MeReWi} to compute (both positive and negative) moments in the Jacobi and Laguerre matrix beta-ensemble.

\subsection{Other classical ensembles}

The weight function $w(z)=\frac{z^{\alpha }}{(1+z)^{M+\beta }}%
1\!\!1_{0<z<\infty }$ also defines a classical ensemble since it satisfies Pearson's equation (see for example \cite{KM}). A Brownian motion
interpretation of this classical ensemble has been recently given in \cite[Remark 2.5]{BW}. The corresponding ensemble is related to a Jacobi ensemble,
in the sense that both weight functions are related to Euler's beta integral%
\begin{equation}
B(a,b)\equiv \int_{0}^{\infty }\frac{z^{a}}{\left( 1+z\right) ^{a+b}}\mathrm{%
d}z=\int_{0}^{1}u^{a-1}(1-u)^{b-1}\mathrm{d}z=\frac{\Gamma \left( a\right)
\Gamma \left( b\right) }{\Gamma \left( a+b\right) },  \label{Euler}
\end{equation}%
with $\Re a,\Re b\geq 0$. We denote such ensemble by $\widetilde{\mathrm{JUE}}$.

However, as pointed out in \cite{Fyodorov:2006pk}, whereas a change of
variables immediately relates the partition functions of the two ensembles, the same is not true if we have Schur insertions, since the argument of the Schur polynomial changes. Hence, the computation of the Schur average when $\alpha
,\beta \in \mathbb{N}_{0}$ is similar to but intrinsically different from the one for the Jacobi ensemble of Proposition \ref{prop:JUESchur} (cf. \cite{Fyodorov:2006pk}).

\begin{proposition}
\label{prop:RUESchur} Consider the $\widetilde{\text{JUE}}$ ensemble of $M$
variables and let $\mu $ be a partition with $\ell (\mu )\leq M$. Take, $%
\alpha ,\beta \in \mathbb{N}_{0}$ with $\tilde{\beta}\equiv \beta -\alpha
-M\geq 0$. Then, 
\begin{equation}
\langle s_{\mu }\rangle _{\widetilde{\mathrm{JUE}}}=\frac{s_{\mu
}(1^{M})s_{\mu }(1^{\alpha +M})}{s_{\mu ^{\prime }}(1^{\tilde{\beta}})}.
\label{eq:RUESchur}
\end{equation}
\end{proposition}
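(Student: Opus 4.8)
\emph{The plan is to} compute the average directly from its integral definition and collapse it to a ratio of determinants of beta integrals. With $\mathcal{N}_M = 1/(M!\,\mathcal{Z}_M)$ as in \eqref{eq:defnormN}, we have
\begin{equation*}
\langle s_{\mu}\rangle_{\widetilde{\mathrm{JUE}}} = \mathcal{N}_M \int_{(0,\infty)^M} \Delta_M(z)^2\, s_{\mu}(z) \prod_{j=1}^{M} \frac{z_j^{\alpha}}{(1+z_j)^{M+\beta}}\,\dd z_j .
\end{equation*}
First I would insert the bialternant formula $s_{\mu}(z) = \det_{1\le i,k\le M}(z_k^{\mu_i+M-i})/\Delta_M(z)$, so that $\Delta_M(z)^2 s_{\mu}(z) = \Delta_M(z)\,\det(z_k^{\mu_i+M-i})$ becomes a product of two determinants and Andr\'eief's identity reduces the $M$-fold integral to a single $M\times M$ determinant. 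Applying this to numerator and denominator (the latter reconstructing $\mathcal{Z}_M$) gives $\langle s_{\mu}\rangle_{\widetilde{\mathrm{JUE}}} = \det(\mathcal{M}_{ik})/\det(\mathcal{M}^{(0)}_{ik})$, where $\mathcal{M}^{(0)}$ is the same matrix at $\mu=\emptyset$ and each moment is an Euler integral \eqref{Euler},
\begin{equation*}
\mathcal{M}_{ik} = \int_0^{\infty} \frac{z^{(\alpha+M-i)+(\mu_k+M-k)}}{(1+z)^{M+\beta}}\,\dd z = \frac{\Gamma(a_i+b_k)\,\Gamma(c-a_i-b_k)}{\Gamma(c)},
\end{equation*}
with $a_i = \alpha+M+1-i$, $b_k = \mu_k+M-k$, $c = M+\beta$, so that $c-a_i-b_k = \tilde{\beta}+i+k-\mu_k-1$. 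It is here that the model departs from Proposition \ref{prop:JUESchur}: the factor $(1+z)^{-(M+\beta)}$ places a \emph{second} Gamma function in the \emph{numerator}, whereas the Jacobi weight $(1-z)^{\beta}$ produces one in the denominator, and it is this reversal of the box content that will transpose the partition.

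The hard part will be the evaluation of $\det(\mathcal{M}_{ik})$. Because the exponent $c-a_i-b_k$ depends on both indices, the effective weight couples rows and columns, so Andr\'eief does not present $\mathcal{M}$ as a factorised Gram matrix and one is left with a genuine Cauchy-type double alternant of beta functions. Concretely, pulling the $\mu$-independent Gamma factors out of each row and column reduces $\mathcal{M}_{ik}$, up to such factors, to a ratio of two Pochhammer symbols in $b_k$ (all arguments are integers since $\alpha,\beta\in\mathbb{N}_0$, so the entries are reciprocals of binomial coefficients); the surviving determinant is of hypergeometric type and admits a closed-form product evaluation, which I would carry out as the technical core, deferring the routine manipulations to Appendix \ref{app:ComputeSchur}. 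A more economical route is to invoke the Schur-reproducing property \cite{Mironov:2017och,Mironov:2018ekq} recalled after Theorem \ref{thm:main}, which guarantees $\langle s_{\mu}\rangle_{\widetilde{\mathrm{JUE}}} = c_M(\mu)\,s_{\mu}(1^M)$ for a coefficient $c_M(\mu)$ that is a product over the rows of $\mu$; the determinant then only needs to pin down this one-parameter family of Gamma ratios.

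Either route yields $\langle s_{\mu}\rangle_{\widetilde{\mathrm{JUE}}}$ as an explicit product of Gamma functions over the parts $\mu_j$, and the final step is to reorganise it through the dimension formula for Schur polynomials, splitting the double products into the three standard box regions exactly as for \eqref{eq:JUESchurGamma}. In the content form of the dimension formula the hook lengths of $\mu$ enter identically in every Schur value, while the parameters appear only through the box contents: $M$ and $\alpha+M$ (content $j-i$) reassemble the factors $\Gamma(a_i+b_k)$ into the numerator $s_{\mu}(1^M)\,s_{\mu}(1^{\alpha+M})$, whereas the factors $\Gamma(c-a_i-b_k)$, carrying the reversed content $i-j$, reassemble into the denominator $s_{\mu'}(1^{\tilde{\beta}})$ — which is precisely why the transpose appears only in this ensemble. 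A useful internal check is that the integral converges iff $\mu_1\le\tilde{\beta}$, exactly the condition for $s_{\mu'}(1^{\tilde{\beta}})\neq 0$: the hypothesis $\tilde{\beta}\ge 0$ makes the $\mu=\emptyset$ ensemble well defined, and the result extends by analytic continuation in $\alpha,\beta$ away from $\mathbb{N}_0$, as in the Jacobi case.
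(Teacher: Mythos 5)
Your plan coincides with the paper's own route: Proposition \ref{prop:RUESchur} is established (following \cite{Fyodorov:2006pk}, with the scaffolding in Appendix \ref{app:SchurClassic}) by precisely the Andr\'eief reduction you describe, using the $\widetilde{\mathrm{JUE}}$ moments $\mathfrak{m}_p=\Gamma(p+\alpha+1)\Gamma(M+\beta-\alpha-p-1)/\Gamma(M+\beta)$ from Euler's integral \eqref{Euler} and their one-step recursion, followed by the same three-region reorganisation through the dimension formula; your observations that the second $\Gamma$ moving to the numerator is what transposes the partition, and that convergence requires $\mu_1\le\tilde\beta$ exactly when $s_{\mu'}(1^{\tilde\beta})\neq 0$, are both correct and genuinely clarifying.

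Two steps need repair, and one is left undone. First, the claim that ``Andr\'eief does not present $\mathcal{M}$ as a factorised Gram matrix'' is wrong: the entries depend only on the sum $a_i+b_k$, so $\mathcal{M}$ is an honest minor of the Hankel matrix of moments, exactly as for the LUE. The genuine difference is that the moment ratio $\mathfrak{m}_{p+1}/\mathfrak{m}_p=(\alpha+p+1)/(\tilde\beta+2M-p-2)$ carries a \emph{decreasing} linear factor in the denominator, so the column reduction that yields a Vandermonde for the LUE here leaves a Cauchy-type double alternant; you land on the right object for a misstated reason. Second, the ``more economical route'' via the Schur-reproducing property is circular: that property is a structural observation which must itself be verified ensemble by ensemble, so it cannot be assumed in order to pin down $c_M(\mu)$. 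Finally, the product evaluation of the Cauchy-type determinant is the technical core of the whole proposition and is asserted rather than carried out; the paper likewise outsources this step to \cite{Fyodorov:2006pk}, so your outline matches the published argument, but as a self-contained proof it is incomplete at exactly this point.
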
\par
\medskip 
A further classical ensemble to be considered is the one with weight
function $w(z)=z^{-\tilde{\alpha}}e^{-1/z}1\!\!1_{0<z<\infty }$. It corresponds to the so-called inverse Wishart ensemble \cite{invWishart} and we schematically denote it as $\widetilde{\text{LUE}}$. 
Probabilistically, its weight function is the probability density function (pdf)
of the inverse Gamma distribution. Using the results for the Laguerre
ensemble we can solve this case as well, but the relationship between both
cases is non-trivial due to the presence of the Schur polynomial.

We rewrite 
\begin{equation}  \label{eq:alphaprimetoalpha}
\tilde{\alpha} = \alpha +2M-1
\end{equation}
and ask $\alpha>-1$, which is sufficient for convergence at $z \to \infty$.
%As for the LUE, we could generalise the term $e^{- 1/z}$ to $e^{- \beta/z}$, but this would merely affect the normalization.

A change of variables $z^{\prime} _j = z_j ^{-1}$ yields 
\begin{equation*}
\Delta_M (z)^2 \prod_{j=1} ^{M} \frac{e^{- z_j ^{-1}}}{z_j ^{\alpha +2N-1}} 
\mathrm{d} z_j = (-1)^M \Delta_M (z_j ^{\prime})^2 \prod_{j=1} ^{M} {z_j
^{\prime}} ^{\alpha} e^{ - z_j^{\prime} } \mathrm{d} z_j ^{\prime} .
\end{equation*}
(We henceforth drop the primes). In other words, the joint pdf of the $M$
eigenvalues in the $\widetilde{\text{LUE}}$ can be mapped to that of the LUE. The overall $(-1)^M$ will
cancel when computing averages.

Things become more involved for the average of a Schur polynomial, because
the change of variables inverts the argument: 
\begin{equation*}
\left\langle s_{\lambda ^{\prime }}\right\rangle _{\widetilde{\mathrm{LUE}}, \tilde{\alpha}}=\mathcal{N}_{M,\mathrm{LUE}}\int_{(0,\infty )^{M}}s_{\lambda ^{\prime
}}(z^{-1})\prod_{j=1}^{M}w_{\mathrm{LUE},\alpha}(z_{j})\mathrm{d}z_{j}.
\end{equation*}%
Besides, for the original integral to converge we must impose $\alpha >\ell
(\lambda )-1$. Recalling that this average appears in the sum %
\eqref{eq:kernelexp}, we impose $\alpha >2n-1$. We use the property \cite{MacDonald}\footnote{%
Analogous computations have been applied in \cite{Santilli:2018ilo,Santilli:2020snh} to study loop operators in 3d Chern--Simons-matter theories.} 
\begin{equation*}
s_{\lambda ^{\prime }}\left( \frac{1}{z_{1}},\dots ,\frac{1}{z_{M}}\right)
=s_{(\lambda ^{\prime })^{\ast }}\left( z_{1},\dots ,z_{M}\right)
~\prod_{j=1}^{M}z_{j}^{-\ell (\lambda )}
\end{equation*}%
where we have employed $\lambda _{1}^{\prime }=\ell (\lambda )$, and the partition $(\lambda ^{\prime })^{\ast }$ is
\begin{equation*}
(\lambda ^{\prime })^{\ast }=\left( \ell (\lambda )-\lambda _{M}^{\prime
},\ell (\lambda )-\lambda _{M-1}^{\prime },\dots ,0\right) .
\end{equation*}%
We conclude that%
\begin{equation}
\label{eq:SchurDLUEshift}
\left\langle s_{\lambda ^{\prime }}\right\rangle _{\widetilde{\mathrm{LUE}},\tilde{\alpha}}=\left\langle s_{(\lambda ^{\prime })^{\ast }}\right\rangle _{\mathrm{LUE},\alpha \mapsto \alpha -\ell (\lambda )},
\end{equation}
with the relation between $\tilde{\alpha}$ and $\alpha $ as in \eqref{eq:alphaprimetoalpha}. Since the Schur average in the LUE is known from \eqref{eq:LUESchur}, the Schur average in this other classical ensemble follows.\par
Precisely, the multi-point kernel of the inverse Wishart ensemble is studied in \cite{invWishart}, where it is interpreted as the probability distribution of non-interacting fermions in a background potential. Theorem \ref{thm:main} combined with \eqref{eq:SchurDLUEshift} and \eqref{eq:LUESchur} yields a character expansion of such quantity.

%Moreover, this ensemble has a Brownian motion interpretation \cite{invWishart} and therefore, the averages $\left\langle s_{\mu}\right\rangle _{\widetilde{\mathrm{LUE}}}$ obtained could be also interpreted as transitions of certain non-intersecting random walkers between an equispaced initial position to a final position, specified by the partition $\mu$.

\begin{remark}
We have collected Schur averages for classical ensembles from the literature, with the exception of eventual rephrasing of some expressions and our result \eqref{eq:SchurDLUEshift}. In Section \ref{sec:qLaguerre}, we will compute the average in a $q$-Laguerre ensemble.
\end{remark}

\section[Laguerre Wronskian, conformal block expansion and tau function of Painlev\'{e} V]{Laguerre Wronskian, conformal block expansion and tau function of~Painlev\'{e}~V}
\label{sec:CBExp}

As an application of our results, we will exploit now the fact that the
diagonal of the kernels are directly related to Wronskians of orthogonal
polynomials. In turn, these Wronskians appear in webs of relationships that
oftentimes include Painlev\'{e} tau functions. We show here how the diagonal limit of the Laguerre expansion obtained before is related to the random matrix results in \cite{Winn,MLA}.

The diagonal limit of the Christoffel--Darboux~kernel, when all variables are the same, 
\begin{equation*}
K_N ^{(n)} (\underbrace{x, \dots, x}_{n}; \underbrace{x, \dots, x}_{n} )
\equiv \lim_{(x_1, \dots, x_n) \to (x, \dots, x)} \lim_{ (y_1, \dots, y_n)
\to (x_1, \dots, x_n) } K_N ^{(n)} (x_1, \dots , x_n ; y_1 , \dots , y_n )
\end{equation*}
corresponds to the moment of order $2n$ of the characteristic polynomial
average in the corresponding random matrix ensemble: 
\begin{equation}  \label{eq:BHIntegral}
\frac{1}{\prod_{j=N-n} ^{N-1} h_j} K_N ^{(n)} (x, \dots, x; x, \dots, x) =
\left\langle \det( x - Z)^{2n} \right\rangle_{w} ,
\end{equation}
where $Z$ is a random matrix from the ensemble with weight function $w(z)$ in $N-n$ variables. The random matrix average \eqref{eq:BHIntegral} is a $2n
\times 2n $ Wronskian determinant built with the orthogonal polynomials $P_k
(z)$ associated to $w(z)$ \cite[Eq.(15)]{BH} 
\begin{equation}  \label{eq:BHWronskian}
\left\langle \det( x - Z)^{2n} \right\rangle_{w} = \frac{1}{G(2n+1)} \mathrm{%
Wr} \left( P_{N-n} (x) , P_{N+1-n} (x) , \dots, P_{N+n-1} (x) \right).
\end{equation}
Typically, the Wronskian is equivalently written as a Hankel determinant
involving \emph{different} orthogonal polynomials \cite{KSz,Leclerc}.

In the case of the Laguerre ensemble, the average \eqref{eq:BHIntegral}
plays a prominent role and is related to a wide range of quantities \cite{Forr94}. The Hankel equivalent was studied in detail in \cite{Winn}, and later on in \cite{MLA}.

It is convenient for the rest of the subsection to adopt the shorthand%
\begin{equation*}
\widetilde{K}_{N}^{(n)}(x)_{\mathrm{LUE}}\equiv \frac{1}{%
\prod_{j=N-n}^{N-1}h_{j}}K_{N}^{(n)}(\underbrace{x,\dots ,x}_{n};\underbrace{%
x,\dots ,x}_{n})|_{\mathrm{LUE},\alpha =2n}.
\end{equation*}%
Then%
\begin{align*}
\widetilde{K}_{N}^{(n)}(x)_{\mathrm{LUE}}& \overset{\text{%
\eqref{eq:BHIntegral}}}{=}\left\langle \det (x-Z)^{2n}\right\rangle _{\text{%
\textrm{LUE}},\alpha =2n} \\
& \overset{\text{\eqref{eq:BHWronskian}}}{=}(-1)^{n}\frac{G(M+2n+1)}{%
G(2n+1)G(M+2)}~\mathrm{Wr}\left( L_{N-n}^{(\alpha
=2n)}(x),L_{N+1-n}^{(\alpha =2n)}(x),\dots ,L_{N+n-1}^{(\alpha
=2n)}(x)\right) ,
\end{align*}%
where $L_{k}^{(\alpha )}$ are the generalized Laguerre polynomials and the additional Barnes $G$-functions \cite{Forrester} on the second line come from passing from
monic to orthonormal polynomials. Note that we are pegging the value of the
parameter $\alpha $ with the order of the moment of the characteristic
polynomial, $\alpha =2n$.

Moreover, following \cite{MLA}, let us introduce the function 
\begin{align}
f_{2n} (x) &= e^{- \frac{M}{2} x } (-1)^n ~ \mathrm{Wr} \left( L_{N-n}
^{(\alpha =2n)} (- x) , L_{N+1-n} ^{(\alpha =2n)} (- x) , \dots, L_{N+n-1}
^{(\alpha =2n)} (- x) \right)  \notag \\
&= \frac{ G(2n+1) G(M+2) }{G(M+2n+1)} ~e^{- \frac{M}{2} x } ~ \widetilde{K}%
_N ^{(n)} (- x)_{\mathrm{LUE }} .  \label{eq:F2ntoLUE}
\end{align}
Then, combining the above discussion with \cite[Prop.3]{Winn} yields another
expression for the Laguerre kernel.

\begin{proposition}[Winn \cite{Winn}]
\label{prop:Winn1} Let $\zeta \in \mathbb{R}$ and set $x=2 \lvert \zeta
\rvert$. Consider the LUE of $M=N-n$ variables with $\alpha = 2n$. The
associated kernel satisfies the identity 
\begin{equation*}
f_{2n} (x) = \frac{2^{M^2+4Mn}}{ (2\pi)^M M! } \int_{\mathbb{R}^{M}}
\Delta_{M} (z)^2 \prod_{j=1} ^{M} \frac{ e^{\mathrm{i} \xi z_j } }{ (1+z_j
^2 )^{M+2n} } \mathrm{d} z_j ,
\end{equation*}
with $f_{2n}$ related to the LUE kernel via \eqref{eq:F2ntoLUE}.
\end{proposition}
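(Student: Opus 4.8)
The plan is to show that the right-hand integral, once the eigenvalues are integrated out, reproduces exactly the Laguerre Wronskian encoded in $f_{2n}$ through \eqref{eq:F2ntoLUE} and \eqref{eq:BHWronskian}. First I would apply the Andr\'{e}ief (Heine--de Bruijn) identity to the $M$-fold integral, so that the squared Vandermonde $\Delta_M(z)^2$ collapses the integral into an $M\times M$ Hankel determinant,
\[
\int_{\mathbb{R}^M}\Delta_M(z)^2\prod_{j=1}^M\frac{e^{\mathrm{i}\xi z_j}}{(1+z_j^2)^{M+2n}}\,\mathrm{d}z_j = M!\,\det_{0\le i,j\le M-1}\bigl[\mu_{i+j}(\xi)\bigr],
\]
where $\mu_k(\xi)=\int_{\mathbb{R}}z^k e^{\mathrm{i}\xi z}(1+z^2)^{-(M+2n)}\,\mathrm{d}z$ is the $k$-th Fourier moment of the Cauchy weight. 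Since $\mu_k=(-\mathrm{i})^k\partial_\xi^k\mu_0$, the whole determinant is controlled by the single transform $\mu_0(\xi)$ and its derivatives.

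The analytic core is the evaluation of $\mu_0$. Because the exponent $M+2n$ is a positive integer, the Fourier transform of $(1+z^2)^{-(M+2n)}$ is a half-integer modified Bessel function $K_{M+2n-1/2}(|\xi|)$ and hence reduces to $e^{-|\xi|}$ times a polynomial in $|\xi|$; equivalently one reads it off as the order-$(M+2n)$ residue at $z=\mathrm{i}$. Differentiating in $\xi$ preserves this shape, so every entry $\mu_{i+j}(\xi)$ equals $e^{-|\xi|}$ times a polynomial. Pulling the common factor $e^{-|\xi|}$ out of each of the $M$ rows produces $e^{-M|\xi|}$, which is precisely the prefactor $e^{-Mx/2}$ demanded by \eqref{eq:F2ntoLUE} once one sets $x=2|\zeta|$ with $\xi=\zeta$; this is the structural reason for the normalisation $x=2|\zeta|$ in the statement.

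It then remains to recognise the leftover polynomial determinant as the Laguerre Wronskian. I would do this through the classical Fourier/Laplace transform of Laguerre functions, which identifies the polynomials generated by the $\mu_{i+j}$ with generalized Laguerre polynomials of superscript $2n$; equivalently, one invokes the duality for Wronskians of polynomials of bounded degree, under which the $M\times M$ Wronskian built from $\{\partial_\xi^k\mu_0\}_{k=0}^{M-1}$ --- these being $e^{-|\xi|}$ times polynomials spanning a space of dimension $M+2n$ --- is proportional to the complementary $2n\times 2n$ Wronskian $\mathrm{Wr}\bigl(L_{N-n}^{(2n)}(-x),\dots,L_{N+n-1}^{(2n)}(-x)\bigr)$ appearing in $f_{2n}$. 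What is then left is bookkeeping: collecting the phases $(-\mathrm{i})^{i+j}$, the $\Gamma$-factors from the half-integer Bessel reduction, and the prefactor $2^{M^2+4Mn}/((2\pi)^M M!)$, and checking that they collapse to the asserted identity.

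The main obstacle is this last identification: showing that the Fourier moments of the integer-power Cauchy weight assemble into Laguerre polynomials with exactly the degrees $N-n,\dots,N+n-1$ and superscript $2n$, rather than into some other polynomial family, and with enough control on normalisations that the announced constant emerges. A cleaner alternative that sidesteps the Bessel computation is to linearise the weight via the Gamma representation $(1+z^2)^{-s}=\Gamma(s)^{-1}\int_0^\infty u^{s-1}e^{-u(1+z^2)}\,\mathrm{d}u$, carry out the resulting Gaussian eigenvalue integral, and recognise the auxiliary $u$-integrals through a Laguerre generating function; but in either route the crux, and the source of the exact constant, is the Laguerre identification furnished by \cite[Prop.3]{Winn}.
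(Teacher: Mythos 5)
The paper does not actually prove this proposition: it is stated as a rephrasing of \cite[Prop.~3 and 4]{Winn}, combined with the chain of identities \eqref{eq:BHIntegral}--\eqref{eq:F2ntoLUE} established just above it, which identify $f_{2n}$ with a normalisation of the diagonal LUE kernel. Your proposal instead sketches a direct derivation, and the route you choose --- Andr\'eief reduction of the eigenvalue integral to an $M\times M$ Hankel determinant of Fourier moments of the integer-power Cauchy weight, the half-integer Bessel (equivalently, residue at $z=\mathrm{i}$) evaluation showing each entry is $e^{-\lvert\xi\rvert}$ times a polynomial so that the determinant carries the prefactor $e^{-M\lvert\xi\rvert}=e^{-Mx/2}$, and a Wronskian--Hankel duality of Karlin--Szeg\H{o}/Leclerc type (\cite{KSz,Leclerc}, which the paper itself alludes to) converting the residual $M\times M$ polynomial determinant into the complementary $2n\times 2n$ Laguerre Wronskian --- is essentially the mechanism behind Winn's own proof, and each of its ingredients is sound. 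What your route buys is an explanation of \emph{why} the identity holds, in particular why the scaling $x=2\lvert\zeta\rvert$ is forced, which the paper leaves entirely to the citation. What you do not supply is the one step that carries all the content: the identification of the leftover polynomial determinant with $\mathrm{Wr}\bigl(L_{N-n}^{(2n)}(-x),\dots,L_{N+n-1}^{(2n)}(-x)\bigr)$ of precisely those degrees and superscript, together with the constant $2^{M^2+4Mn}/((2\pi)^M M!)$; you explicitly defer this to \cite[Prop.~3]{Winn}. Since that is exactly the citation on which the paper's own treatment rests, your proposal is no weaker than the paper's, but as a self-contained argument it still has this one acknowledged hole, and closing it would require carrying out the Wronskian duality (or the Gamma-linearisation alternative you mention) in detail.
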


Proposition \ref{prop:Winn1} is just a rephrasing of \cite[Prop.4]{Winn},
but here we are emphasizing the interpretation as a kernel for the LUE.

The Wronskian of Laguerre polynomials is in turn related to a solution to
the $\sigma$-Painlev\'{e} V equation \cite[Thm.1]{MLA}. The result of interest for us is that the function $f_{2n} (x)$ is essentially a tau function of Painlev\'{e} V. This implies a connection between the kernel of the LUE, with $\alpha=2n$, and tau functions of Painlev\'{e} V. The mapping between the two settings is non-linear, and indeed requires the term  $e^{-\frac{M}{2} x}$.

Combining the present discussion with the Schur expansion, using the explicit knowledge of the coefficient from \eqref{eq:LUESchurInt1} with $\alpha =2n$, we derive a new expansion for $f_{2n}(x)$: 
\begin{equation*}
f_{2n}(x)=\frac{G(2n+1)}{G(M+2n+1)}e^{-\frac{M}{2}x}x^{2Mn}\sum_{\lambda \in 
\mathbb{Y}_{2n,M}}x^{-\lvert \lambda \rvert }s_{\lambda }(1^{2n})s_{\lambda
^{\prime }}(1^{M+2n})\prod_{j=1}^{M}\Gamma \left( {\lambda ^{\prime }}%
_{j}-j+M+1\right) .
\end{equation*}%
The sum runs over partitions contained in a $2n\times M$ rectangle, thus the
overall factor $x^{2Mn}$ guarantees that only non-negative powers of $x$ are
included. Explicitly, expanding the exponential and rearranging the terms,
we find%
\begin{align}
f_{2n}(x)=\frac{G(2n+1)}{G(M+2n+1)}& \sum_{k=0}^{\infty }x^{k}\left\{
\sum_{m=0}^{k}\sum_{\substack{ \mu \in \mathbb{Y}_{2n,M} \\ \lvert \mu
\rvert =m}}\left( -\frac{M}{2}\right) ^{k-m}\frac{1}{(k-m)!}\right.   \notag
\\
& \hspace{-12pt} \times \left. s_{(M^{2n})\setminus \mu }(1^{2n})s_{((2n)^{M})\setminus \mu
^{\prime }}(1^{2n+M})~\prod_{j=1}^{M}\Gamma \left( 2n+M+1-j-\mu
_{M-j+1}^{\prime }\right) \right\} .  \label{eq:F2nSchurExp}
\end{align}%
We have replaced the sum over $\lambda $ by a sum over partitions $\mu $
such that $\lambda $ is realised subtracting $\mu $, rotated by $180^{\circ }
$, from the bottom-right corner of the $2n\times M$ rectangular partition $%
(M^{2n})$. Then, $\lambda ^{\prime }$ is realised by subtracting $\mu
^{\prime }$ to the rectangular partition $((2n)^{M})$ in the same way. For
example, for $M=6$ and $2n=4$, 
\begin{equation*}
\begin{aligned} & \begin{ytableau} \ & \ & \ & \ & \ & \  \\ \ & \ & \ & \ &
\ & \  \\ \ & \ & \ & \ & \ & *(teal) \ \\ \ & \ & \ & *(teal) \ & *(teal) \
& *(teal) \ \end{ytableau}\\ & \lambda = (6^2,5,3) \\ &
\begin{color}{teal}\mu = (3,1)\end{color} \end{aligned}\qquad %
\begin{aligned} & \xrightarrow{ \quad \prime \quad } \\ &\hspace{1cm}
\end{aligned}\qquad \begin{aligned} & \begin{ytableau} \ & \ & \ & \  \\ \ &
\ & \ & \  \\ \ & \ & \ & \  \\ \ & \ & \ & *(teal) \ \\ \ & \ & \ & *(teal)
\ \\ \ & \ & *(teal) \ & *(teal) \  \end{ytableau}\\ & \lambda^{\prime}
=(4^3,3^2,2) \\ & \begin{color}{teal}\mu^{\prime} = (2,1^2)\end{color}
\end{aligned}
\end{equation*}

In particular, writing as in \cite{MLA} 
\begin{equation*}
f_{2n} (x) = f_{2n} (0) \left[ 1 + b_1 x + b_2 x^2 + O(x^3) \right] ,
\end{equation*}
from \eqref{eq:F2nSchurExp} we find 
\begin{align*}
f_{2n} (0) &= s_{((2n)^{M})} (1^{M+2n}) \\
&= G(M+1) \frac{ G(4n+M+1) G(2n+1)^2 }{ G(4n+1) G(2n+M+1)^2 }
\end{align*}
and 
\begin{align*}
b_1 &= \begin{color}{teal}\overbrace{\begin{color}{black}- \frac{M}{2}
\end{color}}^{\mu = \emptyset} \end{color} + \begin{color}{teal} \overbrace{
\begin{color}{black} \frac{1}{2n} \underbrace{ \frac{ s_{(M^{2n}) \setminus
(1) } (1^{2n}) s_{((2n)^M) \setminus (1) } (1^{M+2n}) }{ s_{((2n)^{M})}
(1^{M+2n}) } }_{ = Mn } \end{color}}^{\mu = (1)} \end{color} \ = 0 ,
\end{align*}
\begin{align*}
b_2 &= \begin{color}{teal}\overbrace{\begin{color}{black} \frac{M^2}{8}
\end{color}}^{\mu = \emptyset} \end{color} \begin{color}{teal}\overbrace{%
\begin{color}{black} - \frac{M}{2} \cdot \frac{1}{2n} \underbrace{ \frac{
s_{(M^{2n}) \setminus (1) } (1^{2n}) s_{((2n)^M) \setminus (1) } (1^{M+2n})
}{ s_{(2n)^{M}} (1^{M+2n}) } }_{ = Mn } \end{color}}^{\mu = (1)} \end{color}
\\
& \begin{color}{teal}\overbrace{\begin{color}{black} + \frac{1}{2n (2n+1)}
\underbrace{ \frac{ s_{(M^{2n}) \setminus (2) } (1^{2n}) s_{((2n)^M)
\setminus (1^2) } (1^{M+2n}) }{ s_{(2n)^{M}} (1^{M+2n}) } }_{ = M(M-1)
\frac{n(2n+1)^2}{4(4n+1)} } \end{color}}^{\mu = (2)} \end{color} \\
& \begin{color}{teal}\overbrace{\begin{color}{black} + \frac{1}{2n (2n-1)}
\underbrace{ \frac{ s_{(M^{2n}) \setminus (1^2) } (1^{2n}) s_{((2n)^M)
\setminus (2) } (1^{M+2n}) }{ s_{(2n)^{M}} (1^{M+2n}) } }_{ = M(M+1)
\frac{n(2n-1)^2}{4(4n-1)} } \end{color}}^{\mu = (1^2)} \end{color} \ = - 
\frac{M(M+4n)}{ 8(4n+1)(4n-1) }
\end{align*}
in perfect agreement with \cite{MLA}.\par
%To sum up, we have given the Schur expansion \eqref{eq:F2nSchurExp}, which is a new, closed form expression for $f_{2n} (x)$. Besides, we have shown its agreement with other expressions in the literature at the first few orders in $x$.\par
\medskip
The tau function of Painlev\'{e} V, and thus $f_{2n}$, admits an
expansion in conformal blocks \cite{LisovyyCB}, involving a differently organized sum over partitions. Comparing our formula with the conformal
block expansion, we make the following observations.

\begin{itemize}
\item In the spirit of Theorem \ref{thm:main}, we get a sum over a single
partition, as opposed to the sum over pairs of partitions in the conformal
block expansion.

\item The Wronskian presentation \eqref{eq:BHWronskian} makes it manifest
that the dependence on $x$ must be a polynomial times the exponential factor 
$e^{- \frac{M}{2} x}$. This property is explicit in the Schur expansion %
\eqref{eq:F2nSchurExp}, as opposed to the conformal block expansion. Indeed,
the sum over partitions in \eqref{eq:F2nSchurExp} terminates.

\item The conformal block expansion makes the dependence on the parameters
of the Painlev\'{e} equation more manifest.
\end{itemize}

In summary, our expansion is meant for the pure Wronskian part, whereas the
conformal block and related expansions \cite{LisovyyCB,MLA} are suitable for
Painlev\'{e} transcendents and tau functions.

\section{Kernels corresponding to $q$-ensembles}
\label{sec:qdefCDK}

Throughout the present section we compute the coefficients in the Schur
expansion of the kernel for a selected class of $q$-ensembles. The
definition of $q$-ensemble is the one put forward in \cite{MCI}, namely, a
standard random matrix ensemble whose weight function is such that its
associated orthogonal polynomials are $q$-deformed. We will focus on the
Stieltjes--Wigert ensemble, which has multitude of physical applications 
\cite{ForresterCrystal,Tierz:2002jj,Tierz:2017nvl,Santilli:2018ilo,ForrSW},
and its one-parameter generalization, the $q$-Laguerre ensemble, which of
course is also a one-parameter generalization of the Laguerre ensemble
studied above.

\subsection{Stieltjes--Wigert ensemble}

Consider the Stieltjes--Wigert~ensemble, whose weight function is $w(z)= 
\frac{1}{\sqrt{2 \pi g}} e^{- \frac{1}{2g} (\ln z)^2} 1\!\!1_{0<z<\infty}$, $%
g>0$. It is customary to introduce the $q$-parameter $q= e^{-g}$. To set the
notation, we define the symmetric $q$-number as 
\begin{equation}  \label{eq:defqnumber}
\left[ z \right]_{q} = \frac{q^{-z/2} - q^{z/2}}{q^{- 1/2} - q^{1/2}} .
\end{equation}
The $q$-dimension of a $U(M)$ representation labelled by a partition $\mu$
is 
\begin{equation}  \label{eq:defqdim}
\dim_q \mu = \prod_{1 \le j < k \le M } \frac{ \left[ \mu_j -j - \mu_k +k %
\right]_q }{ \left[ j-k \right]_q } .
\end{equation}

\begin{proposition}
\label{prop:SWSchur} Consider the Stieltjes--Wigert~ensemble of $M$
variables and let $\mu$ be a partition with $\ell (\mu) \le M$. Then 
\begin{equation}  \label{eq:SWSchur}
\langle s_{\mu} \rangle_{\mathrm{SW }} = q^{ - \frac{1}{2}
\sum_{j=1}^{M} \mu_j \left( \mu_j + 3 M +1 - 2 j \right) }~ \dim_q \mu ,
\end{equation}
\end{proposition}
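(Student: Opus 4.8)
The plan is to reduce the Schur average to a ratio of Hankel-type determinants built from the log-normal moments, and then to factor that determinant into a $q$-Vandermonde whose structure reproduces $\dim_q\mu$ together with the Gaussian prefactor $q^{-\frac12\sum_j\mu_j(\mu_j+3M+1-2j)}$. First I would write the average in integral form
\begin{equation*}
\langle s_\mu\rangle_{\mathrm{SW}} = \mathcal{N}_M\int_{(0,\infty)^M}\Delta_M(z)^2\, s_\mu(z)\prod_{j=1}^M w(z_j)\,\dd z_j,
\end{equation*}
and insert the bialternant formula $s_\mu(z)=\det(z_i^{\mu_j+M-j})_{1\le i,j\le M}/\Delta_M(z)$. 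One factor of $\Delta_M$ cancels, and writing the remaining Vandermonde as $\Delta_M(z)=\det(z_i^{M-k})_{i,k}$ I would apply the Andr\'{e}ief (Heine) identity to convert the integral of a product of two determinants into a single determinant of moments; since $\mathcal{N}_M=\tfrac{1}{M!\,\mathcal{Z}_M}$, the factor $M!$ from Andr\'{e}ief cancels and leaves
\begin{equation*}
\langle s_\mu\rangle_{\mathrm{SW}} = \frac{1}{\mathcal{Z}_M}\det\left(\int_0^\infty z^{\mu_j+2M-j-k}\,w(z)\,\dd z\right)_{1\le j,k\le M}.
\end{equation*}

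Next I would compute the moments. The substitution $z=e^{u}$ turns $w$ into a Gaussian and gives $\int_0^\infty z^m\,w(z)\,\dd z = q^{-(m+1)^2/2}$, so the $(j,k)$ entry equals $q^{-(a_j-k)^2/2}$ with $a_j:=\mu_j+2M-j+1$. Specialising to $\mu=0$ identifies $\mathcal{Z}_M$ with the same determinant evaluated at $a_j^{(0)}:=2M-j+1$, so that $\langle s_\mu\rangle_{\mathrm{SW}}$ becomes the ratio $D(\mu)/D(0)$ of the two Gaussian determinants. Expanding $(a_j-k)^2=a_j^2-2a_jk+k^2$ lets me pull a row factor $q^{-a_j^2/2}$ and a column factor $q^{-k^2/2}$ out of $D(\mu)$; the column factors are common to $D(\mu)$ and $D(0)$ and cancel in the ratio, leaving $\det(q^{a_jk})_{j,k}=\bigl(\prod_j q^{a_j}\bigr)\prod_{1\le j<k\le M}(q^{a_k}-q^{a_j})$, a Vandermonde in the variables $q^{a_j}$.

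The final and most delicate step is the simplification of this $q$-Vandermonde together with the bookkeeping of the powers of $q$. Writing $q^{a_k}-q^{a_j}=q^{(a_j+a_k)/2}(q^{-1/2}-q^{1/2})\,[a_j-a_k]_q$ and forming the ratio against the $\mu=0$ product, the factors $(q^{-1/2}-q^{1/2})$ cancel pairwise; since $a_j-a_k=\mu_j-j-\mu_k+k$ while the $\mu=0$ differences reproduce the denominator $[k-j]_q$, the surviving ratio of $q$-numbers assembles exactly into $\dim_q\mu$. It then remains to collect the scalar exponents: the row prefactors contribute $-\tfrac12\sum_j\mu_j(\mu_j+4M-2j)$, while the factors $q^{(a_j+a_k)/2}$ of the Vandermonde contribute $\tfrac{M-1}{2}\sum_j\mu_j$ upon using $\sum_{j<k}(\mu_j+\mu_k)=(M-1)\sum_j\mu_j$; adding the two yields precisely $-\tfrac12\sum_j\mu_j(\mu_j+3M+1-2j)$. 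I expect the main obstacle to be exactly this exponent accounting, along with the care needed to match the ordering convention implicit in $[j-k]_q$ so that the ratio of $q$-numbers reproduces $\dim_q\mu$ on the nose.
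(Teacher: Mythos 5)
Your proposal is correct and follows essentially the same route as the paper's proof in Appendix \ref{app:SchurqSW}: Andr\'eief reduces the average to a ratio of moment determinants, the log-normal moments $\mathfrak{m}_p^{\mathrm{SW}}=q^{-(p+1)^2/2}$ let you strip Gaussian row and column factors, and the surviving $q$-Vandermonde ratio assembles into $\dim_q\mu$ times the stated power of $q$. The exponent bookkeeping checks out (your row contribution $-\tfrac12\sum_j\mu_j(\mu_j+4M-2j)$ correctly absorbs the $\prod_j q^{a_j}$ factor), and the sign/ordering subtlety you flag in $[j-k]_q$ is a convention issue already present in the paper's definition \eqref{eq:defqdim}, not a gap in your argument.
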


This result already appears in \cite{Dolivet:2006ii}, and we give an
alternative proof in Appendix \ref{app:SchurqSW}.\par
\medskip 
It is worth mentioning that, as for the Laguerre ensemble discussed in Section \ref{sec:CBExp}, the diagonal limit of the
multidimensional Stieltjes--Wigert~kernel can be related to other objects
previously studied in the literature. In this case, with the partition function of a model of fermions with a large
non-Abelian symmetry \cite{Anninos:2016klf,Tierz:2017nvl,DGMT}. Indeed, the diagonal limit
of the kernel is proportional to the Wronskian of Stieltjes--Wigert polynomials, which was shown in \cite{Tierz:2017nvl} to compute the partition function $\mathcal{Z}_{M,2n}^{\mathrm{fermion}}(x)$ of the fermion model. Therefore, we have the character expansion 
\begin{align*}
\mathcal{Z}_{M,2n}^{\mathrm{fermion}}(x)& =\frac{1}{M!}%
\int_{(0,\infty )^{M}}\Delta (z)^{2}\prod_{j=1}^{M}\left( x-z_{j}\right)
^{2n}e^{-\frac{1}{2g}(\log z_{j})^{2}}\frac{\mathrm{d}z_{j}}{\sqrt{2\pi g}}
\\
& =\frac{\prod_{j=0}^{N-n-1}h_{j}}{\prod_{j=N-L}^{N-1}h_{j}}K_{N}^{(n)}(%
\underbrace{x,\dots ,x}_{n};\underbrace{x,\dots ,x}_{n})|_{\mathrm{SW}} \\
& =\mathcal{Z}_{M} ^{\mathrm{SW}} (q) ~ \sum_{\lambda \in \mathbb{Y}_{2n,M}}x^{2nM-\lvert \lambda \rvert
} ~ q^{-\frac{3M+1}{2} \lvert \lambda \rvert + \sum_{j=1}^{M} \left( - \frac{1}{2} (\lambda_j ^{\prime} )^2 + j \lambda ^{\prime} _j \right)   } ~ \dim \lambda  \dim_q \lambda ^{\prime },
\end{align*}%
where $x$ is a spectral parameter in the theory \cite{Tierz:2017nvl,DGMT}, $\dim \lambda = s_{\lambda} (1^{2n})$ is the dimension of the $U(2n)$ representation labelled by $\lambda$ and 
\begin{equation*}
\mathcal{Z}_{M} ^{\mathrm{SW}} (q) =\left[ \prod_{j=1}^{M-1}\Gamma _{q}\left( 1+j\right) \right] \left(
1-q\right) ^{\frac{M(M-1)}{2}}q^{-\frac{M(M^{2}-1)}{6}} 
\end{equation*}
is the Stieltjes--Wigert partition function. In this way, the setup of the present work provides a new interpretation of the physically meaningful quantity $\mathcal{Z}_{M,2n}^{\mathrm{fermion}}(x)$, together with its character expansion, in the spirit of \cite{Santilli:2020snh}. Being the Schur expansion of $\mathcal{Z}_{M,2n}^{\mathrm{fermion}}(x)$ a generating polynomial in the fugacity $x$, the expression is especially well-suited for comparison with analogous observables in distinct theories, for instance with the aim of testing dualities.\par
\medskip
Next we discuss the case of the $q$-Laguerre ensemble, which has not been studied before in this context and has the interesting feature that it
generalizes the Stieltjes--Wigert ensemble (a $q$-ensemble) while also being
a $q$-deformation of the classical Laguerre ensemble. We will be checking
out these generalizing features of the ensemble, by taking the appropriate limits.

\subsection{$q$-Laguerre ensemble}
\label{sec:qLaguerre}

Consider the $q$-Laguerre ensemble, with weight function $w(z)= \frac{%
z^{\alpha}}{ (-(1-q)z;q)_{\infty}}$, with $(\cdot ;q)_{\infty}$ the $q$-Pochhammer symbol and $0<q<1$. The corresponding orthogonal polynomials
have been constructed by Moak \cite{Moak}, thus the kernel can be obtained
explicitly from the Christoffel--Darboux~formula.

Here we instead compute the average of a Schur polynomial in the $q$-Laguerre ensemble, providing a $q$-deformation of the result in Proposition %
\ref{prop:LUESchur}. Note that our definition of $q$-Laguerre weight is as
in \cite{Moak}, and differs from \cite[Sec.14.2]{Koekoek} and 
\cite{Christiansen1} by a normalization factor, as in these references the denominator is $(-z;q)_{\infty }$.

\begin{proposition}
\label{prop:qLagSchur} Consider the $q$-Laguerre ensemble of $M$ variables
and let $\mu$ be a partition with $\ell (\mu) \le M$. Then 
\begin{align}  \label{eq:qLagSchur}
\langle s_{\mu} \rangle_{q\mathrm{LUE}} & = q^{ - \frac{1}{2}
\sum_{j=1}^{M} \mu_j \left( M -1 \right) } ~ \dim_q \mu \\
& \times \prod_{j=1} ^{M} \frac{ \Gamma \left( \alpha +1 + \mu_j + M -j
\right) \Gamma (- \alpha - \mu_j - M +j)}{ \Gamma (\alpha +j ) \Gamma (-
\alpha -j +1) } \frac{ \Gamma_q (- \alpha -j +1) }{ \Gamma_q (- \alpha -
\mu_j - M +j) } .  \notag
\end{align}
\end{proposition}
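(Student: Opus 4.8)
The plan is to evaluate $\langle s_{\mu} \rangle_{q\mathrm{LUE}}$ directly from its integral representation, following the same strategy used in the classical Laguerre case (Proposition \ref{prop:LUESchur}) but now keeping track of the $q$-deformation. First I would write the average explicitly as
\begin{equation*}
\langle s_{\mu} \rangle_{q\mathrm{LUE}} = \mathcal{N}_{M} \int_{(0,\infty)^M} \Delta_M(z)^2 \, s_{\mu}(z) \prod_{j=1}^{M} \frac{z_j^{\alpha}}{(-(1-q)z_j;q)_{\infty}} \, \dd z_j ,
\end{equation*}
and invoke the combinatorial formula $s_{\mu}(z) = \det(z_i^{\mu_j + M - j}) / \Delta_M(z)$. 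This turns one factor of the Vandermonde into the Schur numerator, so that the integrand becomes an antisymmetrized product $\det(z_i^{\mu_j + M - j}) \cdot \det(z_i^{M-k})$ times the weight, which by the Andréief/Heine identity collapses to a single $M \times M$ determinant of moments $\int_0^\infty z^{a} w(z) \, \dd z$. The core input is therefore the moment integral of the $q$-Laguerre weight, which is a $q$-Gamma-type evaluation of the Moak measure \cite{Moak}; I would express it in closed form using both $\Gamma$ and $\Gamma_q$ functions, explaining the mixed appearance of classical and $q$-deformed Gamma functions in the stated answer.

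Next I would reduce the determinant of moments. The key observation is that the entries factorize into a partition-independent piece and a $q$-power piece whose exponent is quadratic in the row index; the quadratic part is what produces the prefactor $q^{-\frac{1}{2}\sum_j \mu_j(M-1)}$ after symmetrization, while the remaining determinant is a deformation of a Cauchy/Vandermonde-type determinant. I would pull the $j$-dependent normalization constants out of each column, reducing the problem to a determinant whose evaluation gives precisely $\dim_q \mu$ — exactly as the $q$-dimension \eqref{eq:defqdim} arises in the Stieltjes--Wigert computation of Proposition \ref{prop:SWSchur}. In fact, since the $q$-Laguerre weight degenerates to the Stieltjes--Wigert weight in a suitable scaling limit, the structural parallel with \eqref{eq:SWSchur} strongly suggests that the same $q$-deformed Weyl denominator manipulation applies here, with the $\dim_q \mu$ factor emerging from the $q$-number ratios.

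The main obstacle I anticipate is the clean evaluation of the basic moment integral against the Moak weight and, relatedly, organizing the bookkeeping of the two distinct Gamma functions so that the ratio $\frac{\Gamma(\alpha+1+\mu_j+M-j)\,\Gamma(-\alpha-\mu_j-M+j)}{\Gamma(\alpha+j)\,\Gamma(-\alpha-j+1)} \cdot \frac{\Gamma_q(-\alpha-j+1)}{\Gamma_q(-\alpha-\mu_j-M+j)}$ appears in the stated normalized form. The appearance of \emph{both} $\Gamma$ and $\Gamma_q$ signals that the moment is naturally a product of a $q$-deformed Barnes/beta-type factor and a reflection term; isolating the $\mu$-dependence cleanly, and verifying that the spurious $\mu$-independent factors cancel against $\mathcal{N}_M = 1/(M!\,\mathcal{Z}_M)$, is where the careful work lies. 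As a consistency check I would confirm two limits: letting $q \to 1$ should recover Proposition \ref{prop:LUESchur} (the classical LUE result), and the appropriate Stieltjes--Wigert limit should reproduce \eqref{eq:SWSchur}, since the $q$-Laguerre ensemble generalizes both. I expect the full computation to be relegated to Appendix \ref{app:ComputeSchur}, as stated in the introduction.
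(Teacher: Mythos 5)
Your proposal follows essentially the same route as the paper's own proof in Appendix \ref{app:SchurqLUE}: the bialternant formula plus Andr\'eief's identity reduces the average to a ratio of moment determinants, Moak's evaluation of the $q$-Laguerre moments supplies the mixed $\Gamma$/$\Gamma_q$ entries, and extraction of row and column factors leaves a ratio of Vandermonde determinants on an exponential lattice that is identified with $\dim_q \mu$ by the same manipulation used in the Stieltjes--Wigert case. The only small discrepancy is that the prefactor $q^{-\frac{1}{2}\sum_j \mu_j (M-1)}$ actually emerges from that final Vandermonde-to-$q$-dimension conversion (via \eqref{eq:ProductToQDim}) rather than from a quadratic $q$-exponent in the moments themselves, but this is bookkeeping rather than a gap.
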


The result \eqref{eq:qLagSchur} is new, to our knowledge, and we give a
proof in Appendix \ref{app:SchurqLUE}. It is straightforward to see that %
\eqref{eq:qLagSchur} converges to \eqref{eq:LUESchur} in the $q \to 1$ limit.
 io 
For $\alpha \notin \mathbb{Z}$ the expression \eqref{eq:qLagSchur} may be
simplified using Euler's reflection property $\Gamma (1-z) \Gamma (z) = 
\frac{\pi}{\sin (\pi z)}$, $z \notin \mathbb{Z}$. In turn, for $\alpha \in 
\mathbb{N}_0$, property \eqref{eq:alphashiftLUE} carries over directly to
the $q$-deformed setting.\par
\medskip
It has been shown by Askey \cite{Askey1} that the $q$-Laguerre
polynomials, upon changing the weight to $w(z)= \frac{z^{\alpha}}{
(-z;q)_{\infty}}$ and scaling the variable $z \mapsto q^{-\alpha} z$,
converge to the Stieltjes--Wigert~polynomials in the $\alpha \to \infty$
limit. We now discuss the implications for the kernel as seen from the Schur
expansion.

\begin{enumerate}[(i)]

\item \label{AskeyStep1} First, we pass from Moak's \cite{Moak} to Askey's 
\cite{Askey1} normalization. We can either

\begin{itemize}
\item redo the computations in Appendix \ref{app:SchurqLUE} with modified
moments, or

\item we can simply notice that, when computing the average $\langle s_{\mu}
\rangle $ all prefactors will cancel against the normalization except for
the one coming from a rescaling of the variables in $s_{\mu} (z)$.
\end{itemize}
In both ways we find that we must include a term $(1-q)^{\lvert \mu \rvert}$
in the average of the Schur polynomial.

\item We then take the large $\alpha$ limit of the $\alpha$-dependent but $q$%
-independent part in \eqref{eq:qLagSchur}. It is convenient to take the
limit $\alpha \to \infty$ with $\alpha \notin \mathbb{Z}$, in which case 
\begin{align}
\lim_{\alpha \to \infty} \prod_{j=1} ^{M} \frac{ \Gamma \left( \alpha +1 +
\mu_j + M -j \right) \Gamma (- \alpha - \mu_j - M +j)}{ \Gamma (\alpha +j )
\Gamma (- \alpha -j +1) } & = \lim_{\alpha \to \infty} \frac{ (-1) \sin \pi
(\alpha + j) }{ \sin \pi (\alpha + \mu_j -j + M) }  \notag \\
& = \prod_{j=1} ^{M} (-1)^{\mu_j + M -1} = (-1)^{\lvert \mu \rvert}
\label{eq:signfromGammaration}
\end{align}

\item We use basic properties of the $\Gamma_q$ function to rewrite %
\eqref{eq:qLagSchur} in a more suitable form: 
\begin{equation}  \label{eq:rewritingQGamma}
\prod_{j=1} ^{M} \frac{ \Gamma_q (- \alpha -j +1) }{ \Gamma_q (- \alpha -
\mu_j - M +j) } = (-q)^{- \lvert \mu \rvert} \prod_{j=1} ^{M} \frac{
\Gamma_{q^{-1}} (\alpha + \mu_j -j + M ) }{ \Gamma_{q^{-1}} ( \alpha +j -1) }
.
\end{equation}

\item To take $\alpha \to \infty$ in the product in %
\eqref{eq:rewritingQGamma}, we use Moak's $q$-analogue of Stirling's formula 
\cite{Moak2}. It gives 
\begin{align}
\prod_{j=1} ^{M} \frac{ \Gamma_{q^{-1}} (\alpha + \mu_j -j + M ) }{
\Gamma_{q^{-1}} ( \alpha +j -1) } & \approx (1-q^{-1})^{-\lvert \mu \rvert }
\prod_{j=1} ^{M} \frac{ (-q) ^{- \left[ (\alpha + \mu_j -j +M)(\alpha +
\mu_j -j +M- \frac{1}{2} ) - \frac{1}{2} (\alpha + \mu_j -j +M)^2 \right] } 
}{ (-q) ^{- \left[ (\alpha + j -1)(\alpha + j - \frac{3}{2} ) - \frac{1}{2}
(\alpha + j -1)^2\right] } }  \notag \\
&= (-1)^{\lvert \mu \rvert} (1-q^{-1})^{-\lvert \mu \rvert } q^{- \frac{1}{2}
\sum_{j=1} ^{M} \mu_j \left( \mu_j +2M -2j -1 \right) } q^{ - \alpha \lvert
\mu \rvert} .
\end{align}
The overall sign $(-1)^{\lvert \mu \rvert} $ cancels against %
\eqref{eq:signfromGammaration}, while combining the factor $%
(1-q^{-1})^{-\lvert \mu \rvert }$ with $(-q)^{\lvert \mu \rvert}$ from %
\eqref{eq:rewritingQGamma} cancels the factor from the change of weight (cf.
step \eqref{AskeyStep1}).

\item Putting all the pieces together and comparing with \eqref{eq:SWSchur},
we arrive at 
\begin{equation}  \label{eq:LUESchurToSWSchur}
\lim_{\alpha \to \infty} q^{(\alpha -1)\lvert \mu \rvert} \left\langle s_{\mu} \right\rangle_{q\mathrm{LUE}} =
 \left\langle s_{\mu} \right\rangle_{\text{\textrm{SW}}} .
\end{equation}

\item So far we have discussed the $\alpha \to \infty$ limit of the Schur
average alone. We additionally impose the scaling $(x_1, \dots, x_n; y_1,
\dots, y_n) \mapsto ( q^{1-\alpha} x_1, \dots, q^{1-\alpha} x_n;
q^{1-\alpha} y_1, \dots, q^{1-\alpha} y_n)$, so that a factor $q^{(\alpha-1)
\lvert \lambda \rvert}$ comes from the Schur polynomial, cancelling the
prefactor in \eqref{eq:LUESchurToSWSchur}.\footnote{%
We need to take the scaling $x \mapsto q^{1-\alpha } x$, instead of Askey's $%
x \mapsto q^{-\alpha } x$. This seems to be due to the fact that Askey
showed convergence to a \emph{different} weight with same associated system
of orthogonal polynomials.}
\end{enumerate}

\section{Kernels on $\mathbb{C}$ and generalizations of Dotsenko--Fateev
integrals}
\label{sec:CKernels}

The Schur expansion can be applied to reproducing kernels on $\mathbb{C}$,
such as Bergman kernels and kernels on Bargmann--Fock spaces. Because kernels
in a complex space setting are, in quite a few instances, of a very simple
nature, in those cases the type of expansion discussed above will not yield
an alternative expression. We show this explicitly with the
straightforward case of the Ginibre ensemble. Even in more complicated
scenarios, as for example with polyanalytic Bargmann--Fock spaces \cite%
{poly1,poly2}, in some cases (but certainly not all) the kernel itself may be as simple as a single Laguerre polynomial \cite{poly1,poly2}.\par
For non-trivial results, we focus here on other models, and we discuss thoroughly a Dotsenko--Fateev ensemble as our main example.\par
\medskip
The starting point of our analysis is the representation of the kernel as the average of a
product of characteristic polynomials, that generalizes the real case. We
will need a particular case of a theorem by Akemann and Vernizzi \cite{AV}.

To set the stage, let $(x_1, \dots ,x_n) \in \mathbb{C}^n$ and $(y_1, \dots
,y_n) \in \mathbb{C}^n$, $n \ge 1$. The $x_i$ will be related to the
holomorphic sector and the $y_j$ to the anti-holomorphic sector. As a
consequence, the kernel will depend holomorphically on the $x_i$ and
anti-holomorphically on $y_j$. The notation \eqref{eq:defmultikernel} is
then extended to the complex case as 
\begin{equation*}
K_{N} ^{(n)} (x_1, \dots, x_n; \bar{y}_1, \dots, \bar{y}_n ) = \frac{1}{%
\Delta_n (x) \Delta_n (\bar{y}) } \det_{1 \le i,j \le n} \left[ K_N (x_i, 
\bar{y}_j) \right] .
\end{equation*}
The extension of the integral representation \eqref{eq:KNint} to complex
ensembles is as follows \cite{AV}.

\begin{proposition}[Akemann--Vernizzi \protect\cite{AV}]
With the notation as in Section \ref{sec:GeneralSchurExp}, 
\begin{align*}
K_{N} ^{(n)} (x_1, \dots, x_n; \bar{y}_1, \dots, \bar{y}_n ) = \frac{ 
\mathcal{N}_{M} }{ \prod_{j= M}^{N-1} h_j } \int \Delta_{M} (z) \Delta_{M} (%
\bar{z} ) \prod_{j=1} ^{M} \left[ \prod_{i=1}^{n} \left( x_i -z_j \right)
\left( \bar{y}_i - \bar{z}_j \right) \right] w (z_j , \bar{z}_j) \mathrm{d}
z_j \mathrm{d} \bar{z}_j .
\end{align*}
\end{proposition}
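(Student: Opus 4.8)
The plan is to read the right-hand side as $\bigl(\prod_{j=M}^{N-1}h_j\bigr)^{-1}$ times the average, in the $M$-eigenvalue complex ensemble with normalized weight $\mathcal{N}_M\,\Delta_M(z)\Delta_M(\bar z)\prod_j w(z_j,\bar z_j)$, of the mixed product of characteristic polynomials $\prod_{j=1}^M\prod_{i=1}^n (x_i-z_j)(\bar y_i-\bar z_j)$, and then to show that this average equals $\Delta_n(x)^{-1}\Delta_n(\bar y)^{-1}\prod_{j=M}^{N-1}h_j\,\det_{1\le i,j\le n}K_N(x_i,\bar y_j)$; dividing by $\prod_{j=M}^{N-1}h_j$ and invoking the definition \eqref{eq:defmultikernel} then yields the claim. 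Throughout I use the planar bi-orthogonality $\int_{\mathbb{C}} P_a(z)P_b(\bar z)\,w(z,\bar z)\,\dd z\,\dd\bar z = h_a\delta_{ab}$ of the monic polynomials $P_a$ associated with $w$, which is precisely the property through which $\mathcal{Z}_M=\prod_{j=0}^{M-1}h_j$ is computed, and the planar kernel $K_N(x,\bar y)=\sum_{k=0}^{N-1}P_k(x)P_k(\bar y)/h_k$.

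First I would absorb the characteristic polynomials into the Vandermonde factors. Since $\Delta_M(z)\prod_{i=1}^n\prod_{j=1}^M(x_i-z_j)=\Delta_N(z_1,\dots,z_M,x_1,\dots,x_n)/\Delta_n(x)$, and likewise in the anti-holomorphic sector with $\bar z,\bar y$, the integral becomes $\Delta_n(x)^{-1}\Delta_n(\bar y)^{-1}\int \Delta_N(z,x)\,\Delta_N(\bar z,\bar y)\prod_{j=1}^M w(z_j,\bar z_j)\,\dd z_j\,\dd\bar z_j$, which already exposes the required $\Delta_n$ prefactor. Next I rewrite each enlarged Vandermonde as $\det[P_{r-1}(\,\cdot\,)]_{1\le r\le N}$ by the usual column operations that pass from powers to monic polynomials, and integrate out the $M$ bulk variables $z_1,\dots,z_M$ by the Andr\'{e}ief (Gram) identity. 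Expanding both $N\times N$ determinants by Leibniz and using the bi-orthogonality, the integration over each $z_l$ forces the holomorphic and anti-holomorphic degree labels in that column to coincide, contributing a factor $h$, so that only the degrees consumed by the bulk survive as an index set $S$ of size $M$, while the remaining $n$ degrees decorate the external variables $x_i,\bar y_j$.

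The final step is to resum over $S$. Writing the complementary degree set and applying Cauchy--Binet in reverse, the surviving double determinant reassembles into $\det_{1\le i,j\le n}K_N(x_i,\bar y_j)$, the bulk $h$-factors organising into $\prod_{k=0}^{N-1}h_k$ divided by the kernel's own $1/h$ weights; together with the factor $M!$ from Andr\'{e}ief and the explicit $\mathcal{N}_M=1/(M!\,\mathcal{Z}_M)$ of \eqref{eq:defnormN}, all $h_j$ cancel cleanly against $\prod_{j=M}^{N-1}h_j$ to leave precisely $K_N^{(n)}$. The main obstacle is the combinatorial and sign bookkeeping of this partial Andr\'{e}ief step: one must track how the signs of the two Leibniz permutations factor through the choice of the consumed subset $S$, verify that the leftover $n\times n$ block is genuinely the Cauchy--Binet expansion of $\det[K_N(x_i,\bar y_j)]$ rather than a signed variant, and confirm the normalization flow $\prod_{k=0}^{N-1}h_k/\mathcal{Z}_M=\prod_{k=M}^{N-1}h_k$. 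A secondary point is that, for weights such as the Dotsenko--Fateev one, the planar polynomials need not be rotationally invariant, so the argument must rely only on the bi-orthogonality $\int P_a(z)P_b(\bar z)\,w=h_a\delta_{ab}$ and never on a monomial simplification $P_a(z)=z^a$.
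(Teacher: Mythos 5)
Your argument is correct. Note that the paper does not actually prove this proposition: it is quoted from Akemann--Vernizzi \cite{AV}, so there is no internal proof to compare against; your route (absorb the characteristic polynomials into enlarged Vandermondes, pass to monic bi-orthogonal polynomials, integrate out the $M$ bulk variables by a partial Andr\'eief step, and resum over the complementary degree set by Cauchy--Binet) is the standard one and is essentially how such identities are established in \cite{AV} and in Br\'ezin--Hikami. The bookkeeping you flag does close: the shuffle sign of the consumed subset $S$ against its complement enters both Leibniz expansions and hence squares to $+1$, the bulk factors give $\prod_{k\in S}h_{k-1}=\prod_{k=0}^{N-1}h_{k-1}/\prod_{k\in S^{c}}h_{k-1}$ whose denominator supplies exactly the $1/h_a$ weights in the Cauchy--Binet expansion of $\det[K_N(x_i,\bar y_j)]$, and $M!\,\mathcal{N}_M\prod_{k=0}^{N-1}h_k=\prod_{k=M}^{N-1}h_k$ cancels the prefactor, using $\mathcal{Z}_M=\prod_{j=0}^{M-1}h_j$. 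The only hypothesis you should state explicitly is the existence of the monic bi-orthogonal family for the planar weight (equivalently, nonvanishing of the moment determinants), but that is already implicit in the definition of $K_N$ itself.
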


As opposed to the real case, the Schur expansion of the kernel on $\mathbb{C}
$ will require two Schur polynomials: one for the holomorphic and one for
the anti-holomorphic sector. Therefore, the Schur expansion is akin to
Theorem \ref{thm:double}. Applying the dual Cauchy identity %
\eqref{eq:dualCauchy} twice we get 
\begin{equation*}
\widehat{K}_N ^{(n)} (x_1, \dots, x_n; \bar{y}_1, \dots, \bar{y}_n ) =
\sum_{\lambda, \mu \in \mathbb{Y}_{n,M}} s_{\lambda} \left( x^{\vee} \right)
s_{\mu} \left( \bar{y}^{\vee} \right) \left\langle s_{\lambda^{\prime}} \bar{%
s}_{\mu^{\prime}} \right\rangle_w ,
\end{equation*}
where 
\begin{equation*}
\left\langle s_{\lambda^{\prime}} \bar{s}_{\mu^{\prime}} \right\rangle_w
\equiv \mathcal{N}_{M} \int \Delta_{M} (z) \Delta_{M} (\bar{z})
s_{\lambda^{\prime}} (z) s_{\mu^{\prime}} (\bar{z}) \prod_{j=1}^{N-n} w(z_j, 
\bar{z}_j) \mathrm{d} z_j \mathrm{d} \bar{z}_j .
\end{equation*}
Notice the lack of symmetry enhancement \eqref{eq:symen} in the complex
case, due to one set of variables appearing holomorphically and the other
anti-holomorphically.

\subsection{Ginibre ensemble}

The first example we consider is the complex Ginibre ensemble. The
corresponding monic orthogonal polynomials on $\mathbb{C}$ are the
monomials, $P_j (z) = z^{j}$ with $h_j = j!$. Therefore, with the notation %
\eqref{eq:defhatK}, 
\begin{equation}  \label{eq:kernelGinibre}
\widehat{K}_N (x; \bar{y} ) \vert_{\mathrm{Gin}} = (N-1)!
\sum_{j=0}^{N-1} \frac{ (x \bar{y})^{j-N +1}}{j!} .
\end{equation}
To compare with its expansion in the basis of Schur polynomials, we use \cite[Eq.(3.16)]{ForresterRains} 
\begin{equation*}
\left\langle s_{\lambda^{\prime}} \bar{s}_{\mu^{\prime}} \right\rangle_{\mathrm{Gin}} = \delta_{\lambda^{\prime} \mu^{\prime}} \prod_{j=1}
^{M} \frac{ \Gamma \left( M -j +1 + \lambda^{\prime} _j \right) }{ \Gamma
\left( M -j +1 \right) } .
\end{equation*}
The double sum reduces to a single sum, 
\begin{equation*}
\widehat{K}_N ^{(n)} (x_1, \dots, x_n; \bar{y}_1, \dots, \bar{y}_n ) \vert_{%
\mathrm{Gin}} = \sum_{\lambda \in \mathbb{Y}_{n, N-n} } s_{\lambda}
\left(x^{\vee} \right) s_{\lambda} \left( \bar{y}^{\vee} \right)
\left\langle s_{\lambda^{\prime}} \bar{s}_{\lambda^{\prime}} \right\rangle_{%
\mathrm{Gin}} .
\end{equation*}
For the particular case $n=1$, using 
\begin{equation*}
\lambda^{\prime} _j= 
\begin{cases}
1 & 1 \le j \le \lambda_1 \\ 
0 & \text{otherwise}%
\end{cases}
\quad \Longrightarrow \quad \prod_{j=1} ^{M} \frac{ \Gamma \left( N -j +
\lambda^{\prime} _j \right) }{ \Gamma \left(N-j \right) } = \prod_{j=1}
^{\lambda_1} (N-j) ,
\end{equation*}
the agreement with \eqref{eq:kernelGinibre} is immediately checked.\par
\medskip
Based on Remark \ref{rmk:betaensemble}, we can study the kernel in
the real Ginibre ensemble. We use 
\begin{equation*}
K_N  (x;y) \vert_{\mathbb{R}\mathrm{-Gin}} = (x-y) \left\langle
\det (x-M) \det (y-M^{T}) \right\rangle_{\mathbb{R}\mathrm{-Gin}}
\end{equation*}
where $M$ is a random matrix taken from the real Ginibre ensemble.
Proceeding as for the complex Ginibre ensemble, but this time using the
expansion from Theorem \ref{thm:main} instead of Theorem \ref{thm:double},
we independently reproduce the result of \cite{APS}: 
\begin{equation*}
K_N  (x;y) \vert_{\mathbb{R}\mathrm{-Gin}} = (x-y) (N-1)!
\sum_{j=0} ^{N-1} \frac{(xy)^{j}}{j!} .
\end{equation*}

\subsection{Dotsenko--Fateev ensemble}

The coefficient in the Schur expansion can be evaluated explicitly for the
weight function 
\begin{equation*}
w_{\text{DF}}(z,\bar{z})=w_{\text{JUE}}(z)w_{\text{JUE}}(\bar{z}%
)=|z|^{2\alpha }|1-z|^{2\beta }.
\end{equation*}%
The corresponding partition function is a Dotsenko--Fateev integral \cite%
{dotsenko}, which is a complex version of the Selberg integral, 
\begin{align*}
\mathcal{Z}_{M}^{\text{DF}}& =\frac{1}{M!}\int \left[ \Delta _{M}(z)\Delta
_{M}(\bar{z})\right] ^{\gamma }\prod_{j=1}^{M}w_{\text{DF}}(z_{j},\bar{z}%
_{j})\mathrm{d}z_{j}\mathrm{d}\bar{z}_{j} \\
& =\left( \mathcal{Z}_{M}^{\text{JE},\gamma }\right) ^{2}~\prod_{j=1}^{M}%
\frac{\sin \pi \left( \alpha +1+\frac{\gamma }{2}(M-j)\right) \sin \pi
\left( \beta +1+\frac{\gamma }{2}(M-j)\right) }{\sin \pi \left( \alpha
+\beta +\frac{\gamma }{2}(2M-j-1)\right) }\frac{\sin \pi \left( \frac{\gamma 
}{2}j\right) }{\sin \frac{\pi \gamma }{2}},
\end{align*}%
$\Re \gamma \geq 0$. In the second line, we have introduced the Jacobi
beta-ensemble with beta-parameter $\gamma $, in which the Vandermonde factor
appears as $\lvert \Delta _{M}(z)\rvert^{\gamma }$.\footnote{In particular, for $\gamma =1$ in the Dotsenko--Fateev integral we need the
JOE.} The corresponding partition function is the celebrated Selberg
integral \cite{Selberg,ForWar}: 
\begin{equation*}
\mathcal{Z}_{M}^{\text{JE},\gamma }=\frac{1}{M!}\prod_{j=1}^{M}\frac{\Gamma
\left( \alpha +1+\frac{\gamma }{2}(M-j)\right) \Gamma \left( \beta +1+\frac{%
\gamma }{2}(M-j)\right) }{\Gamma \left( \alpha +\beta +\frac{\gamma }{2}%
(2M-j-1)\right) }\frac{\Gamma \left( 1+\frac{\gamma }{2}j\right) }{\Gamma
\left( \frac{\gamma }{2}\right) }.
\end{equation*}

As a corollary of \cite[Thm.14]{SI} we get 
\begin{equation*}  \label{eq:SchurDF}
\left\langle s_{\lambda^{\prime}} \bar{s}_{\mu^{\prime}} \right\rangle_{%
\text{DF}} = \left\langle s_{\lambda^{\prime}} \right\rangle_{\text{JE},
\gamma} \left\langle s_{\mu^{\prime}} \right\rangle_{\text{JE}, \gamma} .
\end{equation*}
Thus, the following factorization theorem holds.

\begin{proposition}
The kernel on $\mathbb{C}$ associated to the Dotsenko--Fateev ensemble is
completely factorized into holomorphic and anti-holomorphic sector: 
\begin{equation*}
\widehat{K}_N ^{(n)} \left( x_1 , \dots , x_n ; \bar{y}_1 , \dots, \bar{y}_n
\right) \vert_{\mathrm{DF}} = \mathsf{K}_{N} ^{(n)} (x) \cdot \mathsf{%
K}_{N} ^{(n)} (\bar{y}) ,
\end{equation*}
where 
\begin{equation*}
\mathsf{K}_{N} ^{(n)} (x) \equiv \sum_{\lambda \in \mathbb{Y}_{n, N-n}}
s_{\lambda} \left(x^{\vee} \right) \left\langle s_{\lambda^{\prime}}
\right\rangle_{\mathrm{JE}, \gamma } .
\end{equation*}
\end{proposition}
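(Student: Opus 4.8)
The plan is to substitute the factorization of the mixed Schur average into the two-fold dual-Cauchy expansion of the complex kernel and observe that the resulting double sum decouples. First I would recall the expansion derived just above the statement, obtained by applying the dual Cauchy identity \eqref{eq:dualCauchy} separately to the holomorphic factor $\prod_{i}(x_i - z_j)$ and the anti-holomorphic factor $\prod_{i}(\bar{y}_i - \bar{z}_j)$:
\begin{equation*}
\widehat{K}_N ^{(n)} (x_1, \dots, x_n; \bar{y}_1, \dots, \bar{y}_n ) = \sum_{\lambda, \mu \in \mathbb{Y}_{n, N-n}} s_{\lambda} (x^{\vee}) \, s_{\mu} (\bar{y}^{\vee}) \, \langle s_{\lambda^{\prime}} \bar{s}_{\mu^{\prime}} \rangle_w .
\end{equation*}

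Next I would specialize $w = w_{\mathrm{DF}}$ and insert the corollary of \cite[Thm.14]{SI} recorded above, namely $\langle s_{\lambda^{\prime}} \bar{s}_{\mu^{\prime}} \rangle_{\mathrm{DF}} = \langle s_{\lambda^{\prime}} \rangle_{\mathrm{JE}, \gamma} \langle s_{\mu^{\prime}} \rangle_{\mathrm{JE}, \gamma}$. After this substitution the summand becomes a product of a factor $s_{\lambda} (x^{\vee}) \langle s_{\lambda^{\prime}} \rangle_{\mathrm{JE}, \gamma}$ depending only on $\lambda$ and the holomorphic variables, and a factor $s_{\mu} (\bar{y}^{\vee}) \langle s_{\mu^{\prime}} \rangle_{\mathrm{JE}, \gamma}$ depending only on $\mu$ and the anti-holomorphic variables. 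Since $\lambda$ and $\mu$ range independently over the same index set $\mathbb{Y}_{n, N-n}$ and no remaining factor couples them, the double sum factorizes into a product of two single sums, which I would then identify with $\mathsf{K}_N^{(n)}(x)$ and $\mathsf{K}_N^{(n)}(\bar{y})$ respectively.

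The only substantive input is the factorization of the mixed average itself, which is precisely the content of \cite[Thm.14]{SI}; within the present argument it is taken as given. Consequently the proof presents no real obstacle: the decoupling of the holomorphic and anti-holomorphic sectors is a purely formal consequence of the product structure $w_{\mathrm{DF}}(z,\bar{z}) = w_{\mathrm{JUE}}(z) w_{\mathrm{JUE}}(\bar{z})$ as reflected in the Schur average. The one point warranting a line of care is that the two summation indices share the identical rectangle $\mathbb{Y}_{n, N-n}$, so that neither sum is truncated relative to the other and the Fubini-type rearrangement over the finite index set is exact. I would also remark, echoing the comment following the general complex expansion, that the absence of the symmetry enhancement \eqref{eq:symen} is consistent here, since the two factors carry genuinely different arguments $x^{\vee}$ and $\bar{y}^{\vee}$.
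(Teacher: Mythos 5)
Your proposal is correct and follows essentially the same route as the paper: it inserts the factorization $\langle s_{\lambda^{\prime}} \bar{s}_{\mu^{\prime}} \rangle_{\mathrm{DF}} = \langle s_{\lambda^{\prime}} \rangle_{\mathrm{JE},\gamma} \langle s_{\mu^{\prime}} \rangle_{\mathrm{JE},\gamma}$ (the corollary of \cite[Thm.14]{SI}) into the two-fold dual-Cauchy expansion of the complex kernel and lets the finite double sum decouple. The paper treats this as immediate once the mixed average factorizes, so your added remarks on the shared index set and the absence of symmetry enhancement are consistent elaborations rather than deviations.
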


The DF kernels we are considering are tightly related to conformal blocks in
the $SL(2, \mathbb{R})$ WZW conformal field theory \cite%
{Iguri:2007af,Iguri:2009cf}. In particular, for the 2-point kernel, it
follows from \cite{Iguri:2007af} (see also \cite{SI}) that 
\begin{align*}
\mathsf{K}_{N} ^{(1)} (z) & = \sum_{k=0} ^{N-1} h_k (-z^{-1}) \left\langle
e_k \right\rangle_{\mathrm{JE}, \gamma }  \notag \\
&= \sum_{k=0} ^{N-1} (-z)^{-k} {\binom{N-1 }{k }} \prod_{j=1} ^{k} \frac{
\alpha + 1 + \frac{\gamma}{2} (N-j-1) }{ \alpha + \beta +2 + \frac{\gamma}{2}
(2N-j-3) }
\end{align*}
where $e_k$ are elementary symmetric polynomials. The 2-point DF kernel is
thus reinterpreted as a generating function of certain correlation functions
in the WZW model, and the factorization follows from the ``chiral'' factorization property studied in \cite{Iguri:2007af,Iguri:2009cf}.

\section{Outlook}
\label{sec:outlook}

While we have studied a considerable number of cases explicitly, including
novel analytical evaluations, it is manifest that a more exhaustive study of
the expansions obtained can be carried out. It is worth mentioning that the
kernels discussed here appear in different contexts, sometimes without any
reference to random matrix theory. For example, in the very recent \cite{Klus}, a family of antisymmetric kernels is constructed. Due to the reproducing property, these kernels are essentially of the type \eqref{eq:defmultikernel} if the \textit{seed} kernel is a Christoffel--Darboux kernel.
Therefore, the expansions obtained here could conceivably be applied in this other context. It is then a natural question to ask if such an expansion could provide any type of, say computational, advantage.

If the seed kernel is a simpler one, such as a Gaussian kernel, an instance studied in detail in \cite{Klus}, we still have a multi-faceted connection with
random matrix theory, as explained in \cite[Sec.4]{ST}. The corresponding kernel is then the Karlin--McGregor kernel for
non-intersecting diffusion processes. In general, the
antisymmetrization procedure parallels the Karlin--McGregor construction of non-intersecting diffusion processes.\footnote{Likewise, if the base space is discretized, i.e. $\mathbb{R}^d$ is replaced with a lattice $a \mathbb{Z}^d$, the antisymmetric Gaussian kernel in \cite{Klus} can be written as the average of two Schur polynomials in a Stieltjes--Wigert or Rogers--Szeg\H{o} ensemble. The evaluation of such average is known and is given by a certain topological knot invariant in $\mathbb{S}^3$ \cite{DGMT}.}

This highlights yet another universality and interdisciplinarity aspect, inherent to such kernels. It would be interesting if some of the analytical results here have applications along these lines. A seemingly simpler open problem would be to eventually interpret the various Schur averages as transition probabilities of different systems of non-intersecting walkers.\par

To conclude, it is worth to mention that expanding the multi-point kernel in the Schur basis is certainly not the only option. One may expand in any basis of symmetric function as, for instance, in power sums in $t$. Such expansions follow from Theorem \ref{thm:main} with a change of basis. Let $\left\{ u_{\mu} \right\}_{\mu}$ be any basis in the ring of symmetric functions in $2n$ variables and let $C_{\lambda \mu}$ denote the entries of the transition matrix to the Schur basis, $s_{\lambda} = \sum_{\mu} C_{\lambda \mu} u_{\mu}$. Then, Theorem \ref{thm:main} implies that 
    \begin{equation*}
        \widehat{K}_N ^{(n)} (x_1, \dots, x_n; y_1, \dots, y_n) = \sum_{\mu \in \mathbb{Y}} u_{\mu} (t) \left[ \sum_{ \nu \in \mathbb{Y}_{2n,N-n} } \left\langle s_{\nu } \right\rangle_{w}   C_{\nu^{\prime} \mu} \right] .
    \end{equation*}
    The sum in square brackets is assembled into the average of a symmetric function (labelled by $\mu$), distinct from $u_{\mu}$ unless $u_{\mu} = s_{\mu}$. In general, these averages are not known in closed form.\par
    Alternatively, one could also expand the determinant $\det \prod_{i=1}^{2n} (1+ t_i Z)$ in the matrix average into products of traces of powers of the $M \times M$ matrix $Z$. Then, the coefficients in the expansion of the multi-point kernels would be given by (intricate combinations of) correlators of moments of random matrices, that appear in a broad variety of areas.\par

\vspace{0.3cm}
\subsection*{Acknowledgements}
We thank Prof. Peter Forrester for correspondence and insightful comments. The work of LS is supported by the Funda\c{c}\~{a}o para a Ci\^{e}ncia e a Tecnologia (FCT) through the doctoral grant SFRH/BD/129405/2017. The work is also supported by FCT Project PTDC/MAT-PUR/30234/2017.
\vspace{0.4cm}
%\clearpage
\appendix

\section{Proofs of Schur polynomial averages}
\label{app:ComputeSchur}

\subsection{Schur polynomial average in the Laguerre ensemble}
\label{app:SchurClassic} 

In this appendix we give a proof of Proposition \ref%
{prop:LUESchur}, following \cite[Sec.4.2.2]{DavidGG}. Along the way, we keep
the discussion general to show how the method is suitable to the averages of
a Schur polynomial in classical ensembles.

Starting with the definitions 
\begin{equation*}
\langle s_{\mu} \rangle = \frac{ \int \Delta (z)^2 s_{\mu} (z)
~\prod_{j=1}^{M} w(z_j) \mathrm{d} z_j }{ \int \Delta (z)^2 ~\prod_{j=1}^{M}
w(z_j) \mathrm{d} z_j }
\end{equation*}
and 
\begin{equation*}
\Delta (z)^2 s_{\mu} (z) = \det_{1 \le j,k \le M } [x_k ^{M-j}] ~ \det_{1
\le j,k \le M } [x_k ^{\mu_j+M-j}] ,
\end{equation*}
we can apply Andreief's identity \cite{Andreief,Forrester:Meet} to write 
\begin{equation*}
\langle s_{\mu} \rangle = \frac{ \det_{1 \le j,k \le M } \left[ \mathfrak{m}%
_{\mu_j+M-j-1+k} \right]}{ \det_{1 \le j,k \le M } \left[ \mathfrak{m}%
_{j+k-2} \right] } ,
\end{equation*}
where we have adopted the shorthand notation 
\begin{equation*}
\mathfrak{m}_{p} \equiv \int z^{p} w(z) \mathrm{d} z
\end{equation*}
for the moments of the measure $w(z) \mathrm{d} z $. They can be evaluated
exactly for the classical ensembles of Section \ref{sec:ClassicCDK}: 
\begin{align*}
\mathfrak{m}_{p} ^{\mathrm{GUE}} & = \int_{- \infty} ^{+ \infty} \frac{ 
\mathrm{d} z }{ \sqrt{2\pi} } z^{p} e^{- z^2 /2 } = \frac{ 2^{p/2} }{\sqrt{%
\pi}} \left( 1 + (-1)^{p} \right) \Gamma \left( \frac{p+1}{2} \right) , \\
\mathfrak{m}_{p} ^{\mathrm{LUE}} & = \int_{0} ^{+ \infty} \mathrm{d} z z^{p+
\alpha} e^{- z} = \Gamma \left( 1 + \alpha +p \right) , \\
\mathfrak{m}_{p} ^{\mathrm{JUE}} & = \int_{0} ^{1} \mathrm{d} z z^{p+ \alpha}
(1-z)^{\beta} = \frac{\beta }{ (1+\alpha+p)(2+\alpha+p) } , \\
\mathfrak{m}_p ^{\widetilde{\mathrm{JUE}}} &= \int_0 ^{+\infty} \dd z \frac{z^{p + \alpha}}{(1+z)^{M+\beta}}  = \frac{\Gamma (p+\alpha +1) \Gamma (M-p-\alpha +\beta -1)}{\Gamma (M+\beta )} , \\
\mathfrak{m}_p ^{\widetilde{\mathrm{LUE}}} &= \int_0 ^{+\infty} \dd z e^{-1/z} z^{p-\tilde{\alpha}} = \Gamma (\tilde{\alpha} -p -1) .
\end{align*}
The property $\Gamma (z+1) =z \Gamma (z) $ can be used to simplify the determinants. We have the relations 
\begin{align*}
\mathfrak{m}_{\mu_j+M-j+k+1} ^{\mathrm{GUE}} &= \left( \mu_j+M-j+k \right) 
\mathfrak{m}_{\mu_j+M-j+k-1} ^{\mathrm{GUE}}  \hspace{50pt} \mathfrak{m}_{j+k} ^{\mathrm{GUE}} = \left( j+k-1 \right) \mathfrak{m}_{j+k-2}^{\mathrm{GUE}} , \\
\mathfrak{m}_{\mu_j+M-j+k} ^{\mathrm{LUE}} & = \left( \alpha + \mu_j+M-j+k \right) \mathfrak{m}_{\mu_j+M-j+k-1} ^{\mathrm{LUE}} \hspace{24pt} \mathfrak{m}_{j+k-1} ^{\mathrm{LUE}} = \left( \alpha + j+k \right) \mathfrak{m}_{j+k-2} ^{\mathrm{LUE}} , \\
\mathfrak{m}_{\mu_j+M-j+k} ^{\mathrm{JUE}} & = \frac{ \alpha + \mu_j+M-j+k }{ \alpha + \mu_j+M-j+k+2} \mathfrak{m}_{\mu_j+M-j+k-1} ^{\mathrm{JUE}} \hspace{14pt} 
\mathfrak{m}_{j+k-1} ^{\mathrm{JUE}} = \frac{ \alpha + j+k -1 }{ \alpha + j+k+1 } \mathfrak{m}_{j+k-2} ^{\mathrm{JUE}} , \\
\mathfrak{m} ^{\widetilde{\mathrm{JUE}}} _{\mu_j+M-j+k} & = \frac{\alpha + \mu_j+M-j+k}{  \beta - \alpha -1 -\mu_j+j-k} \mathfrak{m} ^{\widetilde{\mathrm{JUE}}} _{\mu_j+M-j+k-1} \hspace{20pt} \mathfrak{m} ^{\widetilde{\mathrm{JUE}}} _{j+k-1}  = \frac{\alpha +j+k-1}{ M+ \beta - \alpha -j-k} \mathfrak{m} ^{\widetilde{\mathrm{JUE}}} _{j+k-2} , \\
\mathfrak{m} ^{\widetilde{\mathrm{LUE}}} _{\mu_j+M-j+k} & = \left( \tilde{\alpha} - \mu_j -M +j -k -2 \right)  \mathfrak{m} ^{\widetilde{\mathrm{LUE}}} _{\mu_j+M-j+k-1} \hspace{10pt} \mathfrak{m} ^{\widetilde{\mathrm{LUE}}} _{j+k-1}  = \left( \tilde{\alpha} -j -k -1 \right)  \mathfrak{m} ^{\widetilde{\mathrm{LUE}}} _{j+k-2} .
\end{align*}

We henceforth focus on the LUE. We bring out the factor in the $(j,1)$-entry, $\forall j=1, \dots, M $ both in the numerator and the denominator.
This leaves the numerator as the determinant of the matrix whose entry $%
(j,k) $ is $\prod_{s=1}^{k-1} \left( \alpha + \mu_j+M-j+s \right) $, with $\prod_{s=1} ^{0} \equiv 1$ understood. A similar expression is obtained in the denominator.

Then, we successively subtract lower columns to the $k^{\text{th}}$ column,
so to reorganize the expression as 
\begin{equation*}
\langle s_{\mu} \rangle_{\text{LUE}} = \prod_{j=1} ^{M} \frac{ \Gamma \left(
\alpha + \mu_j + M -j +1 \right) }{ \Gamma \left( \alpha +j \right) } \cdot 
\frac{ \det_{1 \le j, k \le M } \left[ \left( \mu_j+M-j+1 \right)^{k-1} %
\right] }{ \det_{1 \le j, k \le M } \left[ \left( j \right)^{k-1} \right] } .
\end{equation*}
The determinant in the denominator is simply a Vandermonde on the integers,
that gives $G(M+1)$, while the numerator is a Vandermonde on the lattice $%
\mu_j+M-j+1$, which gives $G(M+1) s_{\mu} (1, \dots, 1)$. The ratio of determinants leaves behind $s_{\mu} (1^M)$. This proves Proposition \ref{prop:LUESchur} \cite{DavidGG}.

\begin{remark}
Identifying the integral with the insertion of a Schur polynomial with a
determinant of a minor of a Hankel matrix makes manifest that, whenever a
weight function $w(z)$ is even and with even support, the average $%
\left\langle s_{\mu} \right\rangle_{w}$ will be subject to parity
constraints on the partition $\mu$, precisely as in the GUE. This stems from
the vanishing of the odd moments of the measure $w(z) \mathrm{d} z $, as for instance in the Chebyshev weights of first and second kind.
\end{remark}

\subsection{Schur polynomial average in $q$-ensembles: Stieltjes--Wigert}

\label{app:SchurqSW}

This appendix contains a proof of formula \eqref{eq:SWSchur} for $\langle
s_{\mu} \rangle $ in the Stieltjes--Wigert~ensemble which is along the lines
of Appendix \ref{app:SchurClassic}. See \cite{Dolivet:2006ii} for a
different proof.

Reasoning as in Appendix \ref{app:SchurClassic}, we write 
\begin{equation*}
\langle s_{\mu }\rangle _{\text{SW}}=\frac{\det_{1\leq j,k\leq M}\left[ 
\mathfrak{m}_{\mu _{j}+M-j+k-1}^{\text{SW}}\right] }{\det_{1\leq j,k\leq M}%
\left[ \mathfrak{m}_{j+k-2}^{\text{SW}}\right] }.
\end{equation*}%
In the Stieltjes--Wigert~case the moments are 
\begin{equation*}
\mathfrak{m}_{p}^{\text{SW}}=\int_{0}^{\infty }z^{p}e^{-\frac{1}{2g}(\ln
z)^{2}}\frac{\mathrm{d}z}{\sqrt{2\pi g}}=q^{-\frac{(p+1)^{2}}{2}}
\end{equation*}%
where we recall that $q=e^{-g}$. We then extract the factor $q^{-\frac{1}{2}%
\left( \mu _{j}+M-j+1\right) ^{2}}$ from the $j^{\text{th}}$ row in the
determinant in the numerator, and the factor $q^{-j^{2}/2}$ from the $j^{%
\text{th}}$ row in the determinant in the denominator. We are left with 
\begin{equation*}
\langle s_{\mu }\rangle _{\text{SW}}=\prod_{j=1}^{M}\frac{q^{-\frac{1}{2}%
\left( \mu _{j}+M-j+1\right) ^{2}}}{q^{-\frac{j^{2}}{2}}}\cdot \frac{%
\det_{1\leq j,k\leq M}\left[ q^{-\left( \mu _{j}+M-j+1\right) (k-1)}q^{-%
\frac{(k-1)^{2}}{2}}\right] }{\det_{1\leq j,k\leq M}\left[ q^{-j(k-1)}q^{-%
\frac{(k-1)^{2}}{2}}\right] }.
\end{equation*}%
After bringing out the common factor from the $k^{\text{th}}$ row in both
the determinants, we recognize the ratio of a Vandermonde determinant
evaluated at the exponential lattice $q^{-\left( \mu _{j}+M-j+1\right) }$ in
the numerator, and a Vandermonde determinant on the exponential lattice $%
q^{-j}$ in the denominator. Simplifying also the prefactor, we get 
\begin{equation*}
\langle s_{\mu }\rangle _{\text{SW}}=q^{-\sum_{j=1}^{M}\left( \frac{\mu
_{j}^{2}}{2}+\mu _{j}(M-j+1)\right) }\prod_{1\leq j<k\leq M}\frac{q^{-(\mu
_{j}+M-j+1)}-q^{-(\mu _{k}+M-k+1)}}{q^{-j}-q^{-k}}.
\end{equation*}%
By direct computation, the last product can be simplified into 
\begin{equation}
\prod_{1\leq j<k\leq M}\frac{q^{-(\mu _{j}+M-j+1)}-q^{-(\mu _{k}+M-k+1)}}{%
q^{-j}-q^{-k}}=\prod_{j=1}^{M}q^{-\frac{(M-1)}{2}\mu _{j}}\prod_{1\leq
j<k\leq M}\frac{\left[ \mu _{j}-j-\mu _{k}+k\right] _{q}}{\left[ j-k\right]
_{q}}  \label{eq:ProductToQDim}
\end{equation}%
where in the last line we have used the symmetric $q$-number $\left[ \cdot %
\right] _{q}$, see \eqref{eq:defqnumber}. Recognizing the $q$-dimension \eqref{eq:defqdim}, Proposition \ref{prop:SWSchur} follows.

\subsection{Schur polynomial average in $q$-ensembles: $q$-Laguerre}

\label{app:SchurqLUE} Consider now the $q$-Laguerre weight. In this case the
moments are \cite{Moak} 
\begin{equation*}
\mathfrak{m}_{p}^{q\text{LUE}}=\int_{0}^{\infty }\mathrm{d}z\frac{%
z^{p+\alpha }}{(-(1-q)z;q)_{\infty }}=\frac{\Gamma (-p-\alpha )\Gamma
(p+\alpha +1)}{\Gamma _{q}(-p-\alpha )}.
\end{equation*}%
Under a shift $p\mapsto p+1$ they transform simply as 
\begin{equation*}
\mathfrak{m}_{p+1}^{q\text{LUE}}=-\lfloor -(p+\alpha +1)\rfloor _{q}%
\mathfrak{m}_{p}^{q\text{LUE}},
\end{equation*}%
where $\lfloor z\rfloor _{q}$ is the asymmetric $q$-number 
\begin{equation*}
\lfloor z\rfloor _{q}\equiv \frac{1-q^{z}}{1-q}.
\end{equation*}%
This closely resembles the behaviour of the LUE moments in Appendix \ref%
{app:SchurClassic}, with ordinary numbers $(p+\alpha +1)$ replaced by their $%
q$-analogue $-\lfloor -(p+\alpha +1)\rfloor _{q}$.

We can therefore proceed as in Appendix \ref{app:SchurClassic}, writing $%
\langle s_{\mu} \rangle_{q\text{LUE}}$ as a ratio of determinants and
extract the common factor in the $j^{\text{th}}$ row, $\forall j=1, \dots, M$%
. We arrive at 
\begin{align*}
\langle s_{\mu} \rangle_{q\text{LUE}} &= \prod_{j=1} ^{M} \frac{ \Gamma
\left( \alpha +1 + \mu_j + M -j \right) \Gamma (- \alpha - \mu_j - M +j)}{
\Gamma (\alpha +j ) \Gamma (- \alpha -j +1) } \frac{ \Gamma_q (- \alpha -j
+1) }{ \Gamma_q (- \alpha - \mu_j - M +j) } \\
& \times \frac{ \det_{1 \le j, k \le M} \left[ \prod_{s=1}^{k-1} (- \lfloor
- ( \alpha + \mu_j + M -j +s ) \rfloor_{q} ) \right] }{ \det_{1 \le j, k \le
M} \left[ \prod_{s=1}^{k-1} (- \lfloor - ( \alpha + j-1 +s ) \rfloor_{q} ) %
\right] }
\end{align*}
(with $\prod_{s=1} ^{0} (\cdots ) \equiv 1$ understood). Using the simple
recursion $- \lfloor - z \rfloor_{q} = q^{-1} (1- \lfloor - (z-1)
\rfloor_{q})$ and taking linear combinations of the columns, we can
rearrange the determinant in the numerator such that the $(j,k)$-entry is\footnote{The sign in front of each entry requires care. There is a product $\prod_{s=1} ^{k-1}$ and each entry of the product includes a factor $(1- \lfloor - (\cdots) \rfloor_q)$ if $s$ is odd and $\lfloor - (\cdots) \rfloor_q$ if $s$ is even, $2 \le s \le k-1$, while $s=1$ always contributes $- \lfloor - (\cdots) \rfloor_q$.} 
\begin{equation*}
- (-q)^{\frac{k(k-1)}{2}} \lfloor - (\alpha + \mu_j + M -j+1) \rfloor_{q}
^{k-1} ,
\end{equation*}
and similarly in the denominator with the usual replacement $\mu_j + M -j
\mapsto j-1$. Bringing out the common factor $- (-q)^{\frac{k(k-1)}{2}} $
from the $k^{\text{th}}$ column, $k = 2, \dots, M$, both in the numerator
and denominator, we are left with a ratio of Vandermonde determinants, over
the exponential lattices $\lfloor - (\alpha + \mu_j + M -j+1) \rfloor_{q} $
and $\lfloor - (\alpha +j) \rfloor_{q}$ respectively.

Putting all together we have 
\begin{align*}
\langle s_{\mu} \rangle_{q\text{LUE}} &= \prod_{j=1} ^{M} \frac{ \Gamma
\left( \alpha +1 + \mu_j + M -j \right) \Gamma (- \alpha - \mu_j - M +j)}{
\Gamma (\alpha +j ) \Gamma (- \alpha -j +1) } \frac{ \Gamma_q (- \alpha -j
+1) }{ \Gamma_q (- \alpha - \mu_j - M +j) } \\
& \times \prod_{1 \le j < k \le M } \frac{ \lfloor - ( \alpha + \mu_j -j
+M+1) \rfloor_{q} - \lfloor - ( \alpha + \mu_k -k +M+1) \rfloor_{q}}{
\lfloor \alpha + j \rfloor_{q} - \lfloor \alpha + k \rfloor_{q} } .
\end{align*}
Writing 
\begin{multline*}
\prod_{1 \le j < k \le M } \frac{ \lfloor - ( \alpha + \mu_j -j +M+1)
\rfloor_{q} - \lfloor - ( \alpha + \mu_k -k +M+1) \rfloor_{q}}{ \lfloor
\alpha + j \rfloor_{q} - \lfloor \alpha + k \rfloor_{q} } \\ = \prod_{1 \le j <
k \le M } \frac{ q^{-(\mu_j + M-j+1)} - q^{-(\mu_k + M-k+1)} }{ q^{-j} -
q^{-k} }
\end{multline*}
and using again \eqref{eq:ProductToQDim} to identify the $q$-dimension, we
arrive at Proposition \ref{prop:qLagSchur}.

\section{Duduchava--Roch formula for Toeplitz inverses and kernels}

\label{app:Duduchava}

In this appendix we do not focus on matrix models defined on (subsets of) $%
\mathbb{R}$, associated to Hankel determinants, and discuss instead a result
concerning Toeplitz matrices and their inverses, showing a relationship with
random matrix kernels. In this way, we give a different expansion of such
kernels, in a monomial basis, instead of in a Schur basis, as in the main
text.

Let $\mathcal{T} (f)$ be the Toepliz matrix with symbol $f: \mathbb{S}^1 \to 
\mathbb{C}$. Besides, let $\mathcal{M}_z$ be the diagonal matrix 
\begin{equation*}
\left( \mathcal{M}_z \right)_{jk} = \frac{ \Gamma (z+k)}{ \Gamma (k) \Gamma
(z+1) } ~ \delta_{jk} , \qquad z \in \mathbb{C} .
\end{equation*}
A pure Fisher--Hartwig~singularity is the weight function on the circle
defined as 
\begin{subequations}
\label{eq:defFSsing}
\begin{align}
w_{\mathrm{FH}} (z) & = \omega_{\gamma } (z) \widetilde{\omega}%
_{\delta} (z), \\
\omega_{\gamma} (z) & = (1- z/z_0)^{\gamma} , \qquad \widetilde{\omega}%
_{\delta} (z) = (1- z_0/z)^{\delta} ,
\end{align}
\end{subequations}
with $\Re (\gamma + \delta) >-1$ and $z \in \mathbb{S}^1$, and for a fixed
reference point $z_0 \in \mathbb{S}^1$ that is usually set to $1$.

\begin{theorem}[Duduchava--Roch \protect\cite{Duduchava,Roch,Bottcher:DR}]
If $\gamma, \delta, \gamma+\delta \notin \mathbb{Z}_{<0}$, it holds that 
\begin{equation}  \label{eq:DRthm}
\mathcal{T} \left( \omega_{\gamma} \right) \mathcal{M}_{\gamma + \delta} 
\mathcal{T} \left( \widetilde{\omega}_{\delta} \right) = \frac{ \Gamma
\left( 1+ \gamma \right) \Gamma \left( 1+ \delta \right) }{ \Gamma \left( 1+
\gamma + \delta \right) } \mathcal{M}_{\delta } \mathcal{T} \left(
\omega_{\gamma}\widetilde{\omega}_{\delta} \right) \mathcal{M}_{\gamma } ,
\end{equation}
where $\omega_{\gamma}, \widetilde{\omega}_{\delta}$ are defined in %
\eqref{eq:defFSsing}.
\end{theorem}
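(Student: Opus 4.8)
The plan is to prove \eqref{eq:DRthm} by comparing the two sides entry by entry, reducing the content of the identity to a single classical hypergeometric summation. Throughout I index the matrices by $j,k\ge 1$ and fix $z_{0}=1$. First I would record the relevant coefficients. Expanding binomially, $\omega_{\gamma}(z)=(1-z)^{\gamma}$ has Taylor coefficients $a_{s}=\Gamma(s-\gamma)/[\Gamma(-\gamma)\Gamma(s+1)]$ supported on $s\ge 0$, so $\mathcal{T}(\omega_{\gamma})$ is lower triangular with $\mathcal{T}(\omega_{\gamma})_{jm}=a_{j-m}$; dually $\widetilde{\omega}_{\delta}(z)=(1-1/z)^{\delta}$ is supported on Fourier modes $\le 0$, so $\mathcal{T}(\widetilde{\omega}_{\delta})$ is upper triangular with $\mathcal{T}(\widetilde{\omega}_{\delta})_{mk}=\Gamma(k-m-\delta)/[\Gamma(-\delta)\Gamma(k-m+1)]$. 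The diagonal matrix $\mathcal{M}_{z}$ merely multiplies the index $k$ by $\Gamma(z+k)/[\Gamma(k)\Gamma(z+1)]$. The hypotheses $\gamma,\delta,\gamma+\delta\notin\IZ_{<0}$ are exactly what guarantees that these Gamma factors are finite and nonzero and that the summation theorem invoked below applies.

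Next I would evaluate the right-hand side. Since the product symbol $g\equiv\omega_{\gamma}\widetilde{\omega}_{\delta}$ is a pure \FH{} singularity, its Fourier coefficients collapse to the single Gamma ratio
\begin{equation*}
\widehat{g}_{n}=\frac{(-1)^{n}\,\Gamma(1+\gamma+\delta)}{\Gamma(1+\gamma-n)\,\Gamma(1+\delta+n)},
\end{equation*}
which one reads off from the binomial expansion of $(1-z)^{\gamma+\delta}$ after the rearrangement $(1-1/z)^{\delta}=(-1)^{\delta}z^{-\delta}(1-z)^{\delta}$, legitimate precisely because no exponent is a negative integer (and which correctly reduces to $\delta_{n0}$ when $\gamma=\delta=0$). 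Multiplying $\mathcal{T}(g)_{jk}=\widehat{g}_{j-k}$ by the two diagonal factors $(\mathcal{M}_{\delta})_{jj}$ and $(\mathcal{M}_{\gamma})_{kk}$ and by the scalar prefactor, the factors $\Gamma(1+\gamma)$, $\Gamma(1+\delta)$ and $\Gamma(1+\gamma+\delta)$ all cancel, and the right-hand side of \eqref{eq:DRthm} acquires the closed form
\begin{equation*}
(-1)^{j-k}\,\frac{\Gamma(\delta+j)\,\Gamma(\gamma+k)}{\Gamma(j)\,\Gamma(k)\,\Gamma(\gamma+k-j+1)\,\Gamma(\delta+j-k+1)}.
\end{equation*}

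I would then compute the left-hand side. Because the middle factor is diagonal, the $(j,k)$ entry is a \emph{single} sum over the internal index $m$, and the triangularity of the two Toeplitz factors truncates the range to $1\le m\le\min(j,k)$:
\begin{equation*}
\frac{1}{\Gamma(-\gamma)\Gamma(-\delta)\Gamma(1+\gamma+\delta)}\sum_{m=1}^{\min(j,k)}\frac{\Gamma(j-m-\gamma)\,\Gamma(\gamma+\delta+m)\,\Gamma(k-m-\delta)}{\Gamma(j-m+1)\,\Gamma(m)\,\Gamma(k-m+1)}.
\end{equation*}
Shifting $m=1+l$ and forming the ratio of consecutive terms exhibits this sum as its $l=0$ term times the terminating series ${}_{3}F_{2}(\gamma+\delta+1,\,1-j,\,1-k;\,2+\gamma-j,\,2+\delta-k;\,1)$; termination comes from the nonpositive integer among $1-j,\,1-k$. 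The key structural observation is that this series is \emph{balanced} (Saalschützian): its lower parameters sum to $4+\gamma+\delta-j-k$ and its upper parameters to $3+\gamma+\delta-j-k$, differing by exactly $1$.

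Finally I would close the argument by applying the Pfaff--Saalschütz summation theorem to this terminating balanced ${}_{3}F_{2}$ at unit argument, which returns a quotient of four Gamma functions; a short simplification then matches it against the closed form of the right-hand side found above. (It is reassuring to check the identity first on the small entries $j,k\in\{1,2\}$, where both sides reduce to $1$, $-\gamma$, $-\delta$, and $(1+\gamma)(1+\delta)$.) I expect the main obstacle to be purely bookkeeping: tracking the Gamma prefactors so as to isolate the standard hypergeometric form, and in particular verifying the balance condition without sign slips, together with fixing the branch of $(1-1/z)^{\delta}$ when reading off $\widehat{g}_{n}$ — which is exactly where the hypotheses on $\gamma,\delta,\gamma+\delta$ enter. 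Once the series is recognized as terminating and balanced, the theorem is an immediate consequence of Pfaff--Saalschütz.
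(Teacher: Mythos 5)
Your proposal is correct, and it is worth noting that the paper itself does not prove this theorem at all: it is imported from the literature, with the reader referred to B\"ottcher's survey \cite{Bottcher:DR} for proofs. Your entry-wise verification is therefore a genuine, self-contained argument rather than an alternative to something in the text, and it essentially reproduces the ``computational'' proof among those collected in \cite{Bottcher:DR}: the triangularity of $\mathcal{T}(\omega_{\gamma})$ and $\mathcal{T}(\widetilde{\omega}_{\delta})$ truncates the internal sum to $1\le m\le\min(j,k)$, the term ratio identifies a terminating Saalsch\"utzian ${}_{3}F_{2}$ at unit argument, and Pfaff--Saalsch\"utz closes the identity; I checked that the resulting Gamma quotient does match your closed form $(-1)^{j-k}\Gamma(\delta+j)\Gamma(\gamma+k)/[\Gamma(j)\Gamma(k)\Gamma(\gamma+k-j+1)\Gamma(\delta+j-k+1)]$ for the right-hand side (the reflection formula supplies the sign $(-1)^{j+k}$), and your small-entry checks are right. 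Two minor points deserve a sentence in a written-up version. First, rather than fixing a branch in the rearrangement $(1-1/z)^{\delta}=(-1)^{\delta}z^{-\delta}(1-z)^{\delta}$, it is cleaner to obtain $\widehat{g}_{n}$ as the convolution $\sum_{s}a_{s}b_{s-n}$ of the two binomial series and sum it by Gauss's ${}_{2}F_{1}(1)$ theorem, which uses $\Re(\gamma+\delta)>-1$ and avoids branch issues altogether. Second, when $\gamma$ or $\delta$ is a nonnegative integer the lower parameters $2+\gamma-j$, $2+\delta-k$ of your ${}_{3}F_{2}$ can hit nonpositive integers before the series terminates; this degenerate case is harmless because, for fixed $j,k$, both sides of the entry-wise identity are rational in $\gamma,\delta$, so it suffices to verify the identity for generic parameters and conclude by continuation. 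Neither point is a gap in the idea; the proof goes through.
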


The identity \eqref{eq:DRthm} is known as the Duduchava--Roch formula. See 
\cite{Bottcher:DR} for extensive discussion and proofs.

Let us assume $\gamma, \delta \in \mathbb{Z}_{\ge 0}$, and also set $z_0=1$
to lighten the formulae. Our interest is in the inverse of the $M \times M$
Toeplitz matrix $\mathcal{T}_M \left( \omega_{\gamma}\widetilde{\omega}%
_{\delta} \right)$. It can be computed using \eqref{eq:DRthm}, obtaining 
\cite{Bottcher:DR,GGT1} 
\begin{equation}  \label{eq:Tinversejk}
\left[ \mathcal{T}_M ^{-1} \left( \omega_{\gamma}\widetilde{\omega}_{\delta}
\right) \right]_{jk} = (-1)^{j+k} \frac{ \Gamma (\gamma + j) \Gamma (\delta
+ k) }{ \Gamma ( j) \Gamma (k) } \sum_{r= \max \left\{ j,k \right\} } ^{M-1} 
\frac{ \Gamma (r) }{ \gamma (\gamma + \delta + r) } {\binom{ \gamma +r -k -1 
}{r-k }} {\binom{ \delta +r -j -1 }{r-j }} .
\end{equation}
It was pointed out in \cite[Eq.(2.34)]{GGT1} that 
\begin{equation}  \label{eq:TinverseElementary}
(-1)^{j+k} \left[ \mathcal{T}_{M+1} ^{-1} \left( \omega_{\gamma}\widetilde{%
\omega}_{\delta} \right) \right]_{jk} = \frac{ \mathcal{Z}_{M} ^{\gamma,
\delta}}{ \mathcal{Z}_{M+1} ^{\gamma, \delta} } \left\langle e_{j-1} \bar{e}%
_{k-1} \right\rangle_{\mathrm{FH}} \vert_{M},
\end{equation}
with $\mathcal{Z}_{M} ^{\gamma, \delta}$ the partition function of the
Fisher--Hartwig~ensemble of $M$ variables. Therefore, \eqref{eq:Tinversejk}
implies the evaluation of the Selberg--Morris integral 
\begin{equation*}
\mathcal{Z}_{M} ^{\gamma, \delta} \left\langle e_{j-1} \bar{e}_{k-1}
\right\rangle_{\mathrm{FH}} \vert_{M} = \frac{1}{M!} \oint_{\left( 
\mathbb{S}^1 \right)^{M}} e_{j-1} (z) e_{k-1} (\bar{z}) \prod_{j=1} ^{M}
\omega_{\gamma} (z_j) \widetilde{\omega}_{\delta} (z_j) \frac{ \mathrm{d}
z_j }{2 \pi \mathrm{i} z_j}
\end{equation*}
where $e_{j-1} (z) \equiv e_{j-1} (z_1, \dots, z_{M})$ and likewise for $%
e_{k-1} (\bar{z})$.

At this point, we consider the 2-point kernel $K_N (x, \bar{y}) \vert_{\text{%
\textrm{FH}}}$ associated to the Fisher--Hartwig~weight function %
\eqref{eq:defFSsing}. With the conventions of Section \ref{sec:CKernels} and
using Theorem \ref{thm:double} we get 
\begin{align}
K_N (x, \bar{y}) \vert_{\mathrm{FH}} & = \sum_{j=0} ^{N-1} \sum_{k=0}
^{N-1} x^{N-j-1} \bar{y}^{N-k-1} (-1)^{j+k} \left\langle e_{j} \bar{e}_k
\right\rangle_{\mathrm{FH}}  \notag \\
& = \sum_{j=0} ^{N-1} \sum_{k=0} ^{N-1} x^{N-j-1} \bar{y}^{N-k-1} \left[ 
\mathcal{T}_N ^{-1} \left( \omega_{\gamma}\widetilde{\omega}_{\delta}
\right) \right]_{jk} .  \label{eq:DRvsFHkernel}
\end{align}
In passing to the second line we have used \eqref{eq:TinverseElementary}
with $M=N-1$. Notice that the ratio of partition functions from %
\eqref{eq:TinverseElementary} has cancelled against the change in
normalization from $\widehat{K}_N$ to $K_N$.

Therefore, \eqref{eq:DRvsFHkernel} shows that the kernel associated to the
Fisher--Hartwig~weight is the generating function of the inverse of the
Toeplitz matrix with symbol $w_{\mathrm{FH}}$. This fact is known
since long ago and is a particular case of \cite{Collar}, see also \cite%
{Kailath,SimonCD} for further discussion. Here we have given a proof based
only on the Duduchava--Roch formula \eqref{eq:DRthm}.

In conclusion, formulas \eqref{eq:DRvsFHkernel}-\eqref{eq:Tinversejk} give
explicitly the expansion of the Fisher--Hartwig~kernel. In turn, $K_{N}(x,%
\bar{y})|_{\mathrm{FH}}$ is related via Gessel's identity \cite{Gessel} to the Meixner kernel (see \cite{Santilli:2020ueh} and references
therein). The latter is a limit of the hypergeometric kernel \cite{BorOl2},
that has many connections with various objects in random matrix theory.

\section{Alternative derivation of the Chebyshev heat kernel}
\label{app:HKCheby}

In this work we have restricted our attention to Christoffel--Darboux kernels, which have no temporal dependence and hence can be understood as an ``initial time'' version of heat kernels (with no scaling, this is
equivalent to the delta sequence point of view of the Christoffel--Darboux kernels). However,
we can employ the little-used map between Schur polynomials of two variables and Chebyshev polynomials of the second kind, to evaluate the heat kernel associated to such polynomials using Schur-related techniques. This emphasizes the correspondence. Consider then the heat kernel \cite[Sec.3]{Allez} 
\begin{subequations}
\label{multieq:hk}
\begin{align}
\mathcal{K}_{q}(\xi ,\eta )& =\sum_{j=0}^{\infty }q^{j}U_{j}\left( \frac{\xi 
}{2}\right) U_{j}\left( \frac{\eta }{2}\right) \label{eq:Chebyheatsum}  \\
& =\frac{1-q^{2}}{1-q\xi \eta +q^{2}\left( \xi ^{2}+\eta ^{2}-2\right)
-q^{3}\xi \eta +q^{4}},  \label{eq:ChebyHeatKernel}
\end{align}
\end{subequations}
for $-2\leq \xi ,\eta \leq 2$ and $0<q<1$. The ``time'' is $-2\log q$. The heat kernel \eqref{multieq:hk} is
genuinely different from the kernels we have considered in the main text, whence the change in notation.

As in \eqref{eq:Chebyl2}, the relation 
\begin{equation*}
U_{j}\left( \frac{\xi }{2}\right)
=(x_{1}x_{2})^{-k-j/2}s_{(k+j,k)}(x_{1},x_{2}),\qquad \xi \equiv \frac{%
x_{1}+x_{2}}{\sqrt{x_{1}x_{2}}}
\end{equation*}%
between Schur and Chebyshev polynomials of second kind can be plugged in \eqref{eq:Chebyheatsum} to show
that 
\begin{align*}
\mathcal{K}_{q}(\xi ,\eta )& =\sum_{j=0}^{\infty
}q^{j}s_{(k+j,k)}(x,x^{-1})s_{(k+j,k)}(y,y^{-1}) \\
& =\sum_{j=0}^{\infty }q^{j}h_{j}(x,x^{-1})h_{j}(y,y^{-1}).
\end{align*}%
We have used the freedom in doubling the variables to set $%
x_{1}=x=x_{2}^{-1} $ and $y_{1}=y=y_{2}^{-1}$. At this point, we use the
equality 
\begin{equation*}
\sum_{j=0}^{\infty }q^{j}h_{j}(x,x^{-1})h_{j}(y,y^{-1})=\oint \frac{\mathrm{d}z}{2\pi {\,\mathrm{i}\,}z}\frac{1}{\left( 1-qzx\right) \left( 1-qz/x\right)
\left( 1-qzy\right) \left( 1-qz/y\right) },
\end{equation*}
which is the simplest case of Gessel's identity \cite{Gessel}. Computing the integral by residues reproduces \eqref{eq:ChebyHeatKernel}.


\begin{thebibliography}{999}

\bibitem{poly1} L.~D.~Abreu and H.~G.~Feichtinger, ``Function spaces of polyanalytic functions'', \textit{Harmonic and complex analysis and its applications}, A. Vasil'ev (eds), \href{https://doi.org/10.1007/978-3-319-01806-5_1}{Birkh\"{a}user, Cham (2014), 1--38}.


\bibitem{APS} G.~Akemann, M.J.~Phillips and H.-J.~Sommers, ``Characteristic
polynomials in real Ginibre ensembles'', \href{https://doi.org/10.1088/1751-8113/42/1/012001}{J. Phys. A 42 (2009), 012001} [\href{https://arxiv.org/abs/0810.1458}{arXiv:0810.1458}].

\bibitem{AV} G.~Akemann and G.~Vernizzi, ``Characteristic polynomials of
complex random matrix models'', \href{https://doi.org/10.1016/S0550-3213(03)00221-9}%
{Nucl. Phys. B 60 (2003), 532--556} [\href{https://arxiv.org/abs/hep-th/0212051}%
{arXiv:hep-th/0212051}].

\bibitem{Allez} R.~Allez, J.~Bun and J.-Ph.~Bouchaud, ``The eigenvectors of
Gaussian matrices with an external source'', [\href{https://arxiv.org/abs/1412.7108}{arXiv:1412.7108}].

\bibitem{Andreief} C.~Andr\'{e}ief, ``Note sur une relation entre les int%
\'{e}grales d\'{e}finies des produits des fonctions'', M\'{e}m . Soc. Sci. Phys. Nat. Bordeaux 2 (1886), 1.

\bibitem{Anninos:2016klf}
D.~Anninos and G.~A.~Silva, ``Solvable Quantum Grassmann Matrices'', \href{https://doi.org/10.1088/1742-5468/aa668f}{J. Stat. Mech. 1704 (2017), 043102} [\href{https://arxiv.org/abs/1612.03795}{arXiv:1612.03795}].


\bibitem{Arons} N. Aronszajn, ``Theory of reproducing kernels'', \href{https://doi.org/10.2307/1990404}{Trans. Am. Math. Soc. 68 (1950), 337--404}.

\bibitem{Askey1} R.~Askey, ``Limits of some $q$-Laguerre polynomials'', 
\href{https://doi.org/10.1016/0021-9045(86)90062-6}{J. Approx. Theory 46 (1986), 213--216}.

\bibitem{MLA} E.~Basor, P. Bleher, R. Buckingham, T. Grava, A. Its, E. Its
and J. P. Keating, ``A representation of joint moments of CUE characteristic
polynomials in terms of Painlev\'{e} functions'', \href{https://iopscience.iop.org/article/10.1088/1361-6544/ab28c7}%
{Nonlinearity 32 (2019), 4033} [\href{https://arxiv.org/abs/1811.00064}{%
arXiv:1811.00064}].

\bibitem{BW} F.~Baudoin and J.~Wang, ``Asymptotic windings of the block
determinants of a unitary Brownian motion and related diffusions'', \href{https://doi.org/10.1214/21-EJP600}%
{Electron. J. Probab. 26 (2021), 1--21} [\href{https://arxiv.org/abs/2004.13098}%
{arXiv:2004.13098}].


\bibitem{book1} A. Berlinet and C. Thomas-Agnan, \textit{Reproducing kernel
Hilbert spaces in probability and statistics}, \href{https://doi.org/10.1007/978-1-4419-9096-9}{Springer (2004)}.

\bibitem{Berg} C.~Berg, ``Fibonacci numbers and orthogonal polynomials'', 
\href{https://doi.org/10.1016/j.ajmsc.2011.01.001}{Arab J. Math. Sci. 17 (2011), 75--88} [\href{https://arxiv.org/abs/math/0609283}{arXiv:math/0609283}].

\bibitem{Borodin:1998hxr} A.~Borodin, ``Biorthogonal ensembles'', \href{https://doi.org/10.1016/S0550-3213(98)00642-7}{Nucl. Phys. B 536 (1998), 704--732} [\href{https://arxiv.org/abs/math/9804027}%
{arXiv:math/9804027}].


\bibitem{BorOl2} A.~Borodin and G.~Olshanski, ``Distributions on partitions,
point processes, and the hypergeometric kernel'', \href{https://doi.org/10.1007/s002200050815}%
{Commun. Math. Phys. 211 (2000), 335--358} [\href{https://arxiv.org/abs/math/9904010}%
{arXiv:math/9904010}].

\bibitem{Bottcher:DR} A.~B\"ottcher, ``The Duduchava--Roch Formula'', 
\emph{Recent Trends in Operator Theory and Partial Differential Equations}, 
\href{http://dx.doi.org/10.1007/978-3-319-47079-5_1}{Springer (2017)}.

\bibitem{BH} E.~Br\'{e}zin and S.~Hikami, ``Characteristic polynomials of random matrices'', 
\href{https://doi.org/10.1007/s002200000256}{Commun. Math. Phys. 214} \href{https://doi.org/10.1007/s002200000256}{(2000), 111--135} [\href{https://arxiv.org/abs/math-ph/9910005}{arXiv:math-ph/9910005}].


\bibitem{Brown} T.~W.~Brown, ``Complex matrix model duality'', \href{https://doi.org/10.1103/PhysRevD.83.085002}%
{Phys. Rev. D 83 (2011), 085002} [\href{https://arxiv.org/abs/1009.0674}{arXiv:1009.0674}].


\bibitem{Christiansen1} J.~S.~Christiansen, ``The Moment Problem Associated with the $q$-Laguerre Polynomials'', \href{https://doi.org/10.1007/s00365-001-0017-5}{Constr. Approx.} \href{https://doi.org/10.1007/s00365-001-0017-5}{19 (2003), 1--22}.

\bibitem{Collar} A.~R.~Collar, ``On the reciprocation of certain matrices'',
\href{https://doi.org/10.1017/S0370164600012281}{Proc. Roy. Soc. Edinburgh 59 (1940), 195--206}.

\bibitem{Cunden} F. D.~Cunden, A.~Dahlqvist and N.~O'Connell, ``Integer moments of complex Wishart matrices and Hurwitz numbers'', 
\href{https://doi.org/10.4171/AIHPD/103}{Annales Poincar\'e D 8 (2021), 243–-268} [\href{https://arxiv.org/abs/1809.10033}{arXiv:1809.10033}].


\bibitem{diFraItz} Ph.~Di Francesco and C.~Itzykson, ``A generating function
for fatgraphs'', \href{http://www.numdam.org/item/AIHPA_1993__59_2_117_0/}{Annales Poincar\'{e} Phys. Theor.} \href{http://www.numdam.org/item/AIHPA_1993__59_2_117_0/}{59 (1993), 117--140} [\href{https://arxiv.org/abs/hep-th/9212108}{arXiv:hep-th/9212108}].

\bibitem{Dolivet:2006ii} Y.~Dolivet and M.~Tierz, ``Chern-Simons matrix models and Stieltjes-Wigert polynomials'', 
\href{https://doi.org/10.1063/1.2436734}{J. Math. Phys.} \href{https://doi.org/10.1063/1.2436734}{48 (2007), 023507} [\href{https://arxiv.org/abs/hep-th/0609167}{arXiv:hep-th/0609167}].

\bibitem{dotsenko} V.~S.~Dotsenko and V.~A.~Fateev, ``Four-point correlation
functions and the operator algebra in $2D$ conformal invariant theories with
central charge $c \le 1$'', \href{https://doi.org/10.1016/S0550-3213(85)80004-3}%
{Nucl. Phys. B 251 (1985), 691--734}.


\bibitem{Duduchava} R.~V.~Duduchava, ``On discrete Wiener-Hopf equations'', Trudy Tbilis. Matem. Inst. 50 (1975), 42--59.


\bibitem{FS} J.~L.~Forman and M. Sorensen, ``The Pearson diffusions: A class
of statistically tractable diffusion processes'', \href{https://www.jstor.org/stable/41000281}%
{Scandinavian J. Statist. 35 (2008), 438-465}.

\bibitem{Forr94} P.~J.~Forrester, ``Exact results and universal asymptotics in the Laguerre random matrix ensemble'', \href{https://doi.org/10.1063/1.530883}%
{J. Math. Phys. 35 (1994), 2539--2551}.

\bibitem{ForresterCrystal} P.~J.~Forrester, ``Properties of an exact crystalline many-body ground state'', 
\href{https://doi.org/10.1007/BF02188665}{J. Stat. Phys. 76 (1994),} \href{https://doi.org/10.1007/BF02188665}{331--346}.

\bibitem{Forrester} P.~J.~Forrester, \textit{Log-Gases and Random Matrices}, 
\href{https://doi.org/10.1515/9781400835416}{Princeton University Press,
LMS-34 (2010)}.

\bibitem{Forrester:Meet} P.~J.~Forrester, ``Meet Andr\'{e}ief, Bordeaux 1886, and Andreev, Kharkov 1882-1883'', \href{https://doi.org/10.1142/S2010326319300018}{Random Matrices:} \href{https://doi.org/10.1142/S2010326319300018}{Theory and Applications 8 (2019), 1930001} [\href{https://arxiv.org/abs/1806.10411}{arXiv:1806.10411}].

\bibitem{ForrSW} P.~J. Forrester, ``Global and local scaling limits for the $\beta= 2$ Stieltjes--Wigert random matrix ensemble'', [\href{https://arxiv.org/abs/2011.11783}{arXiv:2011.11783}].


\bibitem{ForresterRains} P.~J.~Forrester and E.~M.~Rains, ``Matrix averages relating to the Ginibre ensembles'', \href{https://doi.org/10.1088/1751-8113/42/38/385205}{J. Phys. A 42 (2009),} \href{https://doi.org/10.1088/1751-8113/42/38/385205}{385205} [\href{https://arxiv.org/abs/0907.0287}{arXiv:0907.0287}].

\bibitem{ForWar} P.~J.~Forrester and S.~O.~Warnaar, ``The importance of the
Selberg integral'', \href{https://doi.org/10.1090/S0273-0979-08-01221-4}{Bull. Amer. Math. Soc. 45} \href{https://doi.org/10.1090/S0273-0979-08-01221-4}{(2008), 489--534} [\href{https://arxiv.org/abs/0710.3981}{arXiv:0710.3981}].


\bibitem{Fyodorov:2006pk} Y.~V.~Fyodorov and B.~A.~Khoruzhenko, ``A few
remarks on colour-flavour transformations,truncations of random unitary
matrices, Berezin reproducing kernels and Selberg type integrals'', \href{https://doi.org/10.1088/1751-8113/40/4/007}{J. Phys. A 40 (2007),}  \href{https://doi.org/10.1088/1751-8113/40/4/007}{669--700} [\href{https://arxiv.org/abs/math-ph/0610045}{arXiv:math-ph/0610045}].

\bibitem{FyoleDou}  Y.~V.~Fyodorov and P.~Le~Doussal, ``Moments of the position of the maximum for GUE characteristic polynomials and for log-correlated Gaussian processes'', \href{https://doi.org/10.1007/s10955-016-1536-6}{J. Stat. Phys. 164 (2016), 190–240} [\href{https://arxiv.org/abs/1511.04258}{arXiv:1511.04258}].


\bibitem{DavidGG} D.~Garc\'{i}a-Garc\'{i}a, ``Schur Averages in Random Matrix
Ensembles'', PhD thesis, \href{http://hdl.handle.net/10451/45592}{University of Lisbon (2020)}.

\bibitem{GGT1} D.~Garc\'{i}a-Garc\'{i}a and M.~Tierz, ``Toeplitz minors and
specializations of skew Schur polynomials'', \href{https://doi.org/10.1016/j.jcta.2019.105201}{J. Comb.} \href{https://doi.org/10.1016/j.jcta.2019.105201}{Theory A 172 (2020), 105201} [\href{https://arxiv.org/abs/1706.02574}{arXiv:1706.02574}].


\bibitem{DGMT} D. Garc\'{i}a-Garc\'{i}a and M. Tierz, 
``Matrix models for classical groups and Toeplitz$\pm $Hankel minors with applications to Chern-Simons theory and fermionic models'',
\href{https://doi.org/10.1088/1751-8121/ab9b4d}{J. Phys. A 53 (2020), 345201} [\href{https://arxiv.org/abs/1901.08922}{arXiv:1901.08922}].

\bibitem{invWishart} T.~Gauti\'e, J.-Ph.~Bouchaud and P.~Le Doussal, ``Matrix Kesten recursion, inverse-Wishart ensemble and fermions in a Morse potential'', \href{https://doi.org/10.1088/1751-8121/abfc7f}{J. Phys. A 54 (2021), 255201} [\href{https://arxiv.org/abs/2101.08082}{arXiv:2101.08082}].

\bibitem{Coherent} J.~P. Gazeau, \emph{Coherent states in quantum physics}, 
\href{https://doi.org/10.1002/9783527628285}{Wiley (2009)}.

\bibitem{Gessel} I.~M.~Gessel, ``Symmetric functions and P-recursiveness'', 
\href{https://doi.org/10.1016/0097-3165(90)90060-A}{J. Comb. Theory A 53 (1990), 257--285}.

\bibitem{poly2} A.~Haimi and H.~Hedenmalm, ``The polyanalytic Ginibre ensembles'', 
\href{https://doi.org/10.1007/s10955-013-0813-x}{J. Stat. Phys. 153 (2013), 10--47} [\href{https://arxiv.org/abs/1106.2975}{arXiv:1106.2975}].

\bibitem{Hild} E. H. Hildebrandt, ``Systems of polynomials connected with the Charlier expansions and the Pearson differential and difference equations'', \href{https://www.jstor.org/stable/2957680}{Ann. Math. Stat. 2 (1931), 379--439}

\bibitem{Kernel-ML} T. Hofmann, B. Sch\"{o}lkopf and A. J. Smola, ``Kernel methods in machine learning'', 
\href{https://doi.org/10.1214/009053607000000677}{Ann. Statist. 36 (2008),} \href{https://doi.org/10.1214/009053607000000677}{1171--1220} [\href{https://arxiv.org/abs/math/0701907}{arXiv:math/0701907}]

\bibitem{SI} S.~M.~Iguri, ``On a Selberg--Schur Integral'', \href{https://doi.org/10.1007/s11005-009-0330-7}%
{Lett. Math. Phys. 89 (2009), 141--158} [\href{https://arxiv.org/abs/0810.5552}%
{arXiv:0810.5552}].

\bibitem{Iguri:2007af} S.~M.~Iguri and C.~A.~N\'u\~nez, ``Coulomb integrals
for the $SL(2,\mathbb{R})$ Wess-Zumino-Novikov-Witten model'', \href{https://doi.org/10.1103/PhysRevD.77.066015}%
{Phys. Rev. D 77 (2008), 066015} [\href{https://arxiv.org/abs/0705.4461}{%
arXiv:0705.4461}].

\bibitem{Iguri:2009cf} S.~M.~Iguri and C.~A.~N\'u\~nez, ``Coulomb integrals
and conformal blocks in the AdS(3) - WZNW model'', \href{https://doi.org/10.1088/1126-6708/2009/11/090}%
{JHEP 11 (2009), 090} [\href{https://arxiv.org/abs/0908.3460}{arXiv:0908.3460}].

\bibitem{KadellSC} K.~W.~J.~Kadell, ``The $q$-Selberg polynomials for $n=2$'', \href{https://doi.org/10.2307/2000979}{Trans. Amer. Math. Soc. 310 (1988) 535--553}.

\bibitem{KadellJ} K.~W.~J.~Kadell, ``The Selberg-Jack Symmetric Functions'', \href{https://doi.org/10.1006/aima.1997.1642}{Adv. Math. 130 (1997), 33–-102}.


\bibitem{Kailath} T.~Kailath, A.~Vieira and M.~Morf, ``Inverses of Toeplitz
Operators, Innovations, and Orthogonal Polynomials'', \href{https://doi.org/10.1137/1020006}%
{SIAM Rev. 20-1 (1978), 106--119}.

\bibitem{KSz} S.~Karlin and G.~Szeg\H{o}, ``On certain determinants whose elements are orthogonal polynomials'', 
\href{https://doi.org/10.1007/BF02786848}{J. Anal.} \href{https://doi.org/10.1007/BF02786848}{Math. 8 (1960), 1--157}.

\bibitem{Klus} S. Klus, P.~Gelss, F.~N\"{u}ske and F.~No\'{e}, ``Symmetric
and antisymmetric kernels for machine learning problems in quantum physics
and chemistry'', \href{https://doi.org/10.1088/2632-2153/ac14ad}{Mach. Learn. Sci. Technol. 2 (2021), 045016} 
[\href{https://arxiv.org/abs/2103.17233}{arXiv:2103.17233}].

\bibitem{Koekoek} R.~Koekoek, P.~A.~Lesky and R.~F.~Swarttouw,
\textit{Hypergeometric Orthogonal Polynomials and Their $q$-Analogues}, \href{https://doi.org/10.1007/978-3-642-05014-5}{Springer Monographs in Mathematics (2010)}.

\bibitem{KM} W. Koepf and M. Masjed-Jamei, ``A generic polynomial solution
for the differential equation of hypergeometric type and six sequences of
orthogonal polynomials related to it'', \href{https://doi.org/10.1080/10652460600725234}{Integral Transforms and Special} \href{https://doi.org/10.1080/10652460600725234}{Functions 17 (2006), 559--576}.

\bibitem{Koopman} M. Korda, M. Putinar, and I. Mezi\'{c}, ``Data-driven
spectral analysis of the Koopman operator'', \href{https://doi.org/10.1016/j.acha.2018.08.002}{Appl.} \href{https://doi.org/10.1016/j.acha.2018.08.002}{Comput. Harmon. Anal. 48 (2020), 599--629} [\href{https://arxiv.org/abs/1710.06532}{arXiv:1710.06532}].


\bibitem{Leclerc} B.~Leclerc, ``On certain formulas of Karlin and Szego'', 
\href{http://emis.matem.unam.mx/journals/SLC/wpapers/s41leclerc.html}{Sem. Lothar. Combin. B41d (1998)}.

\bibitem{LisovyyCB} O.~Lisovyy, H.~Nagoya and J.~Roussillon,
``Irregular conformal blocks and connection formulae for
Painlev\'{e}\ V functions'', \href{https://doi.org/10.1063/1.5031841}%
{J. Math. Phys. 59 (2018), 091409} [\href{https://arxiv.org/abs/1806.08344}{%
arXiv:1806.08344}].

\bibitem{MacDonald} I.~G.~MacDonald, \emph{Symmetric Functions and Hall Polynomials}, 2nd ed., Oxford University Press (1995).

\bibitem{MacD2} I.~G.~MacDonald, ``Schur functions: Theme and variations'', \href{http://emis.matem.unam.mx/journals/SLC/opapers/s28macdonald.pdf}{Sem. Lothar. Combin. B28a (1992), 5--39}.

\bibitem{Mercer} J. Mercer, ``Functions of positive and negative type and their connection with the theory of integral equations'', \href{https://doi.org/10.1098/rsta.1909.0016}%
{Phil. Trans. Royal Soc. A 209 (1909), 415--446}.

\bibitem{MeReWi} F.~Mezzadri, A.~K.~Reynlds and B.~Winn, ``Moments of the eigenvalue densities and of the secular coefficients of $\beta$-ensembles'', \href{https://iopscience.iop.org/article/10.1088/1361-6544/aa518c}{Nonlinearity 30 (2017), 1034} [\href{https://arxiv.org/abs/1510.02390}{arXiv:1510.02390}].


\bibitem{Mironov:2017och}
A.~Mironov and A.~Morozov, ``On the complete perturbative solution of one-matrix models'',
\href{https://doi.org/10.1016/j.physletb.2017.05.094}{Phys. Lett. B} \href{https://doi.org/10.1016/j.physletb.2017.05.094}{771 (2017), 503--507} [\href{https://arxiv.org/abs/1705.00976}{arXiv:1705.00976}].

\bibitem{Mironov:2018ekq}
A.~Mironov and A.~Morozov, ``Sum rules for characters from character-preservation property of matrix models'',
\href{https://doi.org/10.1007/JHEP08(2018)163}{JHEP 08 (2018), 163} [\href{https://arxiv.org/abs/1807.02409}{arXiv:1807.02409}].

\bibitem{Mironov:2020jyo}
A.~Mironov and A.~Morozov, ``Elliptic $q,t$ matrix models'',
\href{https://doi.org/10.1016/j.physletb.2021.136196}{Phys. Lett. B 816 (2021), 136196} [\href{https://arxiv.org/abs/2011.01762}{arXiv:2011.01762}].


\bibitem{Moak} D.~S.~Moak, ``The $q$-analogue of the Laguerre polynomials'', 
\href{https://doi.org/10.1016/0022-247X(81)90048-2}{J. Math. Anal. Appl. 81 (1981), 20--47}.

\bibitem{Moak2} D.~S.~Moak, ``The $q$-analogue of Stirling's formula''', 
\href{https://doi.org/10.1216/RMJ-1984-14-2-403}{Rocky Mountain J. Math.
14 (1984), 403--414}.

\bibitem{Morozov:2018eiq} A.~Morozov, A.~Popolitov and S.~Shakirov,
``On (q,t)-deformation of Gaussian matrix model'',
\href{https://doi.org/10.1016/j.physletb.2018.08.006}{Phys. Lett.} \href{https://doi.org/10.1016/j.physletb.2018.08.006}{B 784 (2018), 342--344} [\href{https://arxiv.org/abs/1803.11401}{arXiv:1803.11401}].


\bibitem{MCI} K.A. Muttalib, Y. Chen and M. E. H. Ismail, 
``q-Random Matrix Ensembles'', \textit{Symbolic Computation, Number Theory, Special Functions, Physics and Combinatorics}, F. G.~Garvan and M. E. H. Ismail (eds), \href{https://doi.org/10.1007/978-1-4613-0257-5_12}
{Springer, Developments in Mathematics 4 (2001), 199--221} [\href{https://arxiv.org/abs/cond-mat/0112386}{arXiv:cond-mat/0112386}].

\bibitem{Nagar} D. K.~Nagar and A. K.~Gupta, ``Expectations of Functions of Complex Wishart Matrix'', 
\href{https://doi.org/10.1007/s10440-010-9599-x}{Acta Appl. Math.} \href{https://doi.org/10.1007/s10440-010-9599-x}{113 (2011), 265–-288} [\href{https://arxiv.org/abs/1507.05524}{arXiv:1507.05524}].

\bibitem{Novaes:2015} M.~Novaes, ``Statistics of time delay and scattering correlation functions in chaotic systems I. Random Matrix Theory'', 
\href{https://doi.org/10.1063/1.4922746}{J. Math. Phys. 56 (2015), 062110} [\href{https://arxiv.org/abs/1507.05524}{arXiv:1507.05524}].

\bibitem{Parzen} E. Parzen, ``On estimation of a probability density function and mode'', \href{https://doi.org/10.1214/aoms/1177704472}{Ann. Math. Statist. 33 (1962),} \href{https://doi.org/10.1214/aoms/1177704472}{1065--1076}.

\bibitem{CDML} E. Pauwels, M. Putinar, and J. B. Lasserre, ``Data analysis
from empirical moments and the Christoffel function'', \href{https://doi.org/10.1007/s10208-020-09451-2}%
{Found. Comput. Math. 21 (2021), 243--273} [\href{https://arxiv.org/abs/1810.08480}%
{arXiv:1810.08480}].

\bibitem{Roch} S.~Roch, ``Das Reduktionsverfahren far Produktsummen von
Toeplitzoperatoren mit stackweise stetigen Symbolen'', Wiss. Z. Tech.
Hochsch. Karl-Marx-Stadt 26 (1984), 265--273.

\bibitem{Rosen} H.~Rosengren, ``Multivariable Christoffel-Darboux kernels
and characteristic polynomials of random hermitian matrices'', \href{http://dx.doi.org/10.3842/SIGMA.2006.085}%
{SIGMA 2 (2006), 085} [\href{https://arxiv.org/abs/math/0606391}{%
arXiv:math/0606391}].

\bibitem{Santilli:2018ilo} L.~Santilli and M.~Tierz, ``Complex
(super)-matrix models with external sources and $q$-ensembles of
Chern-Simons and ABJ(M) type'', \href{https://doi.org/10.1088/1751-8121/abb6b0}%
{J. Phys. A 53 (2020), 425201} [\href{https://arxiv.org/abs/1805.10543}{%
arXiv:1805.10543}].

\bibitem{Santilli:2020ueh} L.~Santilli and M.~Tierz, ``Exact equivalences
and phase discrepancies between random matrix ensembles'', \href{https://doi.org/10.1088/1742-5468/aba594}%
{J. Stat. Mech. 2008 (2020), 083107} [\href{https://arxiv.org/abs/2003.10475}%
{arXiv:2003.10475}].

\bibitem{Santilli:2020snh} L.~Santilli and M.~Tierz, ``Exact results and Schur expansions in quiver Chern-Simons-matter theories'', \href{https://doi.org/10.1007/JHEP10(2020)022}%
{JHEP 10 (2020), 022} [\href{https://arxiv.org/abs/2008.00465}{%
arXiv:2008.00465}].

\bibitem{ST} L.~Santilli and M.~Tierz, ``Riemannian Gaussian distributions, random matrix ensembles and diffusion kernels'',
[\href{https://arxiv.org/abs/2011.13680}{arXiv:2011.13680}].

\bibitem{Selberg} A.~Selberg, ``Bemerkninger om et multipelt integral'',
Norsk. Mat. Tidsskr. 24 (1944), 71.

\bibitem{SimonCD} B.~Simon, ``The Christoffel-Darboux kernel'', \emph{%
Perspectives in partial differential equations, harmonic analysis and
applications,} D. Mitrea and M. Mitrea (eds), Proc. Sympos. Pure Math. 79 (2008), [\href{https://arxiv.org/abs/0806.1528}{arXiv:0806.1528}].


\bibitem{Szego} G.~Szeg\H{o}, \textit{Orthogonal polynomials}, AMS 4th ed.,
Colloquium Series 23 (1981).

\bibitem{Tierz:2002jj} M.~Tierz, ``Soft matrix models and Chern-Simons partition functions'', 
\href{https://doi.org/10.1142/S0217732304014100}{Mod. Phys. Lett. A 19 (2004),} \href{https://doi.org/10.1142/S0217732304014100}{1365--1378} 
[\href{https://arxiv.org/abs/hep-th/0212128}{arXiv:hep-th/0212128}].

\bibitem{Tierz:2017nvl} M.~Tierz, ``Polynomial solution of quantum Grassmann
matrices'', \href{https://doi.org/10.1088/1742-5468/aa6c84}{J. Stat. Mech.
1705 (2017), 053203} [\href{https://arxiv.org/abs/1703.02454}{%
arXiv:1703.02454}].


\bibitem{WB} G.~Walter and J.~Blum, ``Probability density estimation using delta sequences'', \href{https://doi.org/10.1214/aos/1176344617}{Ann. Statist. 7 (1979),} \href{https://doi.org/10.1214/aos/1176344617}{328--340}.


\bibitem{Winn} B.~Winn, ``Derivative moments for characteristic polynomials
from the CUE'', \href{https://doi.org/10.1007/s00220-012-1512-1}{Commun. Math. Phys. 315}  \href{https://doi.org/10.1007/s00220-012-1512-1}{(2012), 531--562} [\href{https://arxiv.org/abs/1109.0227}{arXiv:1109.0227}].

\bibitem{Wong} E. Wong, ``The construction of a class of
stationary Markoff processes'', \href{https://people.eecs.berkeley.edu/~wong/wong_pubs/wong10.pdf}{Stochastic Processes in} \href{https://people.eecs.berkeley.edu/~wong/wong_pubs/wong10.pdf}{Mathematical Physics and Engineering 17 (1964), 264--276}.


\end{thebibliography}
\end{document}